\documentclass[12pt, a4paper]{article}
\usepackage{natbib}
\setcitestyle{authoryear,open={(},close={)},citesep={,}}
\usepackage[]{hyperref}
\usepackage{xcolor}
\hypersetup{colorlinks = true, citecolor = blue, linkcolor=blue, linkbordercolor = {white}, urlcolor=blue}
\usepackage{amsmath}
\usepackage{mathrsfs}
\usepackage{amsthm}
\usepackage{amssymb}
\usepackage{mathtools}
\usepackage{booktabs}
\usepackage{bm}
\usepackage{graphicx}
\usepackage{caption}
\usepackage{subcaption}
\usepackage{multirow}
\usepackage{dsfont}
\usepackage{cleveref}
\usepackage[normalem]{ulem}
\usepackage{authblk}
\usepackage{enumitem}

\usepackage
[
        a4paper,
        left=2.5cm,
        right=2.5cm,
        top=2.5cm,
        bottom=2.5cm,
]
{geometry}

\newtheorem{theorem}{Theorem}[section]
\newtheorem{lemma}{Lemma}[section]
\newtheorem{corollary}{Corollary}[section]

\newtheorem{remark}{Remark}[section]

\newtheorem{assumption}{Assumption}[section]

\DeclareMathOperator*{\argmin}{\arg\min}

\newcommand{\Exp}{{\rm Exp}}
\newcommand{\Log}{{\rm Log}}

\numberwithin{equation}{section}

\begin{document}

\title{Functional Principal Component Analysis for Manifold-Indexed Data}

\author[1]{Chang Jun Im}
\author[2]{Jeong Min Jeon\thanks{Corresponding author.}}
\affil[1]{The Institute for Data Innovation in Science, Seoul National University, South Korea}
\affil[2]{Department of Statistics and School of Transdisciplinary Innovations, Seoul National University, South Korea}

\maketitle

\begin{abstract}
Functional principal component analysis (FPCA) is a fundamental tool for dimension reduction and covariance structure analysis in functional data analysis. Classical FPCA theory, however, has mainly been developed for functions observed over Euclidean domains, most commonly compact intervals. In this paper, we study FPCA for discretely observed scalar-valued functional data indexed by a compact $d$-dimensional Riemannian manifold $\mathcal{M}$; that is, each subject is modeled as a random function $X_i:\mathcal{M}\to\mathbb{R}$. This setting is distinct from manifold-valued functional data, where the function values themselves lie on a Riemannian manifold.

We develop intrinsic kernel estimators for the mean and covariance functions using geodesic distances and a Riemannian volume-density correction. Building on the manifold smoothing approach of \cite{Pelletier2006} and the sparse-to-dense weighting framework of \cite{ZhangWang2016}, we introduce a unified estimation procedure that accommodates general subject-specific sampling frequencies and includes both equal-weight-per-observation and equal-weight-per-subject schemes as special cases. The uniform stochastic analysis is formulated through VC-type empirical-process conditions for intrinsic kernel classes. This formulation can accommodate non-Lipschitz kernels, provided the induced intrinsic kernel and clustered empirical-process classes satisfy the stated entropy and compatibility conditions.

We establish uniform convergence rates for the mean and covariance estimators and derive Hilbert--Schmidt and operator-norm error bounds for the estimated covariance operator. Spectral perturbation arguments then yield convergence rates for the estimated eigenvalues and eigenfunctions. The rates reveal that the sparse-to-dense transition is governed by the intrinsic dimension of the indexing manifold, reducing to the classical one-dimensional boundary when $d=1$. Simulation studies on $\mathbb S^1$ and $\mathbb S^2$ illustrate that intrinsic smoothing can improve mean and eigenfunction estimation when coordinate-based Euclidean smoothing ignores the topology or volume structure of the indexing manifold. A real-data analysis of SONICOM head-related transfer function records, indexed by sound-source directions on $\mathbb S^2$, further illustrates the method on genuinely sphere-indexed functional data and shows modest but consistent improvements in hold-out reconstruction over a coordinate-based baseline.
\end{abstract}

\small
\begin{quotation}
\noindent{Keywords: Functional principal component analysis, manifold-indexed functional data, Riemannian manifolds, intrinsic kernel smoothing, covariance operator, sparse and dense functional data, phase transition}
\end{quotation}
\normalsize

\section{Introduction} \label{sec:intro}

Functional data analysis (FDA) provides a framework for analyzing samples of random functions and has become a standard methodology for longitudinal, spatial, and time-varying data. Among its central tools, functional principal component analysis (FPCA) plays a fundamental role in dimension reduction, trajectory reconstruction, covariance modeling, and subsequent functional regression. Classical FPCA theory is typically formulated for real-valued stochastic processes indexed by a compact Euclidean interval, where a subject is represented as a random function $X_i:[0,1]\to\mathbb{R}$; see, for example, \cite{RamsaySilverman2005}, \cite{YaoEtAl2005}, \cite{HallEtAl2006}, and \cite{LiHsing2010}.

In many modern applications, however, the indexing domain of the functional observations is not naturally Euclidean. Examples include signals observed over spherical domains, spatial fields on curved surfaces, cortical measurements over anatomical manifolds, and measurements indexed by directional or other non-Euclidean coordinate spaces. In such settings, the domain carries intrinsic geometric information through a Riemannian metric, geodesic distance, and volume measure. Treating the domain as Euclidean or relying on arbitrary coordinate charts may distort local neighborhoods and ignore curvature-induced changes in volume. This motivates the development of FPCA methods for random functions indexed by a Riemannian manifold.

It is important to distinguish the problem studied in this paper from manifold-valued functional data. In manifold-valued functional data, the domain is typically Euclidean time, but the function values lie on a manifold, as in \cite{DaiMuller2018}, \cite{LinYao2019}, and \cite{ShaoEtAl2022}. By contrast, the present paper considers scalar-valued functional data indexed by a Riemannian manifold:
\begin{align*}
    X_i:\mathcal{M}\to\mathbb{R}.
\end{align*}
Thus the covariance operator remains an integral operator on the scalar Hilbert space $L^2(\mathcal{M})$, and the manifold enters through the geometry of the index set, the observation design, intrinsic smoothing, and the resulting convergence rates.

A second challenge concerns the sampling frequency of each subject. In longitudinal and functional data analysis, subjects are often observed at irregular and subject-specific locations, and the number of measurements per subject may range from very small to very large. Sparse functional data require pooling information across subjects, whereas dense functional data allow each individual trajectory to be estimated more accurately. The distinction between sparse and dense sampling is not merely computational; it leads to different asymptotic regimes. Related phase-transition phenomena for discretely sampled functional data were studied by \cite{CaiYuan2011}. \cite{ZhangWang2016} provided a unified theory for mean and covariance estimation that accommodates arbitrary subject-specific sampling frequencies and general weighting schemes, including equal weight per observation and equal weight per subject. Their theory also clarified the non-dense, dense, and ultra-dense regimes through the relative order of the within-subject sampling frequency and the number of subjects.

The goal of this paper is to extend this sparse-to-dense perspective from Euclidean time domains to compact Riemannian indexing domains. Let $\mathcal{M}$ be a compact $d$-dimensional Riemannian manifold without boundary. For each subject $i=1,\ldots,n$, we observe noisy measurements
\begin{align*}
    Y_{ij}=X_i(s_{ij})+\epsilon_{ij}, \quad s_{ij}\in\mathcal{M}, \quad j=1,\ldots,m_i,
\end{align*}
where $m_i$ may vary across subjects. We estimate the mean function and covariance function of $X_i$ using intrinsic kernel smoothers based on the geodesic distance on $\mathcal{M}$. Following \cite{Pelletier2006}, the proposed kernel weights incorporate the Riemannian volume density correction, which compensates for the local distortion of volume under the exponential map. This allows the estimators to preserve the intrinsic geometry of the indexing manifold while retaining the familiar bias structure of Euclidean kernel smoothing.

This paper makes three main contributions. First, it provides an intrinsic FPCA framework for scalar-valued functional data observed over compact Riemannian manifolds. Rather than embedding the indexing domain into an ambient Euclidean space, the proposed construction works directly with the geodesic distance and the Riemannian volume measure. Second, it extends the sparse-to-dense weighting framework of \cite{ZhangWang2016} to manifold-indexed functional data by allowing general subject-specific sampling frequencies in the estimation of both the mean and covariance functions. The uniform stochastic analysis is carried out through VC-type empirical-process conditions for the associated intrinsic kernel classes, instead of relying on a Lipschitz-kernel mesh approximation. This entropy-based formulation is in line with empirical-process approaches to kernel-type estimators; see, for example, \cite{NolanPollard1987}, \cite{EinmahlMason2005}, \cite{vanDerVaartWellner1996}, and \cite{GineNickl2016}. The resulting rates make explicit how the number of subjects, the within-subject sampling frequencies, and the intrinsic dimension of the indexing manifold jointly determine the estimation accuracy. Third, the covariance smoothing bounds are transferred to the level of the covariance operator, yielding convergence rates for the empirical eigenvalues and eigenfunctions through standard perturbation arguments.

The theoretical results reveal that the sparse-to-dense phase transition persists in the Riemannian setting but is modified by the intrinsic dimension of $\mathcal{M}$. In the Euclidean one-dimensional case, the classical theory is recovered. For higher-dimensional manifolds, the effective smoothing dimension changes the boundary between the non-dense and dense regimes. Thus, the geometry of the indexing domain affects not only the construction of the estimator but also the asymptotic behavior of FPCA.

We complement the theory with simulation studies on $\mathbb S^1$ and $\mathbb S^2$ and a real-data analysis of SONICOM HRTF records indexed by sound-source directions on $\mathbb S^2$. The simulations illustrate that intrinsic smoothing can improve mean and eigenfunction estimation when coordinate-based Euclidean smoothing ignores the topology or volume structure of the indexing manifold. The real-data analysis illustrates the method on genuinely sphere-indexed scalar functional data and shows modest but consistent improvements in hold-out reconstruction over a coordinate-based baseline.

The remainder of the paper is organized as follows. Section~\ref{sec:method} introduces the manifold-indexed functional data model and the intrinsic estimators for the mean function, covariance function, covariance operator, and FPCA components. Section~\ref{sec:theory} presents the main asymptotic results, including uniform convergence, Hilbert--Schmidt convergence, and spectral perturbation bounds. Section~\ref{sec:simulation} investigates the finite-sample performance of the proposed method through simulation studies on representative manifolds. Section~\ref{sec:realdata} illustrates the methodology using real data, and Section~\ref{sec:discussion} concludes with a discussion of extensions and limitations.

\section{Problem Setting and Estimators} \label{sec:method}

In this section, we introduce the statistical framework for functional principal component analysis (FPCA) over a Riemannian indexing domain. Throughout the paper, the functional observations are scalar-valued random functions indexed by a compact Riemannian manifold. Thus, the manifold structure enters through the domain of the functions, the observation design, and the intrinsic smoothing procedure, rather than through the range of the random variables.

\subsection{Problem setting}

Let $(\mathcal{M},g)$ be a compact, connected, $d$-dimensional smooth Riemannian manifold without boundary. We denote by $d_{\mathcal{M}}(s,u)$ the geodesic distance between $s,u\in\mathcal{M}$ and by $\mathrm{d}v_g$ the Riemannian volume measure induced by the metric $g$. For $s\in\mathcal{M}$ and $r>0$, write
\begin{align*}
    B_{\mathcal{M}}(s,r):=\{u\in\mathcal{M}:d_{\mathcal{M}}(s,u)<r\}
\end{align*}
for the open geodesic ball in $\mathcal{M}$ centered at $s$ with radius $r$. The total volume of $\mathcal{M}$ is denoted by
\begin{align*}
    \operatorname{Vol}(\mathcal{M}):=\int_{\mathcal{M}}\mathrm{d}v_g(s).
\end{align*}
Since $\mathcal{M}$ is compact and without boundary, its global injectivity radius is strictly positive. We denote it by $\iota_{\mathcal{M}}>0$. Throughout the paper, we fix a geometric localization radius
\begin{align*}
    h_0\in(0,\iota_{\mathcal{M}}).
\end{align*}
All kernel bandwidths are taken smaller than $h_0$ for all sufficiently large $n$. This ensures that all local kernel neighborhoods lie within normal coordinate neighborhoods.

We write $L^2(\mathcal{M})=L^2(\mathcal{M},\mathrm{d}v_g)$ for the Hilbert space of square-integrable real-valued functions on $\mathcal{M}$, equipped with inner product
\begin{align*}
    \langle f_1,f_2\rangle_{L^2}:=\int_{\mathcal{M}}f_1(s)f_2(s)\,\mathrm{d}v_g(s), \quad
    \|f_1\|_{L^2}^2:=\langle f_1,f_1\rangle_{L^2}, \quad f_1,f_2\in L^2(\mathcal{M}).
\end{align*}
Formal definitions of the geometric quantities used in the paper, including the exponential map, injectivity radius, geodesic balls, and volume density function, are provided in Appendix~\ref{app:preliminaries}.

We consider a real-valued stochastic process whose sample paths belong to $L^2(\mathcal{M})$ almost surely, denoted by
\begin{align*}
    X=\{X(s):s\in\mathcal{M}\}, \quad X:\mathcal{M}\to\mathbb{R}.
\end{align*}
The mean function and covariance function of $X$ are defined by
\begin{align*}
    \mu(s):=\mathbb{E}\{X(s)\}, \quad C(s,t):=\operatorname{Cov}\{X(s),X(t)\}, \quad s,t\in\mathcal{M}.
\end{align*}
Equivalently, we write
\begin{align}
\label{eq:decomposition}
    X(s)=\mu(s)+U(s), \quad s\in\mathcal{M},
\end{align}
where $U(\cdot)$ is a mean-zero stochastic process satisfying
\begin{align*}
    \mathbb{E}\{U(s)\}=0, \quad \operatorname{Cov}\{U(s),U(t)\}=C(s,t), \quad s,t\in\mathcal{M}.
\end{align*}

We observe $n$ independent subjects. Let $X_1,\ldots,X_n$ be independent and identically distributed copies of $X$, and write
\begin{align*}
    X_i(s)=\mu(s)+U_i(s), \quad s\in\mathcal{M},
\end{align*}
where $U_1,\ldots,U_n$ are independent copies of $U$. For subject $i$, measurements are taken at $m_i$ possibly irregular locations
\begin{align*}
    s_{ij}\in\mathcal{M}, \quad j=1,\ldots,m_i.
\end{align*}
The observed data are
\begin{align}
\label{eq:model}
    Y_{ij} = X_i(s_{ij})+\epsilon_{ij} = \mu(s_{ij})+U_i(s_{ij})+\epsilon_{ij}, \quad i=1,\ldots,n, \quad j=1,\ldots,m_i.
\end{align}
Here $Y_{ij}\in\mathbb{R}$ is a scalar observation, whereas $s_{ij}$ lies on the Riemannian manifold $\mathcal{M}$.

The observation locations are assumed to be generated from a common design distribution on $\mathcal{M}$ with density $f$ with respect to $\mathrm{d}v_g$. Conditional on the subject-specific sample sizes $m_1,\ldots,m_n$, the locations $\{s_{ij}:1\le i\le n,\,1\le j\le m_i\}$ are independent and identically distributed with density $f$, and are independent of the stochastic processes $\{X_i\}_{i=1}^n$ and the measurement errors $\{\epsilon_{ij}\}$. The measurement errors are assumed to be independent and identically distributed random variables satisfying
\begin{align*}
    \mathbb{E}(\epsilon_{ij})=0, \quad \operatorname{Var}(\epsilon_{ij})=\sigma^2<\infty, \quad i=1,\ldots,n, \quad j=1,\ldots,m_i,
\end{align*}
and are independent of both the latent processes and the observation locations.

The subject-specific sampling frequencies $m_i$ are allowed to vary with $i$ and with the total number of subjects $n$. As in the sparse-to-dense framework of \cite{ZhangWang2016}, our asymptotic theory treats the collection $\{m_i\}_{i=1}^n$ as fixed when deriving conditional convergence rates. If the $m_i$'s are random, the results may be interpreted conditionally on their realized values. This formulation accommodates sparse, dense, and intermediate sampling regimes within a single framework.

\subsection[Intrinsic kernel estimators on M]{Intrinsic kernel estimators on $\mathcal{M}$}

We next construct intrinsic estimators for the mean and covariance functions. The key geometric ingredient is a kernel smoother defined with respect to the geodesic distance and the Riemannian volume measure. Let $K:[0,\infty)\to[0,\infty)$ be a nonnegative radial kernel with compact support on $[0,1]$, normalized so that
\begin{align*}
    \int_{\mathbb{R}^d}K(\|z\|)\,\mathrm{d}z=1.
\end{align*}
Throughout the paper, kernel bandwidths are taken smaller than $h_0$. Since $K$ is supported on $[0,1]$, for $h\in(0,h_0)$,
\begin{align*}
    K\!\left(\frac{d_{\mathcal{M}}(s,u)}{h}\right)\ne0 \quad\Longrightarrow\quad u\in B_{\mathcal{M}}(s,h), \quad s,u\in\mathcal{M}.
\end{align*}
Thus, all nonzero kernel weights are evaluated inside the normal neighborhood of $s$.

For $s\in\mathcal{M}$ and a bandwidth $h\in(0,h_0)$, define the localized Riemannian kernel weight by
\begin{align}
\label{eq:kernel_weight}
    \mathcal{L}_{s,h}(u)
    :=
    \begin{cases}
        h^{-d}\,\theta_s(u)^{-1}\,K\!\left(d_{\mathcal{M}}(s,u)/h\right), & u\in B_{\mathcal{M}}(s,h_0), \\
        0, & u\notin B_{\mathcal{M}}(s,h_0),
    \end{cases}
    \quad u\in\mathcal{M}.
\end{align}
Since $K$ is supported on $[0,1]$ and $h<h_0<\iota_{\mathcal{M}}$, the second case does not affect any nonzero kernel weight. Here $\theta_s(u)$ denotes the volume density function associated with the exponential map at $s$. More precisely, the Riemannian volume element can be written in normal coordinates as
\begin{align*}
    \mathrm{d}v_g(u)=\theta_s(u)\,\mathrm{d}v, \quad u\in B_{\mathcal{M}}(s,\iota_{\mathcal{M}}),
\end{align*}
where $v:=\Log_s(u)$ and $\mathrm{d}v$ denotes the Lebesgue measure on the tangent space $T_s\mathcal{M}$ induced by the Riemannian metric at $s$. Thus, the factor $\theta_s(\cdot)^{-1}$ corrects for the local distortion of volume induced by the exponential map, while the scaling factor $h^{-d}$ gives the kernel the usual $d$-dimensional local normalization. Under the change of variables $u=\Exp_s(hz)$, this correction reduces the leading kernel integral to its Euclidean counterpart.

To accommodate heterogeneous sampling frequencies across subjects, we use a general weighting scheme. Let $w_i\ge0$ be the weight assigned to each observation from subject $i$, normalized by
\begin{align*}
    \sum_{i=1}^n m_iw_i=1.
\end{align*}
For a mean-specific bandwidth $h_\mu\in(0,h_0)$, define the local design normalizer
\begin{align*}
    \hat{f}_\mu(s) := \sum_{i=1}^n w_i\sum_{j=1}^{m_i} \mathcal{L}_{s,h_\mu}(s_{ij}), \quad s\in\mathcal{M}.
\end{align*}
The intrinsic Nadaraya--Watson estimator of the mean function is
\begin{align}
\label{eq:mean_est}
    \hat{\mu}(s) := \frac{\sum_{i=1}^n w_i\sum_{j=1}^{m_i} \mathcal{L}_{s,h_\mu}(s_{ij})Y_{ij}}{\hat{f}_\mu(s)}, \quad s\in\mathcal{M}.
\end{align}
The denominator $\hat{f}_\mu(s)$ estimates the sampling density $f(s)$ up to the weighting normalization and ensures that the estimator adapts to nonuniform random designs on $\mathcal{M}$.

For the covariance function, we use within-subject ordered pairs of distinct observations. Define
\begin{align*}
    \mathcal{I}_C:=\{i:m_i\ge2\}, \quad n_C:=|\mathcal{I}_C|.
\end{align*}
Subjects with fewer than two observations contribute to mean estimation but not to covariance estimation. Let $v_i\ge0$ be the weight assigned to each ordered pair from subject $i\in\mathcal{I}_C$, normalized by
\begin{align*}
    \sum_{i\in\mathcal{I}_C} m_i(m_i-1)v_i = 1.
\end{align*}
For notational convenience, set $v_i=0$ for $i\notin\mathcal{I}_C$. The diagonal pairs are excluded in order to avoid contamination by the measurement error variance $\sigma^2$. Define the empirical residuals
\begin{align}
\label{eq:residual}
    R_{ij}:=Y_{ij}-\hat{\mu}(s_{ij}), \quad i=1,\ldots,n, \quad j=1,\ldots,m_i.
\end{align}
For a covariance-specific bandwidth $h_C\in(0,h_0)$, let
\begin{align*}
    \hat{f}_C(s,t)
    :=
    \sum_{i\in\mathcal{I}_C} v_i
    \sum_{1\le j\ne k\le m_i}
    \mathcal{L}_{s,h_C}(s_{ij})\,\mathcal{L}_{t,h_C}(s_{ik}),
    \quad (s,t)\in\mathcal{M}\times\mathcal{M}.
\end{align*}
The covariance estimator is defined by
\begin{align}
\label{eq:cov_est}
    \hat{C}(s,t)
    :=
    \frac{
        \displaystyle\sum_{i\in\mathcal{I}_C} v_i
        \sum_{1\le j\ne k\le m_i}
        \mathcal{L}_{s,h_C}(s_{ij})\,\mathcal{L}_{t,h_C}(s_{ik})\,R_{ij}R_{ik}
    }{
        \hat{f}_C(s,t)
    },
    \quad (s,t)\in\mathcal{M}\times\mathcal{M}.
\end{align}
The use of residual products $R_{ij}R_{ik}$ parallels the covariance smoothing approach commonly used in sparse FPCA. Excluding $j=k$ removes the leading diagonal contribution of the measurement error variance, so that $\hat{C}$ targets the latent covariance function $C$ rather than the covariance of the noisy observations.

The estimators $\hat{\mu}(\cdot)$ and $\hat{C}(\cdot,\cdot)$ are understood with the following finite-sample convention. On the event that the relevant denominator is positive, the estimator is defined by the displayed ratio; on the complementary event, the estimator is defined arbitrarily, for instance as zero. This convention does not affect any of the asymptotic results, because Lemma~\ref{lem:design_normalizers} shows that the denominators are uniformly bounded away from zero with probability tending to one.

The above formulation includes the two standard weighting schemes as special cases. The equal-weight-per-observation scheme, abbreviated as OBS, is obtained by taking
\begin{align*}
    w_i = \frac{1}{\sum_{\ell=1}^n m_\ell}, \quad
    v_i = \begin{cases}
        \dfrac{1}{\sum_{\ell\in\mathcal{I}_C} m_\ell(m_\ell-1)}, & i\in\mathcal{I}_C, \\
        0, & i\notin\mathcal{I}_C,
    \end{cases}
    \quad i=1,\ldots,n.
\end{align*}
Under this scheme, each observed measurement, or each observed within-subject pair for covariance estimation, receives equal weight.

The equal-weight-per-subject scheme, abbreviated as SUBJ, is obtained by taking
\begin{align*}
    w_i = \frac{1}{nm_i}, \quad
    v_i = \begin{cases}
        \dfrac{1}{n_C\,m_i(m_i-1)}, & i\in\mathcal{I}_C, \\
        0, & i\notin\mathcal{I}_C,
    \end{cases}
    \quad i=1,\ldots,n.
\end{align*}
Under this scheme, each subject contributes the same total weight to mean estimation, and each subject with at least two observations contributes the same total weight to covariance estimation. If $m_i\ge2$ for all $i$, then $n_C=n$, and this reduces to the usual SUBJ covariance weighting scheme.

The general weighted formulation allows $m_i$, $w_i$, and $v_i$ to depend on $n$, and the asymptotic theory below is developed conditionally on these quantities.

\subsection{Measurement error variance}

Although the covariance estimator in \eqref{eq:cov_est} excludes diagonal pairs, the measurement error variance $\sigma^2$ is needed for score estimation in sparse designs. Let
\begin{align*}
    V(s) := C(s,s)+\sigma^2, \quad s\in\mathcal{M},
\end{align*}
be the total variance function of the noisy observations. Since
\begin{align*}
    \mathbb{E}\{(Y_{ij}-\mu(s_{ij}))^2\mid s_{ij}=s\}=C(s,s)+\sigma^2, \quad s\in\mathcal{M},
\end{align*}
we estimate $V$ by smoothing the squared residuals. For a bandwidth $h_V\in(0,h_0)$, define
\begin{align*}
    \hat{f}_V(s) := \sum_{i=1}^n w_i\sum_{j=1}^{m_i} \mathcal{L}_{s,h_V}(s_{ij}), \quad s\in\mathcal{M},
\end{align*}
and
\begin{align}
\label{eq:total_var_est}
    \hat{V}(s) := \frac{\sum_{i=1}^n w_i\sum_{j=1}^{m_i} \mathcal{L}_{s,h_V}(s_{ij})\,R_{ij}^2}{\hat{f}_V(s)}, \quad s\in\mathcal{M}.
\end{align}
A natural estimator of $\sigma^2$ is then
\begin{align}
\label{eq:sigma_est}
    \hat{\sigma}^2 := \frac{1}{\operatorname{Vol}(\mathcal{M})} \int_{\mathcal{M}} \left\{\hat{V}(s)-\hat{C}(s,s)\right\}_{\!+} \,\mathrm{d}v_g(s),
\end{align}
where $a_+=\max(a,0)$ for $a\in\mathbb{R}$. The truncation at zero enforces nonnegativity in finite samples and may introduce finite-sample bias. This estimator is used only for score prediction and trajectory reconstruction. No consistency theorem for $\hat{\sigma}^2$ is claimed in this paper. The main asymptotic results below concern the uniform estimation of $\mu$ and $C$, and the resulting convergence of the covariance operator and its spectral components.

\subsection{Covariance operator and spectral decomposition}

Since $C\in C^2(\mathcal{M}\times\mathcal{M})$ under Assumption~\ref{ass:process} and $\mathcal{M}$ is compact, $C$ is square-integrable on $\mathcal{M}\times\mathcal{M}$. The covariance function induces a Hilbert--Schmidt integral operator $\mathcal{C}:L^2(\mathcal{M})\to L^2(\mathcal{M})$ defined by
\begin{align}
\label{eq:operator_def}
    (\mathcal{C}\psi)(s)
    :=
    \int_{\mathcal{M}} C(s,t)\psi(t)\,\mathrm{d}v_g(t),
    \quad \psi\in L^2(\mathcal{M}).
\end{align}
The operator $\mathcal{C}$ is compact, self-adjoint, and Hilbert--Schmidt. Hence it admits the spectral representation
\begin{align*}
    C(s,t) = \sum_{k=1}^{\infty}\lambda_k\,\phi_k(s)\phi_k(t), \quad (s,t)\in\mathcal{M}\times\mathcal{M},
\end{align*}
where the equality is understood in $L^2(\mathcal{M}\times\mathcal{M})$, $\lambda_1\ge\lambda_2\ge\cdots\ge0$ are the eigenvalues, and $\{\phi_k\}_{k\ge1}$ is an orthonormal system in $L^2(\mathcal{M})$. For each nonzero eigenvalue $\lambda_k$, the corresponding eigenfunction has a continuous representative, still denoted by $\phi_k$, satisfying
\begin{align*}
    \int_{\mathcal{M}} C(s,t)\phi_k(t)\,\mathrm{d}v_g(t) = \lambda_k\,\phi_k(s), \quad s\in\mathcal{M}.
\end{align*}

Because \eqref{eq:cov_est} sums over all ordered pairs $j\ne k$ and uses the same bandwidth in the two covariance coordinates, the covariance estimator is symmetric by construction:
\begin{align*}
    \hat{C}(s,t) = \hat{C}(t,s), \quad (s,t)\in\mathcal{M}\times\mathcal{M}.
\end{align*}
Indeed, after interchanging $s$ and $t$, the ordered pair $(j,k)$ is matched with the ordered pair $(k,j)$. For theoretical analysis we may therefore set
\begin{align*}
    \hat{C}_{\mathrm{sym}}(s,t) := \hat{C}(s,t), \quad (s,t)\in\mathcal{M}\times\mathcal{M}.
\end{align*}
In numerical implementations based on an asymmetric discretization, one may instead use the symmetrized version
\begin{align*}
    \hat{C}_{\mathrm{sym}}(s,t) := \frac{\hat{C}(s,t)+\hat{C}(t,s)}{2}, \quad (s,t)\in\mathcal{M}\times\mathcal{M},
\end{align*}
without changing any of the uniform convergence rates. The corresponding empirical integral operator $\hat{\mathcal{C}}:L^2(\mathcal{M})\to L^2(\mathcal{M})$ is defined by
\begin{align*}
    (\hat{\mathcal{C}}\psi)(s)
    :=
    \int_{\mathcal{M}} \hat{C}_{\mathrm{sym}}(s,t)\psi(t)\,\mathrm{d}v_g(t),
    \quad \psi\in L^2(\mathcal{M}).
\end{align*}
Since $\hat{C}_{\mathrm{sym}}$ is symmetric and square-integrable on the compact product manifold $\mathcal{M}\times\mathcal{M}$, $\hat{\mathcal{C}}$ is a compact self-adjoint Hilbert--Schmidt operator. We do not require $\hat{\mathcal{C}}$ to be positive semidefinite in finite samples. Although the population covariance operator $\mathcal{C}$ is positive semidefinite, kernel smoothing and finite-sample residualization may produce small negative empirical eigenvalues. Therefore, the empirical eigenvalues below are understood as the real eigenvalues of the compact self-adjoint operator $\hat{\mathcal{C}}$, ordered in nonincreasing order. In applications, negative empirical eigenvalues may be truncated when computing variance-explained quantities, as in \eqref{eq:fve}.

The empirical eigenvalues and eigenfunctions are defined as solutions to
\begin{align}
\label{eq:spectral_decomp}
    \int_{\mathcal{M}} \hat{C}_{\mathrm{sym}}(s,t)\hat{\phi}_k(t)\,\mathrm{d}v_g(t)
    = \hat{\lambda}_k\,\hat{\phi}_k(s), \quad s\in\mathcal{M},
\end{align}
with the normalization
\begin{align*}
    \int_{\mathcal{M}} \hat{\phi}_k(s)\hat{\phi}_\ell(s)\,\mathrm{d}v_g(s) = \delta_{k\ell}, \quad k,\ell\ge1.
\end{align*}
The sign of each eigenfunction is fixed, when necessary, by requiring $\langle \hat{\phi}_k,\phi_k\rangle_{L^2}\ge0$ for the relevant $k$. Because
\begin{align*}
    \|\hat{C}_{\mathrm{sym}}-C\|_\infty \le \|\hat{C}-C\|_\infty,
\end{align*}
the uniform convergence rate for $\hat{C}$ immediately transfers to the symmetrized kernel used for empirical spectral analysis.

\subsection{FPC scores and trajectory reconstruction}

For the latent process $X_i$, the $k$-th functional principal component score is
\begin{align}
\label{eq:true_score}
    \xi_{ik} := \int_{\mathcal{M}} \{X_i(s)-\mu(s)\}\phi_k(s)\,\mathrm{d}v_g(s).
\end{align}
Then the Karhunen--Lo\`eve expansion takes the form
\begin{align*}
    X_i(s) = \mu(s) + \sum_{k=1}^{\infty}\xi_{ik}\,\phi_k(s), \quad s\in\mathcal{M},
\end{align*}
with convergence in $L^2(\mathcal{M})$ under standard moment conditions.

In relatively dense designs, the scores may be estimated by numerical integration over the observed locations:
\begin{align}
\label{eq:dense_score}
    \hat{\xi}_{ik}^{\mathrm{INT}} := \sum_{j=1}^{m_i} \{Y_{ij}-\hat{\mu}(s_{ij})\}\hat{\phi}_k(s_{ij})\,q_{ij},
\end{align}
where $q_{ij}$ denotes a numerical quadrature weight associated with the local region around $s_{ij}$, such as a Voronoi cell weight with respect to the Riemannian volume measure. The integration-based score estimator is intended for dense designs whose observed locations admit quadrature weights approximating $\mathrm{d}v_g$; under nonuniform random designs, the weights should incorporate the design density or be constructed from a volume-consistent tessellation.

In sparse and irregular designs, direct numerical integration is unstable. In that case, we use a conditional-expectation score predictor analogous to PACE. Let
\begin{align*}
    \mathbf{Y}_i = (Y_{i1},\ldots,Y_{im_i})^\top, \quad \hat{\boldsymbol{\mu}}_i = \{\hat{\mu}(s_{i1}),\ldots,\hat{\mu}(s_{im_i})\}^\top,
\end{align*}
and
\begin{align*}
    \hat{\boldsymbol{\phi}}_{ik} = \{\hat{\phi}_k(s_{i1}),\ldots,\hat{\phi}_k(s_{im_i})\}^\top.
\end{align*}
Define the estimated covariance matrix of $\mathbf{Y}_i$ by
\begin{align*}
    \hat{\boldsymbol{\Sigma}}_i
    :=
    \left[
        \hat{C}_{\mathrm{sym}}(s_{ij},s_{i\ell})
        + \hat{\sigma}^2\,\mathbf{1}(j=\ell)
    \right]_{1\le j,\ell\le m_i}.
\end{align*}
In numerical implementations, when $\hat{\boldsymbol{\Sigma}}_i$ is singular or ill-conditioned, its inverse may be replaced by the Moore--Penrose pseudoinverse or by a ridge-regularized inverse; this stabilization is used only for score prediction and trajectory reconstruction and does not affect the main asymptotic theory for $\hat{\mu}$, $\hat{C}$, or the empirical spectral components. 

The PACE-type score estimator is
\begin{align}
\label{eq:sparse_score}
    \hat{\xi}_{ik}^{\mathrm{PACE}} := \hat{\lambda}_k\,\hat{\boldsymbol{\phi}}_{ik}^{\top}\,\hat{\boldsymbol{\Sigma}}_i^{-1}\,(\mathbf{Y}_i-\hat{\boldsymbol{\mu}}_i).
\end{align}
Under joint Gaussian assumptions on the scores and measurement errors, this expression corresponds to the conditional expectation predictor. Without Gaussianity, it can be interpreted as the best linear predictor based on the estimated covariance structure.

Given estimated scores $\hat{\xi}_{ik}$, either from \eqref{eq:dense_score} or \eqref{eq:sparse_score}, and an FPC truncation level $K_{\mathrm{FPC}}$, the individual trajectory is reconstructed by the truncated expansion
\begin{align}
\label{eq:reconstruction}
    \hat{X}_i(s) = \hat{\mu}(s) + \sum_{k=1}^{K_{\mathrm{FPC}}} \hat{\xi}_{ik}\,\hat{\phi}_k(s), \quad s\in\mathcal{M}.
\end{align}
The truncation level $K_{\mathrm{FPC}}$ may be selected by the fraction of variance explained,
\begin{align}
\label{eq:fve}
    \mathrm{FVE}(K_{\mathrm{FPC}})
    = \frac{\sum_{k=1}^{K_{\mathrm{FPC}}} \hat{\lambda}_k^+}{\sum_{k=1}^{K_{\max}} \hat{\lambda}_k^+},
\end{align}
where $a^+:=\max(a,0)$ for $a\in\mathbb{R}$, and $K_{\max}$ is a user-specified upper truncation level chosen large enough to include the leading empirical components. Alternatively, $K_{\mathrm{FPC}}$ may be selected by comparing $\sum_{k=1}^{K_{\mathrm{FPC}}}\hat{\lambda}_k$ with the estimated total variation
\begin{align*}
    \int_{\mathcal{M}} \hat{C}_{\mathrm{sym}}(s,s)\,\mathrm{d}v_g(s),
\end{align*}
when this quantity is numerically stable. In the theoretical analysis below, the primary focus is on the estimation of $\mu$, $C$, $\mathcal{C}$, and the spectral components $(\lambda_k,\phi_k)$; score prediction and trajectory reconstruction are treated as downstream procedures based on these estimated quantities.

\section{Asymptotic Theory}\label{sec:theory}

In this section, we establish the theoretical properties of the proposed estimators. We focus on uniform convergence rates for the mean and covariance estimators under the generalized weighting scheme. These uniform bounds imply Hilbert--Schmidt and operator-norm convergence of the empirical covariance operator, by compactness of the indexing manifold. We then apply standard spectral perturbation theory, in the spirit of \cite{DavisKahan1970}, to obtain convergence rates for the estimated eigenvalues and eigenfunctions. The detailed proofs, which rely on empirical process arguments for clustered observations over Riemannian manifolds, are relegated to the Appendix.

\subsection{Assumptions}

We impose the following assumptions for the uniform asymptotic analysis. The conditions are stated for the general weighting scheme introduced in Section~\ref{sec:method}; the OBS and SUBJ schemes are obtained as special cases. Throughout this section, the subject-specific sampling frequencies $m_1,\ldots,m_n$ may depend on $n$.

For a class of measurable functions $\mathcal{F}$ and a probability measure $Q$, let $N(\varepsilon,\mathcal{F},L^2(Q))$ denote the $\varepsilon$-covering number of $\mathcal{F}$ under the $L^2(Q)$ metric.

\begin{assumption}[Manifold]\label{ass:manifold}
The indexing space $(\mathcal{M},g)$ is a compact, connected, $d$-dimensional smooth Riemannian manifold without boundary.
\end{assumption}

This assumption ensures that the geodesic distance $d_{\mathcal{M}}$ is finite and that the global injectivity radius $\iota_{\mathcal{M}}$ is strictly positive. For formal definitions of the exponential map, pointwise injectivity radius, global injectivity radius, geodesic balls, and volume density function, see Appendix~\ref{app:preliminaries}. Compactness also allows the normal-coordinate expansions and volume-density bounds used in the proofs to hold uniformly over $s\in\mathcal{M}$. The no-boundary condition avoids boundary correction in intrinsic kernel smoothing.

\begin{assumption}[Sampling design]\label{ass:sampling}
Conditional on $m_1,\ldots,m_n$, the observation locations $\{s_{ij}:1\le i\le n,\,1\le j\le m_i\}$ are independent and identically distributed on $\mathcal{M}$ with density $f$ with respect to $\mathrm{d}v_g$. The design density satisfies
\begin{align*}
    0<c_f\le f(s)\le C_f<\infty, \quad s\in\mathcal{M},
\end{align*}
for some constants $c_f,C_f$, and $f\in C^2(\mathcal{M})$. The observation locations are independent of the latent processes $\{X_i\}_{i=1}^n$ and the measurement errors $\{\epsilon_{ij}\}$.
\end{assumption}

The lower bound on $f$ prevents the local random denominators from degenerating, while the upper bound controls local stochastic fluctuations. The smoothness condition $f\in C^2(\mathcal{M})$ is used to obtain the second-order bias expansion of the design normalizers under the volume-density corrected intrinsic kernel. This is the manifold analogue of the standard random-design condition in nonparametric functional data analysis.

\begin{assumption}[Process regularity and moments]\label{ass:process}
The latent process $X$ has sample paths in $L^2(\mathcal{M})$ almost surely. We assume that $X$ admits a separable measurable version, still denoted by $X$, for which pointwise evaluations $X(s)$ and $\sup_{s\in\mathcal{M}}|X(s)|$ are well defined. Its mean and covariance functions satisfy
\begin{align*}
    \mu\in C^2(\mathcal{M}), \quad C\in C^2(\mathcal{M}\times\mathcal{M}).
\end{align*}
Writing $X(s)=\mu(s)+U(s)$ for any $s\in\mathcal{M}$, we assume
\begin{align*}
    \mathbb{E}\{U(s)\}=0, \quad \operatorname{Cov}\{U(s),U(t)\}=C(s,t), \quad (s,t)\in\mathcal{M}\times\mathcal{M}.
\end{align*}
For some $\beta>2$,
\begin{align*}
    \mathbb{E}\!\left[\sup_{s\in\mathcal{M}}|U(s)|^{2\beta}\right]<\infty, \quad \mathbb{E}|\epsilon_{ij}|^{2\beta}<\infty.
\end{align*}
The measurement errors $\epsilon_{ij}$ are independent and identically distributed with
\begin{align*}
    \mathbb{E}(\epsilon_{ij})=0, \quad \operatorname{Var}(\epsilon_{ij})=\sigma^2<\infty,
\end{align*}
and are independent of the latent processes and the observation locations.
\end{assumption}

The smoothness assumptions on $\mu$ and $C$ are used only for the second-order intrinsic smoothing bias. The moment condition is imposed for the uniform stochastic bounds; it is the manifold-domain analogue of the moment assumptions used for uniform convergence in \cite{ZhangWang2016}. Since covariance smoothing involves products of residuals, the condition is stated with a $2\beta$-moment. This finite-moment formulation avoids imposing sub-Gaussian envelopes; the price is that the empirical-process bounds in the appendix are proved by truncation rather than by direct sub-Gaussian maximal inequalities.

\begin{assumption}[Kernel]\label{ass:kernel}
Let $K:[0,\infty)\to[0,\infty)$ be a bounded radial kernel supported on $[0,1]$ and normalized so that
\begin{align*}
    \int_{\mathbb{R}^d} K(\|z\|)\,\mathrm{d}z = 1.
\end{align*}
For the geometric localization radius $h_0\in(0,\iota_{\mathcal{M}})$ fixed in Section~\ref{sec:method}, assume that the Riemannian kernel class
\begin{align*}
    \mathscr{K}_{\mathcal{M}}
    :=
    \left\{
        u\mapsto
        \theta_s(u)^{-1}\,K\!\left(\frac{d_{\mathcal{M}}(s,u)}{h}\right)
        \mathbf{1}\{u\in B_{\mathcal{M}}(s,h_0)\}
        :
        s\in\mathcal{M},\ 0<h<h_0
    \right\}
\end{align*}
is of VC type with a bounded envelope.
\end{assumption}

\begin{remark}[Examples of admissible kernels]
The boundedness and compact-support requirements in Assumption~\ref{ass:kernel} are satisfied by the usual compactly supported radial kernels used in practice. A canonical family is
\begin{align*}
    K_q(r)=c_{q,d}(1-r^2)^q\mathbf 1\{0\le r\le 1\},
    \qquad q=0,1,2,3,\ldots,
\end{align*}
where \(c_{q,d}\) is chosen so that \(\int_{\mathbb R^d}K_q(\|z\|)\,\mathrm{d}z=1\). This family includes the uniform kernel, the Epanechnikov kernel, the biweight/quartic kernel, and the triweight kernel. The uniform kernel is not Lipschitz, which illustrates why the present formulation is stated in terms of a VC-type entropy condition for the induced intrinsic kernel class rather than a pointwise Lipschitz condition on \(K\). Since \(h_0<\iota_{\mathcal M}\), the volume-density correction \(\theta_s(u)^{-1}\) is smooth and uniformly bounded on the relevant local neighborhoods, and hence it does not affect the bounded-envelope property.
\end{remark}

The compact support of \(K\) and the restriction \(h<h_0<\iota_{\mathcal M}\) ensure that all nonzero kernel weights are evaluated inside normal coordinate neighborhoods. The factor \(\theta_s(u)^{-1}\) corrects the Riemannian volume distortion and allows the leading smoothing calculations to reduce to Euclidean tangent-space integrals. The VC-type condition is imposed directly on the full volume-corrected Riemannian kernel class, rather than derived from geometric entropy conditions for every compact manifold and every radial kernel. This formulation is consistent with empirical-process treatments of kernel-type estimators based on VC-type classes; see, for example, \cite{NolanPollard1987}, \cite{EinmahlMason2005}, \cite{vanDerVaartWellner1996}, and \cite{GineNickl2016}. Related Riemannian and non-Euclidean kernel-smoothing results include \cite{Pelletier2005}, \cite{Pelletier2006}, \cite{JeonEtAl2021}, and \cite{BouzebdaTaachouche2023}.

\begin{assumption}[Weights and sampling frequencies]\label{ass:weights}
The sampling frequencies $m_i$ are treated as fixed when deriving convergence rates. If the $m_i$'s are random, all results are interpreted conditionally on their realized values. For covariance estimation, define
\begin{align*}
    \mathcal{I}_C:=\{i:m_i\ge2\}, \quad n_C:=|\mathcal{I}_C|.
\end{align*}
We assume $n_C\to\infty$, and subjects with $m_i<2$ are assigned pairwise weight $v_i=0$. The observation weights $w_i\ge0$ and pairwise weights $v_i\ge0$ satisfy
\begin{align*}
    \sum_{i=1}^n m_iw_i = 1, \quad \sum_{i=1}^n m_i(m_i-1)v_i = 1.
\end{align*}
For $q\ge1$, write
\begin{align*}
    (m)_q:=m(m-1)\cdots(m-q+1),
\end{align*}
with the convention $(m)_q=0$ if $m<q$. Define the mean variance factors
\begin{align*}
    V_{\mu,1}:=\sum_{i=1}^n m_iw_i^2, \quad V_{\mu,2}:=\sum_{i=1}^n (m_i)_2\,w_i^2,
\end{align*}
and the covariance variance factors
\begin{align*}
    V_{C,1}:=\sum_{i=1}^n (m_i)_2\,v_i^2, \quad V_{C,2}:=\sum_{i=1}^n (m_i)_3\,v_i^2, \quad V_{C,3}:=\sum_{i=1}^n (m_i)_4\,v_i^2.
\end{align*}
We also assume the no-dominant-subject conditions
\begin{align*}
    \sup_n n\max_{1\le i\le n} m_iw_i<\infty, \quad \sup_n n_C\max_{i\in\mathcal{I}_C} (m_i)_2\,v_i<\infty.
\end{align*}
\end{assumption}

This assumption follows the sparse-to-dense weighting framework of \cite{ZhangWang2016}. The normalization conditions make the mean and covariance estimators locally weighted averages. The variance factors record the contribution of measurement-level variation, within-subject dependence, and subject-level stochastic variation under heterogeneous sampling frequencies. Thus we do not impose a balanced design such as $m_i\asymp\bar{m}$ in the main theory.

The final two conditions rule out the pathological case where a single subject contributes a non-negligible fraction of the total observation weight or total pairwise weight. The covariance condition is intentionally written with $n_C$, not $n$, because covariance SUBJ weights satisfy $(m_i)_2v_i=1/n_C$ for $i\in\mathcal{I}_C$; using $n$ would impose the unnecessary restriction $n_C\asymp n$ in sparse regimes. In balanced corollaries where $n$ appears in the covariance variance factors, we explicitly assume that a non-negligible fraction of subjects contribute covariance pairs.

\begin{assumption}[Bandwidths]\label{ass:bandwidth}
The bandwidths for mean and covariance estimation satisfy
\begin{align*}
    h_\mu\to0, \quad h_C\to0, \quad h_\mu<h_0, \quad h_C<h_0,
\end{align*}
for all sufficiently large $n$. We also assume that the bandwidths are not exponentially small:
\begin{align*}
    \log(1/h_\mu)=O(\log n), \quad \log(1/h_C)=O(\log n).
\end{align*}
Moreover,
\begin{align*}
    \log n\left(\frac{V_{\mu,1}}{h_\mu^d}+V_{\mu,2}\right)\to0,
\end{align*}
and
\begin{align*}
    \log n\left(\frac{V_{C,1}}{h_C^{2d}}+\frac{V_{C,2}}{h_C^d}+V_{C,3}\right)\to0.
\end{align*}
\end{assumption}

The bandwidths shrink to remove smoothing bias, while the stochastic fluctuation terms vanish to ensure uniform convergence. The logarithmic factors arise from the use of supremum norms and entropy bounds for localized kernel classes. The restrictions $\log(1/h_\mu)=O(\log n)$ and $\log(1/h_C)=O(\log n)$ are satisfied by the usual polynomial bandwidths and allow the entropy factors to be written at the order $\log n$.

The factors $h_\mu^d$ and $h_C^{2d}$ reflect the intrinsic dimensions of the mean smoothing problem on $\mathcal{M}$ and the covariance smoothing problem on $\mathcal{M}\times\mathcal{M}$, respectively. The quantities $V_{\mu,1}$, $V_{\mu,2}$, $V_{C,1}$, $V_{C,2}$, and $V_{C,3}$ record the effective sample sizes under the chosen weighting scheme and heterogeneous within-subject sampling frequencies.

If the total variance estimator $\hat{V}$ is analyzed separately, the analogous bandwidth condition is obtained by replacing $h_\mu$ with the bandwidth used for $\hat{V}$.

\begin{assumption}[Uniform empirical-process compatibility]\label{ass:ep_compatibility}
Let
\begin{align*}
    \rho_{\mu,h_\mu}
    &:= \left[\log n\left(\frac{V_{\mu,1}}{h_\mu^d}+V_{\mu,2}\right)\right]^{1/2}, \\
    \rho_{C,h_C}
    &:= \left[\log n\left(\frac{V_{C,1}}{h_C^{2d}}+\frac{V_{C,2}}{h_C^d}+V_{C,3}\right)\right]^{1/2}.
\end{align*}
For $h\in(0,h_0)$, define the localized cluster-envelope quantities
\begin{align*}
    b_{\mu,h} &:= \max_{1\le i\le n} w_i h^{-d}(m_i h^d+\log n), \\
    b_{C,h} &:= \max_{i\in\mathcal{I}_C} v_i h^{-2d}(m_i h^d+\log n)^2,
\end{align*}
and the aggregate localized tail-size quantities
\begin{align*}
    a_{\mu,h} &:= h^{-d}\sum_{i=1}^n w_i(m_i h^d+\log n), \\
    a_{C,h} &:= h^{-2d}\sum_{i\in\mathcal{I}_C} v_i(m_i h^d+\log n)^2.
\end{align*}
For the moment exponent $\beta>2$ in Assumption~\ref{ass:process}, there exist truncation levels $\tau_{\mu,n}\to\infty$ and $\tau_{C,n}\to\infty$ such that
\begin{align}
\label{eq:mean_trunc_envelope_condition}
    \tau_{\mu,n}\, b_{\mu,h_\mu}\,\log n = O(\rho_{\mu,h_\mu}), \quad
    a_{\mu,h_\mu}\,\tau_{\mu,n}^{1-2\beta} = o(\rho_{\mu,h_\mu}),
\end{align}
and
\begin{align}
\label{eq:cov_trunc_envelope_condition}
    \tau_{C,n}\, b_{C,h_C}\,\log n = O(\rho_{C,h_C}), \quad
    a_{C,h_C}\,\tau_{C,n}^{1-\beta} = o(\rho_{C,h_C}).
\end{align}
Finally, for these truncation levels, assume that the truncated subject-level cluster-sum classes used in Lemmas~\ref{lem:weighted_clustered_ep} and~\ref{lem:weighted_pairwise_ep}, after normalization by their localized cluster envelopes, have covering logarithms of order $O(\log n)$ in the empirical $L^2$ semimetrics relevant to the clustered maximal inequality.
\end{assumption}

\begin{remark}[Feasibility of the truncation compatibility conditions]
The truncation requirements in Assumption~\ref{ass:ep_compatibility} are not intended to follow from the finite moment condition in Assumption~\ref{ass:process} alone. They are compatibility restrictions linking the moment exponent, bandwidths, weights, and within-subject sampling frequencies, in the same spirit as the additional uniform-convergence conditions in \cite{ZhangWang2016}.

For the mean process, \eqref{eq:mean_trunc_envelope_condition} is satisfied if one can choose a sequence $\tau_{\mu,n}\to\infty$ between the lower scale
\begin{align*}
    \left(\frac{a_{\mu,h_\mu}}{\rho_{\mu,h_\mu}}\right)^{1/(2\beta-1)}
\end{align*}
and the upper scale
\begin{align*}
    \frac{\rho_{\mu,h_\mu}}{b_{\mu,h_\mu}\log n}.
\end{align*}
A convenient sufficient set of conditions is
\begin{align*}
    \lim_{n\to\infty}\frac{\rho_{\mu,h_\mu}}{b_{\mu,h_\mu}\log n}
    =\infty,
    \qquad
    a_{\mu,h_\mu}b_{\mu,h_\mu}^{2\beta-1}(\log n)^{2\beta-1}
    =o\!\left(\rho_{\mu,h_\mu}^{2\beta}\right).
\end{align*}
Indeed, the first condition makes the upper scale diverge, while the second condition ensures that the lower scale is asymptotically smaller than the upper scale.

Similarly, for the covariance process, \eqref{eq:cov_trunc_envelope_condition} is satisfied if one can choose a sequence $\tau_{C,n}\to\infty$ between the lower scale
\begin{align*}
    \left(\frac{a_{C,h_C}}{\rho_{C,h_C}}\right)^{1/(\beta-1)}
\end{align*}
and the upper scale
\begin{align*}
    \frac{\rho_{C,h_C}}{b_{C,h_C}\log n}.
\end{align*}
A convenient sufficient set of conditions is
\begin{align*}
    \lim_{n\to\infty}\frac{\rho_{C,h_C}}{b_{C,h_C}\log n}
    =\infty,
    \qquad
    a_{C,h_C}b_{C,h_C}^{\beta-1}(\log n)^{\beta-1}
    =o\!\left(\rho_{C,h_C}^{\beta}\right).
\end{align*}
These displays make clear that Assumption~\ref{ass:ep_compatibility} is non-vacuous. In Appendix~\ref{app:balanced_bandwidth}, we verify the corresponding orders explicitly for balanced OBS and SUBJ weighting schemes.
\end{remark}

Assumption~\ref{ass:ep_compatibility} collects the compatibility conditions needed to implement the finite-moment truncation argument under the clustered VC-type maximal inequality. The conditions involving $b_{\mu,h}$ and $b_{C,h}$ control the maximal localized cluster envelopes of the truncated processes, while those involving $a_{\mu,h}$ and $a_{C,h}$ make the corresponding tail remainders negligible. These conditions play the same technical role as the truncation and discretization requirements in \cite{ZhangWang2016}, but are stated for intrinsic Riemannian kernel classes and VC-type entropy rather than for Euclidean Lipschitz kernels. For general heterogeneous designs, these compatibility requirements are imposed
as high-level conditions linking the bandwidths, weights, sampling frequencies,
and moment exponent. In the balanced OBS and SUBJ cases considered below, the corresponding localized-envelope and tail-size orders are recorded explicitly in Appendix~\ref{app:balanced_bandwidth}.

\begin{assumption}[Spectral gap]\label{ass:eigengap}
Let $\lambda_1\ge\lambda_2\ge\cdots\ge0$ be the eigenvalues of the covariance operator $\mathcal{C}$. For a fixed integer $K_0\ge1$, assume that the first $K_0$ eigenvalues are separated:
\begin{align*}
    \delta_k:=\min\{\lambda_{k-1}-\lambda_k,\,\lambda_k-\lambda_{k+1}\}>0, \quad k=1,\ldots,K_0,
\end{align*}
where $\lambda_0:=\infty$.
\end{assumption}

This is the standard identifiability condition for estimating individual eigenfunctions. Eigenvalues can be estimated without separation, but convergence of a specific eigenfunction requires the corresponding eigenvalue to be isolated. When empirical eigenfunctions are considered, their signs are chosen according to the convention
\begin{align*}
    \langle \hat{\phi}_k,\phi_k\rangle_{L^2}\ge0, \quad k=1,\ldots,K_0.
\end{align*}

\subsection{Main uniform convergence rates}\label{sec:theory_rates}

We now state the main convergence results. For a function $G$ on $\mathcal{M}$ or $\mathcal{M}\times\mathcal{M}$, write $\|G\|_\infty$ for the corresponding supremum norm.

Define the uniform mean-estimation rate
\begin{align*}
    r_{\mu}
    := h_\mu^2
    + \left[\log n\left(\frac{V_{\mu,1}}{h_\mu^d}+V_{\mu,2}\right)\right]^{1/2}.
\end{align*}
Similarly, define the uniform covariance-estimation rate
\begin{align*}
    r_C
    := r_{\mu}
    + h_C^2
    + \left[\log n\left(\frac{V_{C,1}}{h_C^{2d}}+\frac{V_{C,2}}{h_C^d}+V_{C,3}\right)\right]^{1/2}.
\end{align*}
Under Assumption~\ref{ass:bandwidth}, $r_\mu\to0$, $h_C^2\to0$, and the covariance stochastic term in $r_C$ converges to zero; hence $r_C\to0$.

The empirical-process part of the proof is encoded in Assumption~\ref{ass:ep_compatibility}, which supplies the localized envelope, truncation, and cluster-sum entropy compatibility needed for the uniform stochastic bounds under finite moments. The uniform rates should therefore be read as consequences of this high-level clustered empirical-process compatibility condition; the primitive VC-type condition on the one-point intrinsic kernel class alone is not sufficient for the full sparse-to-dense covariance theory.

\begin{lemma}[Uniform control of design normalizers]\label{lem:design_normalizers}
Suppose Assumptions~\ref{ass:manifold}, \ref{ass:sampling}, \ref{ass:kernel}, \ref{ass:weights}, \ref{ass:bandwidth}, and~\ref{ass:ep_compatibility} hold. Then
\begin{align*}
    \|\hat{f}_\mu-f\|_\infty
    = O_p\!\left[h_\mu^2 + \left\{\log n\left(\frac{V_{\mu,1}}{h_\mu^d}+V_{\mu,2}\right)\right\}^{1/2}\right],
\end{align*}
and
\begin{align*}
    \|\hat{f}_C-f\otimes f\|_\infty
    = O_p\!\left[h_C^2 + \left\{\log n\left(\frac{V_{C,1}}{h_C^{2d}}+\frac{V_{C,2}}{h_C^d}+V_{C,3}\right)\right\}^{1/2}\right],
\end{align*}
where $(f\otimes f)(s,t):=f(s)f(t)$ for $(s,t)\in\mathcal{M}\times\mathcal{M}$. In particular,
\begin{align*}
    \inf_{s\in\mathcal{M}}\hat{f}_\mu(s)\ge \frac{c_f}{2}, \quad
    \inf_{(s,t)\in\mathcal{M}\times\mathcal{M}}\hat{f}_C(s,t)\ge \frac{c_f^2}{2},
\end{align*}
with probability tending to one.
\end{lemma}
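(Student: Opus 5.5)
We split each normalizer into a bias part and a stochastic part. For the mean normalizer,
\begin{align*}
    \hat f_\mu(s)-f(s)=\{\mathbb{E}\hat f_\mu(s)-f(s)\}+\{\hat f_\mu(s)-\mathbb{E}\hat f_\mu(s)\}.
\end{align*}
The same split applies to $\hat f_C(s,t)-f(s)f(t)$.

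\textbf{Bias.} The normalizations $\sum_i m_iw_i=1$ and $\sum_i (m_i)_2v_i=1$, together with the i.i.d.\ design, give
\begin{align*}
    \mathbb{E}\hat f_\mu(s)=\int_{\mathcal M}\mathcal L_{s,h_\mu}(u)f(u)\,\mathrm dv_g(u),\qquad
    \mathbb{E}\hat f_C(s,t)=\mathbb{E}\hat f_{C,1}(s)\,\mathbb{E}\hat f_{C,1}(t).
\end{align*}
Here $\hat f_{C,1}$ denotes the one-point integral $\int\mathcal L_{s,h_C}f\,\mathrm dv_g$. The factorization holds because $s_{ij}$ and $s_{ik}$ with $j\neq k$ are independent.

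Substitute $u=\Exp_s(hz)$, so that $\mathrm dv_g(u)=\theta_s(u)h^d\,\mathrm dz$. The correction $\theta_s^{-1}$ cancels $\theta_s$ exactly, and the one-point integral becomes
\begin{align*}
    \int_{\|z\|\le1}K(\|z\|)\,f(\Exp_s(hz))\,\mathrm dz .
\end{align*}
Taylor expand $z\mapsto f(\Exp_s(hz))$ to second order in normal coordinates. The zeroth-order term is $f(s)$ by the kernel normalization. The first-order term vanishes because $K$ is radial. The remainder is bounded by $h^2\sup_{s}\sup_{\|v\|\le h_0}\|\nabla^2 (f\circ\Exp_s)(v)\|$. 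This supremum is finite uniformly in $s$ by compactness and $f\in C^2(\mathcal M)$, since $(s,v)\mapsto \Exp_s(v)$ is smooth on the compact set $\{\|v\|\le h_0\}$.

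Hence the one-point bias is $O(h^2)$ uniformly in $s$. For the covariance normalizer, the product of two factors $f(s)+O(h_C^2)$ and $f(t)+O(h_C^2)$ differs from $f(s)f(t)$ by $O(h_C^2)$ uniformly, using $f\le C_f$.

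\textbf{Stochastic part.} Here the uniform empirical-process bounds are needed. We regard $\hat f_\mu(s)-\mathbb{E}\hat f_\mu(s)$ as a sum over independent subjects of the centered cluster sums $w_i\sum_j\mathcal L_{s,h_\mu}(s_{ij})$, indexed by $s\in\mathcal M$.

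We invoke the clustered maximal inequality of Lemma~\ref{lem:weighted_clustered_ep} with the response set identically equal to one. Then no truncation is needed, so the tail condition in \eqref{eq:mean_trunc_envelope_condition} is trivial and only the envelope condition $b_{\mu,h_\mu}\log n=O(\rho_{\mu,h_\mu})$ is used. The entropy of the class comes from the VC property in Assumption~\ref{ass:kernel} together with the cluster-sum entropy part of Assumption~\ref{ass:ep_compatibility}.

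The variance of one cluster sum splits into diagonal and off-diagonal parts:
\begin{align*}
    w_i^2\Big\{m_i\,\mathbb{E}\mathcal L_{s,h}^2+(m_i)_2\,(\mathbb{E}\mathcal L_{s,h})^2\Big\}\lesssim w_i^2\left(\frac{m_i}{h^d}+(m_i)_2\right).
\end{align*}
Summing over $i$ gives $V_{\mu,1}/h_\mu^d+V_{\mu,2}$. A Bernstein/Talagrand-type bound with $O(\log n)$ entropy then yields a uniform deviation of $O_p(\rho_{\mu,h_\mu})$.

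For $\hat f_C$ the summands are the $U$-statistic-type cluster sums
\begin{align*}
    v_i\sum_{j\ne k}\mathcal L_{s,h}(s_{ij})\,\mathcal L_{t,h}(s_{ik}).
\end{align*}
Expanding the variance over index patterns gives three kinds of terms:
\begin{itemize}
    \item coinciding pairs, which contribute $(m_i)_2h^{-2d}$;
    \item pairs sharing one index, which contribute $(m_i)_3h^{-d}$;
    \item disjoint pairs, which contribute $(m_i)_4$.
\end{itemize}
Summing these produces $V_{C,1}/h_C^{2d}+V_{C,2}/h_C^d+V_{C,3}$. Lemma~\ref{lem:weighted_pairwise_ep}, with the product kernel class indexed by $(s,t)$, then gives the $O_p(\rho_{C,h_C})$ uniform rate.

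\textbf{Main obstacle.} The hardest step is this uniform pairwise bound. Its difficulty lies in controlling the maximal localized cluster envelope, of order $b_{C,h}$, through the count of points of subject $i$ falling in a ball, which is $O(m_ih^d+\log n)$ uniformly with high probability. I would obtain this count bound first, by applying a Bernstein inequality to the binomial ball counts over a polynomial net of centres justified by the VC property, and then feed it into Lemma~\ref{lem:weighted_pairwise_ep}.

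\textbf{Lower bounds.} Assumption~\ref{ass:bandwidth} makes both rates $o(1)$. Then $\|\hat f_\mu-f\|_\infty\le c_f/2$ and $\|\hat f_C-f\otimes f\|_\infty\le c_f^2/2$ with probability tending to one, which gives the stated infima.
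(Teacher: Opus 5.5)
Your proposal is correct and follows essentially the same route as the paper's proof. It uses the same bias--fluctuation split, an $O(h^2)$ intrinsic bias from the normal-coordinate cancellation plus a second-order Taylor expansion, the centered design-process parts of Lemmas~\ref{lem:weighted_clustered_ep} and~\ref{lem:weighted_pairwise_ep} (with the localized ball-count envelope bounds of Lemma~\ref{lem:localized_cluster_envelope}), and the infimum bounds from the vanishing rates. The one minor difference is that you factor $\mathbb{E}\hat f_C(s,t)$ into a product of two one-point smoothing integrals using independence of $s_{ij}$ and $s_{ik}$, whereas the paper applies the product-kernel bias bound of Lemma~\ref{lem:intrinsic_bias} directly to $f\otimes f$; both give $O(h_C^2)$.
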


\begin{theorem}[Mean function estimation]\label{thm:mean_rate}
Under Assumptions~\ref{ass:manifold}--\ref{ass:ep_compatibility},
\begin{align*}
    \|\hat{\mu}-\mu\|_\infty = O_p(r_\mu).
\end{align*}
\end{theorem}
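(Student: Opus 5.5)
We reduce everything to a ratio decomposition and then control the numerator with the same clustered empirical-process machinery that underlies Lemma~\ref{lem:design_normalizers}. Write $\hat{g}(s):=\sum_i w_i\sum_j \mathcal{L}_{s,h_\mu}(s_{ij})Y_{ij}$, so that $\hat\mu=\hat g/\hat f_\mu$ on the event of Lemma~\ref{lem:design_normalizers}, whose probability tends to one. On that event
\begin{align*}
    \hat\mu(s)-\mu(s)=\frac{\hat g(s)-\mu(s)\hat f_\mu(s)}{\hat f_\mu(s)},
\end{align*}
and $\inf_s\hat f_\mu(s)\ge c_f/2$. It therefore suffices to show $\sup_s|\hat g(s)-\mu(s)\hat f_\mu(s)|=O_p(r_\mu)$.

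Substituting $Y_{ij}=\mu(s_{ij})+U_i(s_{ij})+\epsilon_{ij}$ splits the numerator into three parts:
\begin{align*}
    \hat g(s)-\mu(s)\hat f_\mu(s)=\underbrace{\sum_i w_i\sum_j\mathcal{L}_{s,h_\mu}(s_{ij})\{\mu(s_{ij})-\mu(s)\}}_{A_1(s)}+\underbrace{\sum_i w_i\sum_j\mathcal{L}_{s,h_\mu}(s_{ij})U_i(s_{ij})}_{A_2(s)}+\underbrace{\sum_i w_i\sum_j\mathcal{L}_{s,h_\mu}(s_{ij})\epsilon_{ij}}_{A_3(s)}.
\end{align*}

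\textbf{Bias-type term $A_1$.} Split $A_1=\mathbb{E}A_1+(A_1-\mathbb{E}A_1)$. For the mean, change variables $u=\Exp_s(h_\mu z)$. The factor $\theta_s^{-1}$ cancels the volume density, so
\begin{align*}
    \mathbb{E}A_1(s)=\int_{\|z\|\le1}K(\|z\|)\{\mu(\Exp_s(h_\mu z))-\mu(s)\}f(\Exp_s(h_\mu z))\,\mathrm{d}z.
\end{align*}
Now Taylor-expand $\mu\circ\Exp_s$ and $f\circ\Exp_s$ to second order in normal coordinates. The first-order term vanishes by radial symmetry of $K$. Since $\mu,f\in C^2$ and the normal-coordinate expansions hold uniformly on compact $\mathcal{M}$, this gives $\sup_s|\mathbb{E}A_1(s)|=O(h_\mu^2)$. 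The centered part has the same structure as $\hat f_\mu-\mathbb{E}\hat f_\mu$, but with the bounded, uniformly small multiplier $\mu(u)-\mu(s)=O(h_\mu)$ on the kernel support. The class $u\mapsto\mathcal{L}_{s,h}(u)\{\mu(u)-\mu(s)\}$ remains VC type, being a product of $\mathscr{K}_{\mathcal{M}}$ with a Lipschitz family indexed by $s$. The argument for Lemma~\ref{lem:design_normalizers} then bounds this part by $O_p(\rho_{\mu,h_\mu})$.

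\textbf{Stochastic terms $A_2$ and $A_3$.} Both have mean zero conditionally on the design, since $\mathbb{E}U=0$ and $\mathbb{E}\epsilon=0$, and they are sums over independent subjects of cluster sums. Their variances are of order $V_{\mu,1}/h_\mu^d+V_{\mu,2}$:
\begin{itemize}
    \item diagonal ($j=j'$) contributions give $\sum_i m_iw_i^2\,\mathbb{E}\mathcal{L}^2\asymp V_{\mu,1}h_\mu^{-d}$;
    \item off-diagonal contributions within a subject give $\sum_i (m_i)_2w_i^2\,C(\cdot,\cdot)(\mathbb{E}\mathcal{L})^2\asymp V_{\mu,2}$, and these appear only for $A_2$.
\end{itemize}

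To obtain uniformity over $s$ under only finite moments, truncate at $\tau_{\mu,n}$. Replace $U_i$ by $U_i\mathbf 1\{\sup|U_i|\le\tau_{\mu,n}\}$ and $\epsilon_{ij}$ by $\epsilon_{ij}\mathbf 1\{|\epsilon_{ij}|\le\tau_{\mu,n}\}$, recentering as needed. The truncated processes are then handled by the weighted clustered maximal inequality of Lemma~\ref{lem:weighted_clustered_ep}. It uses the localized cluster envelope, of order $\tau_{\mu,n}b_{\mu,h_\mu}$ (the count of design points within distance $h_\mu$ of $s$ in subject $i$ is $O_p(m_ih^d+\log n)$ uniformly), together with the $O(\log n)$ covering-logarithm condition of Assumption~\ref{ass:ep_compatibility}. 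This yields a Bernstein-type bound of order
\begin{align*}
    \rho_{\mu,h_\mu}+\tau_{\mu,n}b_{\mu,h_\mu}\log n=O(\rho_{\mu,h_\mu}),
\end{align*}
by \eqref{eq:mean_trunc_envelope_condition}. The truncation remainders have expected sup-norm bounded by $a_{\mu,h_\mu}\tau_{\mu,n}^{1-2\beta}$ via Markov's inequality with the $2\beta$-moments, and this is $o(\rho_{\mu,h_\mu})$.

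Combining the three terms gives $\sup_s|A_1+A_2+A_3|=O_p(h_\mu^2+\rho_{\mu,h_\mu})=O_p(r_\mu)$, and the theorem follows. The main obstacle is the uniform truncated clustered bound for $A_2$. There the within-subject dependence through $U_i$ produces the $V_{\mu,2}$ term, and the envelope is random through the local point counts. I would first condition on the high-probability event that all local counts are $O(m_ih^d+\log n)$, established via a union bound over a polynomial net together with Bernstein's inequality, and only then apply the maximal inequality.
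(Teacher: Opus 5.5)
Your proposal is correct and follows the paper's proof essentially step for step. It makes the same reduction to the numerator $\hat g_\mu-\mu\hat f_\mu$ on the event $\{\inf_s\hat f_\mu(s)\ge c_f/2\}$ from Lemma~\ref{lem:design_normalizers}. It uses the same split into a design term (your $A_1$, with $O(h_\mu^2)$ bias from the normal-coordinate cancellation plus a centered VC-type fluctuation of order $\rho_{\mu,h_\mu}$) and the stochastic term $\sum_i w_i\sum_j\mathcal{L}_{s,h_\mu}(s_{ij})\{U_i(s_{ij})+\epsilon_{ij}\}$ (your $A_2+A_3$). That term is controlled, as in the paper, by truncation, the localized-envelope clustered maximal inequality, and the Markov tail bound under Assumption~\ref{ass:ep_compatibility}.

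The one departure is cosmetic. The paper truncates the single multiplier $U_i(s_{ij})+\epsilon_{ij}$ at the observation level inside Lemma~\ref{lem:weighted_clustered_ep}, whereas you truncate $U_i$ through $\sup_s|U_i(s)|$, which lets the tail factor pull cleanly out of the supremum over $s$. To invoke the entropy clause of Assumption~\ref{ass:ep_compatibility} verbatim, you can simply apply Lemma~\ref{lem:weighted_clustered_ep} to the multipliers $U_i(s_{ij})$ and $\epsilon_{ij}$ directly.
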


The two terms in $r_{\mu}$ have different origins. The bias term $h_\mu^2$ is the usual second-order kernel smoothing bias, obtained in normal coordinates using the Riemannian volume density correction. The stochastic term $V_{\mu,1}/h_\mu^d$ corresponds to local measurement-level variation, whereas $V_{\mu,2}$ captures within-subject dependence induced by observing multiple points from the same latent trajectory.

\begin{theorem}[Covariance function estimation]\label{thm:cov_rate}
Under Assumptions~\ref{ass:manifold}--\ref{ass:ep_compatibility},
\begin{align*}
    \|\hat{C}-C\|_\infty = O_p(r_C).
\end{align*}
\end{theorem}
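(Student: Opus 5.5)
The plan is to decompose the error of $\hat C$ into an oracle part, built from the true residuals $U_i(s_{ij})+\epsilon_{ij}$, and a mean-estimation perturbation. The oracle part is then handled exactly as $\hat f_C$ was in Lemma~\ref{lem:design_normalizers}, with the random weights carried along.

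Write $e_{ij}:=Y_{ij}-\mu(s_{ij})=U_i(s_{ij})+\epsilon_{ij}$ and $\Delta(s):=\hat\mu(s)-\mu(s)$, so that $R_{ij}=e_{ij}-\Delta(s_{ij})$. Then
\begin{align*}
R_{ij}R_{ik}=e_{ij}e_{ik}-e_{ij}\Delta(s_{ik})-e_{ik}\Delta(s_{ij})+\Delta(s_{ij})\Delta(s_{ik}).
\end{align*}
On the event of Lemma~\ref{lem:design_normalizers} that $\inf\hat f_C\ge c_f^2/2$, we bound $\hat C-C$ by $(2/c_f^2)$ times the supremum of
\begin{align*}
N(s,t):=\sum_i v_i\sum_{j\ne k}\mathcal L_{s,h_C}(s_{ij})\mathcal L_{t,h_C}(s_{ik})\{R_{ij}R_{ik}-C(s,t)\}.
\end{align*}
We split $N=N_0+N_1+N_2$, where $N_0$ uses $e_{ij}e_{ik}-C(s,t)$, $N_1$ collects the two cross terms, and $N_2$ is the $\Delta\Delta$ term.

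For $N_2$, Theorem~\ref{thm:mean_rate} gives $|N_2|\le\|\Delta\|_\infty^2\hat f_C=O_p(r_\mu^2)$, uniformly in $(s,t)$.

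For $N_1$, take the term $\sum_i v_i\sum_{j\ne k}\mathcal L\mathcal L\, e_{ij}\Delta(s_{ik})$ and bound it by $\|\Delta\|_\infty\sup_{s,t}\sum_iv_i\sum_{j\ne k}\mathcal L_{s,h_C}(s_{ij})\mathcal L_{t,h_C}(s_{ik})|e_{ij}|$. The supremum is an intrinsic kernel average of $|e_{ij}|$. By the same pairwise empirical-process bound used for $\hat f_C$ (Lemma~\ref{lem:weighted_pairwise_ep}, with truncation at $\tau_{C,n}$ under Assumption~\ref{ass:ep_compatibility}), it converges uniformly to its mean. That mean is bounded by $C_f^2\sup_s\mathbb E|e(s)|<\infty$ up to $O(h_C^2)$, using the moment bound of Assumption~\ref{ass:process}. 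Hence $N_1=O_p(r_\mu)$. Avoiding the absolute value is not needed, since we only want the order $r_\mu$, which is already part of $r_C$.

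The main term is $N_0$. We split it into a centered stochastic part $N_0-\mathbb E N_0$ and the bias $\mathbb E N_0$. For the bias, independence of design, process and errors, together with $j\ne k$, gives $\mathbb E(e_{ij}e_{ik}\mid s_{ij},s_{ik})=C(s_{ij},s_{ik})$, because the noise cross-moments vanish. Hence
\begin{align*}
\mathbb E N_0(s,t)=\iint h_C^{-2d}\theta_s^{-1}\theta_t^{-1}K K\{C(u,u')-C(s,t)\}f(u)f(u')\,\mathrm dv_g\,\mathrm dv_g .
\end{align*}
Changing variables $u=\Exp_s(h_Cz)$ and $u'=\Exp_t(h_Cz')$, the Jacobians cancel against $\theta^{-1}$. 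Since $K$ is radial, the first-order Taylor terms vanish, and $C,f\in C^2$ on the compact $\mathcal M$ gives $O(h_C^2)$ uniformly. This is the same computation as for $\hat f_C$ in Lemma~\ref{lem:design_normalizers}.

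For the centered stochastic part, we expand the variance of the subject-level cluster sums $\sum_{j\ne k}\mathcal L\mathcal L\,e_{ij}e_{ik}$:
\begin{itemize}[label={}]
\item quadruples with both indices shared contribute $(m_i)_2v_i^2h_C^{-2d}$;
\item those with one index shared contribute $(m_i)_3v_i^2h_C^{-d}$;
\item disjoint ones contribute $(m_i)_4v_i^2$.
\end{itemize}
Finite fourth moments of $e$ (from $2\beta>4$) bound all three. This yields pointwise variance $V_{C,1}h_C^{-2d}+V_{C,2}h_C^{-d}+V_{C,3}$.

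To make this uniform we invoke the clustered maximal inequality of Lemma~\ref{lem:weighted_pairwise_ep}. We truncate $|e_{ij}e_{ik}|$ at $\tau_{C,n}$. The truncated class has localized envelope $b_{C,h_C}$ and $O(\log n)$ entropy by Assumption~\ref{ass:ep_compatibility}, which gives $O_p(\rho_{C,h_C})$. The tail remainder is bounded in expectation by $a_{C,h_C}\tau_{C,n}^{1-\beta}=o(\rho_{C,h_C})$ via Markov's inequality and the $2\beta$-moment of $\sup|U|$ and of $\epsilon$. The centered $C(s,t)\hat f_C$ piece is handled by Lemma~\ref{lem:design_normalizers} directly.

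I expect the main obstacle to be this uniform stochastic bound for $N_0$. The issue is the product $e_{ij}e_{ik}$, which has only $\beta$ moments, and the verification that the truncation error is uniformly negligible rather than pointwise. Collecting terms gives $\|\hat C-C\|_\infty=O_p(r_\mu+r_\mu^2+h_C^2+\rho_{C,h_C})=O_p(r_C)$.
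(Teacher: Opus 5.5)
Your proposal is correct and is essentially the paper's proof: an oracle numerator built from $e_{ij}e_{ik}$ (bias $O(h_C^2)$ via the normal-coordinate change of variables, stochastic part $O_p(\rho_{C,h_C})$ via the ordered-pair overlap variance and the truncated clustered maximal inequality), plus a plug-in term bounded by $\|\hat\mu-\mu\|_\infty$ times uniformly $O_p(1)$ absolute kernel sums, as in Lemma~\ref{lem:absolute_pairwise_residual_bound}. Two small adjustments are needed. First, Lemma~\ref{lem:weighted_pairwise_ep} requires conditionally centered multipliers, so split $e_{ij}e_{ik}=\{e_{ij}e_{ik}-C(s_{ij},s_{ik})\}+C(s_{ij},s_{ik})$, as the paper does with $A_C$ and $B_C$. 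Second, for $N_1$ you need only boundedness of the mean kernel average, not an $O(h_C^2)$ expansion, since $u\mapsto\mathbb{E}|e(u)|$ need not be $C^2$.
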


The covariance rate contains three variance components,
\begin{align*}
    \frac{V_{C,1}}{h_C^{2d}}, \quad \frac{V_{C,2}}{h_C^d}, \quad V_{C,3},
\end{align*}
which arise from the dependence structure among within-subject ordered pairs. The first term corresponds to pair-level local variation, including pair-pair covariance cases where two ordered pairs share both indices, such as identical or reversed ordered pairs. The second term corresponds to pairs sharing exactly one observation index, and the third to the latent trajectory-level dependence that remains even when the observation grid becomes dense. The term $r_\mu$ enters because the covariance estimator is formed from residuals based on the estimated mean. All empirical-process bounds are applied after grouping observations by subject; independence is used only across subjects, while within-subject dependence is absorbed into the variance factors $V_{\mu,2}$, $V_{C,2}$, and $V_{C,3}$.

\subsection{Operator convergence and spectral perturbation}\label{sec:operator_spectral}

Recall the symmetrized kernel $\hat{C}_{\mathrm{sym}}$ introduced in
Section~\ref{sec:method}. The corresponding empirical integral operator
$\hat{\mathcal{C}}:L^2(\mathcal{M})\to L^2(\mathcal{M})$ is defined by
\begin{align*}
    (\hat{\mathcal{C}}\psi)(s)
    :=
    \int_{\mathcal{M}} \hat{C}_{\mathrm{sym}}(s,t)\psi(t)\,\mathrm{d}v_g(t),
    \quad \psi\in L^2(\mathcal{M}).
\end{align*}
Since $\hat{C}_{\mathrm{sym}}$ is symmetric and square-integrable on
$\mathcal{M}\times\mathcal{M}$, the operator $\hat{\mathcal{C}}$ is compact,
self-adjoint, and Hilbert--Schmidt. Moreover, since $C(s,t)=C(t,s)$,
\begin{align*}
    \|\hat{C}_{\mathrm{sym}}-C\|_\infty
    \le \|\hat{C}-C\|_\infty.
\end{align*}
Thus Theorem~\ref{thm:cov_rate} also gives
\begin{align*}
    \|\hat{C}_{\mathrm{sym}}-C\|_\infty = O_p(r_C).
\end{align*}
Since $\mathcal{M}$ is compact,
\begin{align*}
    \|\hat{C}_{\mathrm{sym}}-C\|_{L^2(\mathcal{M}\times\mathcal{M})}
    \le \operatorname{Vol}(\mathcal{M})\,\|\hat{C}_{\mathrm{sym}}-C\|_\infty.
\end{align*}
Therefore,
\begin{align*}
    \|\hat{\mathcal{C}}-\mathcal{C}\|_{HS}
    = \|\hat{C}_{\mathrm{sym}}-C\|_{L^2(\mathcal{M}\times\mathcal{M})}
    \le \operatorname{Vol}(\mathcal{M})\,\|\hat{C}_{\mathrm{sym}}-C\|_\infty.
\end{align*}
Consequently, Theorem~\ref{thm:cov_rate} yields the following result.

\begin{theorem}[Covariance operator convergence]\label{thm:operator_rate}
Under the conditions of Theorem~\ref{thm:cov_rate},
\begin{align*}
    \|\hat{\mathcal{C}}-\mathcal{C}\|_{HS} = O_p(r_C).
\end{align*}
In particular,
\begin{align*}
    \|\hat{\mathcal{C}}-\mathcal{C}\|_{\mathrm{op}} = O_p(r_C).
\end{align*}
\end{theorem}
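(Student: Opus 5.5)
The argument only transfers the uniform kernel bound of Theorem~\ref{thm:cov_rate} to the operator level. It uses two standard facts: the Hilbert--Schmidt norm of an integral operator equals the $L^2$ norm of its kernel, and on a compact domain that $L^2$ norm is controlled by the supremum norm.

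\textbf{Step 1.} The difference $\hat{\mathcal{C}}-\mathcal{C}$ is the integral operator on $L^2(\mathcal{M})$ with kernel $D:=\hat{C}_{\mathrm{sym}}-C$. The kernel $D$ is measurable and bounded. It is bounded because $C\in C^2(\mathcal{M}\times\mathcal{M})$ is continuous on a compact set, and because, on the event of Lemma~\ref{lem:design_normalizers}, $\hat{C}$ is a ratio with denominator at least $c_f^2/2$; off that event the finite-sample convention sets it to zero. Hence $D\in L^2(\mathcal{M}\times\mathcal{M},\mathrm{d}v_g\otimes\mathrm{d}v_g)$. Fix any orthonormal basis $\{e_j\}$ of $L^2(\mathcal{M})$. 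Then $\{e_j\otimes e_k\}$ is an orthonormal basis of $L^2(\mathcal{M}\times\mathcal{M})$, and Parseval's identity gives
\begin{align*}
\|\hat{\mathcal{C}}-\mathcal{C}\|_{HS}^2=\sum_{j,k}\langle (\hat{\mathcal{C}}-\mathcal{C})e_k,e_j\rangle_{L^2}^2=\|D\|_{L^2(\mathcal{M}\times\mathcal{M})}^2 .
\end{align*}

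\textbf{Step 2.} The $L^2$ norm of $D$ is bounded by its supremum norm times the square root of the product volume:
\begin{align*}
\|D\|_{L^2(\mathcal{M}\times\mathcal{M})}^2=\int_{\mathcal{M}}\int_{\mathcal{M}}D(s,t)^2\,\mathrm{d}v_g(s)\,\mathrm{d}v_g(t)\le \operatorname{Vol}(\mathcal{M})^2\,\|D\|_\infty^2 .
\end{align*}
By symmetry of $C$, $\|D\|_\infty\le\|\hat{C}-C\|_\infty$; this holds both with the by-construction symmetry of $\hat{C}$ and with the averaged version. Theorem~\ref{thm:cov_rate} then gives $\|D\|_\infty=O_p(r_C)$.

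\textbf{Step 3.} Since $\operatorname{Vol}(\mathcal{M})<\infty$ is a deterministic constant, multiplying an $O_p(r_C)$ sequence by it leaves it $O_p(r_C)$. This gives the Hilbert--Schmidt bound. The operator-norm bound follows from the elementary inequality $\|T\|_{\mathrm{op}}\le\|T\|_{HS}$, which is Cauchy--Schwarz applied to $\|T\psi\|^2=\sum_j\langle\psi,T^*e_j\rangle^2$.

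No step is a real obstacle. The only points needing care are measurability of $\hat{C}_{\mathrm{sym}}$ and the fact that the bound on $\hat{C}$ holds on an event of probability tending to one. The first follows from the explicit ratio form and the measurability of the kernel class in Assumption~\ref{ass:kernel}. The second is compatible with the $O_p$ statement, since events of vanishing probability do not affect stochastic order.
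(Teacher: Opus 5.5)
Your proposal is correct and follows the paper's proof essentially step for step: you identify $\|\hat{\mathcal{C}}-\mathcal{C}\|_{HS}$ with the $L^2(\mathcal{M}\times\mathcal{M})$ norm of $\hat{C}_{\mathrm{sym}}-C$, bound that by $\operatorname{Vol}(\mathcal{M})\|\hat{C}_{\mathrm{sym}}-C\|_\infty\le\operatorname{Vol}(\mathcal{M})\|\hat{C}-C\|_\infty=O_p(r_C)$, and pass to the operator norm via $\|\cdot\|_{\mathrm{op}}\le\|\cdot\|_{HS}$. One minor imprecision: off the event of Lemma~\ref{lem:design_normalizers}, the finite-sample convention does not set $\hat{C}$ to zero, because the denominator is typically still positive there, so $\hat{C}$ is still the ratio; it is nonetheless bounded, since with $K\ge0$ it is a weighted average of the products $R_{ij}R_{ik}$, and as you note this event is irrelevant to the $O_p$ claim anyway.
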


\begin{theorem}[Eigenvalue and eigenfunction convergence]\label{thm:eigen_rates}
Suppose Assumption~\ref{ass:eigengap} holds for a fixed integer $K_0\ge1$. Under the conditions of Theorem~\ref{thm:operator_rate},
\begin{align*}
    |\hat{\lambda}_k-\lambda_k| = O_p(r_C), \quad k=1,\ldots,K_0.
\end{align*}
Furthermore, after choosing the sign of $\hat{\phi}_k$ so that
\begin{align*}
    \langle \hat{\phi}_k,\phi_k\rangle_{L^2}\ge0,
\end{align*}
we have
\begin{align*}
    \|\hat{\phi}_k-\phi_k\|_{L^2(\mathcal{M})}
    = O_p\!\left(\frac{r_C}{\delta_k}\right), \quad k=1,\ldots,K_0.
\end{align*}
\end{theorem}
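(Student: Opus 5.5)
The plan is to deduce both bounds deterministically from the operator-norm bound of Theorem~\ref{thm:operator_rate}, using standard perturbation theory for compact self-adjoint operators. Write $\Delta:=\hat{\mathcal{C}}-\mathcal{C}$; by Theorem~\ref{thm:operator_rate}, $\|\Delta\|_{\mathrm{op}}=O_p(r_C)$ and $r_C\to0$. All remaining steps hold on the event $E_n:=\{\|\Delta\|_{\mathrm{op}}\le \min_{k\le K_0}\delta_k/4\}$. Because $K_0$ is fixed and each $\delta_k>0$ by Assumption~\ref{ass:eigengap}, $\mathbb{P}(E_n)\to1$. It therefore suffices to prove the bounds $|\hat\lambda_k-\lambda_k|\le c\|\Delta\|_{\mathrm{op}}$ and $\|\hat\phi_k-\phi_k\|_{L^2}\le c\|\Delta\|_{\mathrm{op}}/\delta_k$ on $E_n$, with an absolute constant $c$.

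\emph{Eigenvalues.} I would use the Courant--Fischer min--max characterization of the ordered positive spectrum of a compact self-adjoint operator. This gives Weyl's inequality $|\hat\lambda_k-\lambda_k|\le\|\Delta\|_{\mathrm{op}}$ for every $k$ with $\lambda_k>0$. There is one subtlety: $\hat{\mathcal{C}}$ need not be positive semidefinite, so I must check that the nonincreasing ordering of $\hat\lambda_k$ agrees with the min--max values. For $k\le K_0$ we have $\lambda_k\ge\delta_k>0$, because $\lambda_k-\lambda_{k+1}\ge\delta_k$ and $\lambda_{k+1}\ge0$. On $E_n$ the min--max values are then at least $3\delta_k/4>0$, so they coincide with the $k$-th largest eigenvalue of $\hat{\mathcal{C}}$, counted with multiplicity. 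This gives $|\hat\lambda_k-\lambda_k|=O_p(r_C)$.

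\emph{Eigenfunctions.} Weyl's bound also shows that on $E_n$ the eigenvalue $\hat\lambda_k$ is simple and is separated from $\{\hat\lambda_j\}_{j\ne k}$ by at least $\delta_k/2$. It is separated from $\lambda_j$, $j\ne k$, by at least $3\delta_k/4$. I would then apply the Davis--Kahan $\sin\theta$ theorem, or equivalently the resolvent/Riesz-projection argument. Let $P_k$ and $\hat P_k$ be the rank-one spectral projections. Expanding $\hat\phi_k$ in the eigenbasis of $\mathcal{C}$, augmented by the null space, gives
\begin{align*}
(\mathcal{C}-\hat\lambda_k)(I-P_k)\hat\phi_k=-(I-P_k)\Delta\hat\phi_k .
\end{align*}
Hence $\|(I-P_k)\hat\phi_k\|_{L^2}\le \|\Delta\|_{\mathrm{op}}/\min_{j\ne k}|\lambda_j-\hat\lambda_k|\le 4\|\Delta\|_{\mathrm{op}}/(3\delta_k)$. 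Here I use that the spectrum of $\mathcal{C}$ restricted to $\mathrm{ran}(I-P_k)$ consists of the $\lambda_j$ with $j\ne k$ together with $0$, and each of these is at distance at least $3\delta_k/4$ from $\hat\lambda_k$.

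Under the sign convention $\langle\hat\phi_k,\phi_k\rangle\ge0$, we have $\|\hat\phi_k-\phi_k\|^2=2(1-\langle\hat\phi_k,\phi_k\rangle)\le2\sin^2\theta$, since $1-\cos\theta\le\sin^2\theta$ for $\cos\theta\in[0,1]$. Therefore $\|\hat\phi_k-\phi_k\|_{L^2}\le\sqrt2\,\sin\theta\lesssim\|\Delta\|_{\mathrm{op}}/\delta_k$. Combining this with Theorem~\ref{thm:operator_rate} yields $O_p(r_C/\delta_k)$.

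\emph{Main obstacle.} The step needing most care is the bookkeeping for a possibly indefinite, infinite-dimensional $\hat{\mathcal{C}}$. Specifically, I need that the $k$-th largest empirical eigenvalue is the one near $\lambda_k$, and that $0$, an accumulation point of both spectra, does not interfere with the gap. Both points are handled by working on $E_n$ and by using that $\lambda_k\ge\delta_k$ for $k\le K_0$. The remaining steps are standard.
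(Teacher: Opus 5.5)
Your proposal is correct and follows the same route as the paper: Weyl's inequality for the eigenvalues, then a Davis--Kahan bound combined with the sign convention, on the event $\{\|\hat{\mathcal{C}}-\mathcal{C}\|_{\mathrm{op}}\lesssim\delta_k\}$ whose probability tends to one. The paper cites Davis--Kahan where you derive it directly from the identity $(\mathcal{C}-\hat\lambda_k)(I-P_k)\hat\phi_k=-(I-P_k)(\hat{\mathcal{C}}-\mathcal{C})\hat\phi_k$, and your extra bookkeeping fills in details the paper leaves implicit: that $\lambda_k\ge\delta_k>0$ makes the $k$-th largest eigenvalue of the possibly indefinite $\hat{\mathcal{C}}$ equal the min--max value, and that the accumulation point $0$ stays at distance at least $3\delta_k/4$ from $\hat\lambda_k$.
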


The eigenvalue bound follows from Weyl's inequality for compact self-adjoint operators, and the eigenfunction bound follows from a standard Davis--Kahan perturbation argument. Thus the FPCA rates are governed by the uniform covariance estimation rate $r_C$ and the eigengap $\delta_k$.

\subsection{Balanced-design rates and sparse-to-dense transition}
\label{sec:balanced_rates}

In this balanced setting, the variance factors and the localized quantities appearing in Assumption~\ref{ass:ep_compatibility} simplify under the canonical OBS and SUBJ weighting schemes. Appendix~\ref{app:balanced_bandwidth} records the corresponding balanced-design orders for the bandwidth choices used below. Therefore, the following corollaries should be read as rate consequences of Theorems~\ref{thm:cov_rate}--\ref{thm:eigen_rates}, together with Assumption~\ref{ass:ep_compatibility}, rather than as separate minimax lower-bound statements.

Assume that
\begin{align*}
    m_i\asymp \bar{m}, \quad i=1,\ldots,n,
\end{align*}
where $\bar{m}=\bar{m}_n$ may depend on $n$. For covariance estimation, assume also that a non-negligible fraction of subjects have at least two observations, so that
\begin{align*}
    \sum_{i\in\mathcal{I}_C} m_i(m_i-1)\asymp n\bar{m}^2.
\end{align*}
In particular, this condition is automatically satisfied when $m_i\ge2$ for all sufficiently large $n$ and $m_i\asymp\bar{m}$. Under either the OBS or SUBJ weighting scheme, the variance factors satisfy
\begin{align*}
    V_{\mu,1}\asymp \frac{1}{n\bar{m}}, \quad V_{\mu,2}\asymp \frac{1}{n},
\end{align*}
and
\begin{align*}
    V_{C,1}\asymp \frac{1}{n\bar{m}^2}, \quad V_{C,2}\asymp \frac{1}{n\bar{m}}, \quad V_{C,3}\asymp \frac{1}{n}.
\end{align*}

Substituting the balanced variance factors into Theorem~\ref{thm:cov_rate}, the uniform covariance upper rate can be written as
\begin{align}
\label{eq:balanced_cov_rate}
    r_C
    \lesssim r_{\mu}
    + h_C^2
    + \left[\log n\left(\frac{1}{n\bar{m}^2 h_C^{2d}}+\frac{1}{n\bar{m}\,h_C^d}+\frac{1}{n}\right)\right]^{1/2}.
\end{align}
The three covariance variance terms have different interpretations: the first is the pairwise local smoothing variance, the second arises from within-subject ordered pairs sharing one observation index, and the third is the subject-level stochastic floor. Under uniform convergence, this floor appears as $(\log n/n)^{1/2}$.

For the mean estimator, the balanced uniform rate is
\begin{align}
\label{eq:balanced_mean_rate}
    r_{\mu}
    \lesssim h_\mu^2
    + \left[\log n\left(\frac{1}{n\bar{m}\,h_\mu^d}+\frac{1}{n}\right)\right]^{1/2}.
\end{align}
The term $r_\mu$ is kept explicitly in the first balanced corollary because it is the plug-in mean-estimation contribution in the covariance estimator. In the dense-boundary and ultra-dense regimes, we specify bandwidth choices for $h_\mu$ that make this contribution no larger than the uniform subject-level stochastic floor.

\begin{corollary}[Non-dense and sparse regimes]\label{cor:nondense_rate}
Suppose Assumptions~\ref{ass:manifold}--\ref{ass:eigengap} hold, including Assumption~\ref{ass:ep_compatibility} for the bandwidths used in this corollary. Suppose also that $m_i\asymp\bar{m}$ and that a non-negligible fraction of subjects contribute covariance pairs, in the sense that $n_C\asymp n$ and
\begin{align*}
    \sum_{i\in\mathcal{I}_C} m_i(m_i-1)\asymp n\bar{m}^2.
\end{align*}
If
\begin{align*}
    \bar{m}\lesssim \left(\frac{n}{\log n}\right)^{d/4},
\end{align*}
then taking
\begin{align*}
    h_C\asymp \left(\frac{\log n}{n\bar{m}^2}\right)^{1/(2d+4)}
\end{align*}
yields
\begin{align*}
    \|\hat{\mathcal{C}}-\mathcal{C}\|_{HS}
    = O_p\!\left[r_\mu + \left(\frac{\log n}{n\bar{m}^2}\right)^{1/(d+2)}\right],
\end{align*}
and the same rate holds for $\|\hat{\mathcal{C}}-\mathcal{C}\|_{\mathrm{op}}$. Consequently, for each fixed $k\le K_0$,
\begin{align*}
    |\hat{\lambda}_k-\lambda_k|
    = O_p\!\left[r_\mu + \left(\frac{\log n}{n\bar{m}^2}\right)^{1/(d+2)}\right],
\end{align*}
and, after choosing the sign so that $\langle \hat{\phi}_k,\phi_k\rangle_{L^2}\ge0$,
\begin{align*}
    \|\hat{\phi}_k-\phi_k\|_{L^2(\mathcal{M})}
    = O_p\!\left[\frac{1}{\delta_k}\left\{r_\mu + \left(\frac{\log n}{n\bar{m}^2}\right)^{1/(d+2)}\right\}\right].
\end{align*}
If, in addition,
\begin{align*}
    h_\mu\asymp \left(\frac{\log n}{n\bar{m}}\right)^{1/(d+4)},
\end{align*}
then $r_\mu$ in the preceding displays can be replaced by
\begin{align*}
    \left(\frac{\log n}{n\bar{m}}\right)^{2/(d+4)} + \left(\frac{\log n}{n}\right)^{1/2}.
\end{align*}
\end{corollary}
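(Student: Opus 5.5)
The plan is to plug the balanced variance factors into the rate $r_C$ of Theorem~\ref{thm:cov_rate}, choose $h_C$ to balance bias against the pairwise variance term, and then pass to the operator and spectral statements through Theorems~\ref{thm:operator_rate} and~\ref{thm:eigen_rates}. Under $m_i\asymp\bar m$, $n_C\asymp n$ and $\sum_{i\in\mathcal I_C}m_i(m_i-1)\asymp n\bar m^2$, the OBS and SUBJ weights give $V_{C,1}\asymp 1/(n\bar m^2)$, $V_{C,2}\asymp 1/(n\bar m)$ and $V_{C,3}\asymp 1/n$. For SUBJ, $v_i\asymp 1/(n\bar m^2)$, so $\sum_i (m_i)_q v_i^2\asymp n\bar m^q/(n\bar m^2)^2$; OBS is analogous. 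This yields \eqref{eq:balanced_cov_rate}.

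The key step is a term-by-term comparison at $h_C\asymp(\log n/(n\bar m^2))^{1/(2d+4)}$. First, the bias and the pairwise term balance: $h_C^2\asymp\{\log n/(n\bar m^2h_C^{2d})\}^{1/2}\asymp(\log n/(n\bar m^2))^{1/(d+2)}$. Second, the one-shared-index term is dominated, since its ratio to the pairwise term is
\begin{align*}
\frac{\log n/(n\bar m h_C^d)}{\log n/(n\bar m^2 h_C^{2d})}=\bar m h_C^d .
\end{align*}
Third, the subject-level floor is dominated because
\begin{align*}
\frac{\log n/n}{\log n/(n\bar m^2h_C^{2d})}=\bar m^2h_C^{2d}.
\end{align*}
Here
\begin{align*}
\bar m^2h_C^{2d}\asymp \bar m^2\Big(\frac{\log n}{n\bar m^2}\Big)^{d/(d+2)}=\bar m^{4/(d+2)}\Big(\frac{\log n}{n}\Big)^{d/(d+2)},
\end{align*}
which is $O(1)$ exactly when $\bar m\lesssim(n/\log n)^{d/4}$, the hypothesis of the corollary. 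Then also $\bar m h_C^d\le(\bar m^2h_C^{2d})^{1/2}=O(1)$, so both nonleading terms are at most of the order of the pairwise term. Hence $r_C\lesssim r_\mu+(\log n/(n\bar m^2))^{1/(d+2)}$.

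We must also confirm that this $h_C$ is admissible in Assumption~\ref{ass:bandwidth}. It tends to zero, it is polynomial in $n$, and the stochastic term tends to zero; this holds provided $n\bar m^2/\log n\to\infty$, which is automatic. Assumption~\ref{ass:ep_compatibility} is assumed in the statement for these bandwidths, with the balanced orders recorded in Appendix~\ref{app:balanced_bandwidth}. Theorem~\ref{thm:operator_rate} then gives the HS and operator-norm rates, and Theorem~\ref{thm:eigen_rates} gives the eigenvalue and eigenfunction rates, with $\delta_k$ fixed.

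For the final claim, set $h_\mu\asymp(\log n/(n\bar m))^{1/(d+4)}$ in \eqref{eq:balanced_mean_rate}. Then $h_\mu^2\asymp\{\log n/(n\bar m h_\mu^d)\}^{1/2}\asymp(\log n/(n\bar m))^{2/(d+4)}$, and the remaining term is $(\log n/n)^{1/2}$. The condition $\log n\,V_{\mu,1}/h_\mu^d\to0$ holds by the same reasoning. The main obstacle is just the exponent bookkeeping in the domination step; everything else is direct substitution into earlier results.
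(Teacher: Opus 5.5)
Your proposal is correct and follows the same route as the paper. You substitute the balanced variance factors into $r_C$, choose $h_C$ to balance $h_C^2$ against the pairwise term, use $\bar m\lesssim(n/\log n)^{d/4}$ to dominate the one-index-overlap term and the subject-level floor, and then invoke Theorems~\ref{thm:operator_rate} and~\ref{thm:eigen_rates}. Your explicit ratio computation $\bar m^2h_C^{2d}\asymp\bar m^{4/(d+2)}(\log n/n)^{d/(d+2)}$ and your admissibility check of Assumption~\ref{ass:bandwidth} spell out what the paper's proof and Appendix~\ref{app:balanced_bandwidth} state only tersely.
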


\begin{proof}
Let $a_n:=\log n/n$. Substituting the balanced variance-factor orders into Theorem~\ref{thm:cov_rate} gives the covariance upper rate in \eqref{eq:balanced_cov_rate}. The bandwidth
\begin{align*}
    h_C\asymp \left(\frac{a_n}{\bar{m}^2}\right)^{1/(2d+4)}
\end{align*}
balances the covariance smoothing bias with the pairwise local smoothing variance. Under $\bar{m}\lesssim a_n^{-d/4}$, the one-index-overlap term and the subject-level stochastic floor are no larger than this balanced local rate. Hence
\begin{align*}
    r_C \lesssim r_\mu + \left(\frac{\log n}{n\bar{m}^2}\right)^{1/(d+2)}.
\end{align*}
The Hilbert--Schmidt, operator-norm, eigenvalue, and eigenfunction rates follow from Theorems~\ref{thm:operator_rate} and~\ref{thm:eigen_rates}. The explicit mean-rate statement follows by substituting the displayed choice of $h_\mu$ into \eqref{eq:balanced_mean_rate}. Detailed balanced-design calculations are recorded in Appendix~\ref{app:balanced_bandwidth}.
\end{proof}

\begin{corollary}[Uniform dense boundary]\label{cor:dense_boundary}
Suppose Assumptions~\ref{ass:manifold}--\ref{ass:eigengap} hold, including Assumption~\ref{ass:ep_compatibility} for the bandwidths used in this corollary. Suppose also that $m_i\asymp\bar{m}$ and that a non-negligible fraction of subjects contribute covariance pairs, in the sense that $n_C\asymp n$ and
\begin{align*}
    \sum_{i\in\mathcal{I}_C} m_i(m_i-1)\asymp n\bar{m}^2.
\end{align*}
If
\begin{align*}
    \bar{m}\asymp \left(\frac{n}{\log n}\right)^{d/4},
\end{align*}
then choosing
\begin{align*}
    h_C\asymp \left(\frac{\log n}{n}\right)^{1/4}, \quad
    h_\mu\asymp \left(\frac{\log n}{n\bar{m}}\right)^{1/(d+4)}
\end{align*}
yields
\begin{align*}
    \|\hat{\mathcal{C}}-\mathcal{C}\|_{HS}
    = O_p\!\left[\left(\frac{\log n}{n}\right)^{1/2}\right],
\end{align*}
and the same rate holds for $\|\hat{\mathcal{C}}-\mathcal{C}\|_{\mathrm{op}}$. Consequently, for each fixed $k\le K_0$,
\begin{align*}
    |\hat{\lambda}_k-\lambda_k|
    = O_p\!\left[\left(\frac{\log n}{n}\right)^{1/2}\right],
\end{align*}
and, after choosing the sign so that $\langle \hat{\phi}_k,\phi_k\rangle_{L^2}\ge0$,
\begin{align*}
    \|\hat{\phi}_k-\phi_k\|_{L^2(\mathcal{M})}
    = O_p\!\left[\frac{1}{\delta_k}\left(\frac{\log n}{n}\right)^{1/2}\right].
\end{align*}
\end{corollary}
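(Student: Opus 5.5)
The argument parallels the proof of Corollary~\ref{cor:nondense_rate}: substitute the balanced variance-factor orders into the covariance bound \eqref{eq:balanced_cov_rate} and the mean bound \eqref{eq:balanced_mean_rate}. Then check that, at the boundary $\bar{m}\asymp (n/\log n)^{d/4}$ with the stated bandwidths, every term is $O\{(\log n/n)^{1/2}\}$. The operator, eigenvalue, and eigenfunction conclusions will follow from Theorems~\ref{thm:operator_rate} and~\ref{thm:eigen_rates}. Throughout, write $a_n:=\log n/n$, so that $\bar{m}\asymp a_n^{-d/4}$ and $h_C\asymp a_n^{1/4}$, hence $h_C^d\asymp a_n^{d/4}\asymp \bar{m}^{-1}$.

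\emph{Step 1 (covariance terms).} Under the balanced OBS or SUBJ scheme, with $n_C\asymp n$ and $\sum_{i\in\mathcal{I}_C}(m_i)_2\asymp n\bar{m}^2$, we have $V_{C,1}\asymp 1/(n\bar{m}^2)$, $V_{C,2}\asymp 1/(n\bar{m})$ and $V_{C,3}\asymp 1/n$. The bias term is $h_C^2\asymp a_n^{1/2}$. Since $\bar{m}h_C^d\asymp 1$, the pairwise term satisfies
\begin{align*}
\frac{\log n}{n\bar{m}^2h_C^{2d}}\asymp a_n(\bar{m}h_C^d)^{-2}\asymp a_n.
\end{align*}
Likewise the one-index-overlap term gives $\log n/(n\bar{m}h_C^d)\asymp a_n$, and the floor term is $\log n/n=a_n$. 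So the stochastic bracket in \eqref{eq:balanced_cov_rate} is $O(a_n^{1/2})$, and all three variance components balance exactly at this boundary. This matches the corollary's claim that the transition is governed by $d$.

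\emph{Step 2 (mean plug-in term).} With $h_\mu\asymp (a_n/\bar{m})^{1/(d+4)}$, the bias and local-variance terms in \eqref{eq:balanced_mean_rate} are both of order $(a_n/\bar{m})^{2/(d+4)}$. Substituting $\bar{m}\asymp a_n^{-d/4}$ gives $a_n/\bar{m}\asymp a_n^{(d+4)/4}$, so $(a_n/\bar{m})^{2/(d+4)}\asymp a_n^{1/2}$. The remaining term in \eqref{eq:balanced_mean_rate} is the floor $a_n^{1/2}$. Hence $r_\mu=O(a_n^{1/2})$ and $r_C=O(a_n^{1/2})$.

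\emph{Step 3 (conclusions).} Theorem~\ref{thm:operator_rate} gives the Hilbert--Schmidt and operator-norm rates. Theorem~\ref{thm:eigen_rates}, applied under Assumption~\ref{ass:eigengap}, gives the eigenvalue rate and the eigenfunction rate $r_C/\delta_k$ with the stated sign convention.

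The only genuinely delicate point is applicability. We must check that the bandwidth conditions in Assumption~\ref{ass:bandwidth} hold, namely $h_\mu,h_C\to0$, polynomial bandwidths, and vanishing stochastic terms; these are immediate since each bracket is $O(a_n)$. The compatibility conditions in Assumption~\ref{ass:ep_compatibility} must also hold for these bandwidths. This is assumed explicitly in the statement, and the balanced-design orders for $a_{\cdot,h}$ and $b_{\cdot,h}$ are recorded in Appendix~\ref{app:balanced_bandwidth}. So this step reduces to citing the assumption rather than verifying the truncation inequalities afresh.
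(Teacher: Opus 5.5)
Your proposal is correct and follows essentially the same route as the paper: you substitute the balanced variance-factor orders into the uniform rates, observe that at $\bar m\asymp a_n^{-d/4}$ with $h_C\asymp a_n^{1/4}$ (so $\bar m h_C^d\asymp 1$) and $h_\mu\asymp(a_n/\bar m)^{1/(d+4)}$ every bias and stochastic term, including the plug-in contribution $r_\mu$, is of order $a_n^{1/2}$, and then invoke Theorems~\ref{thm:operator_rate} and~\ref{thm:eigen_rates}. Your explicit check of Assumption~\ref{ass:bandwidth} (each stochastic bracket is $O(a_n)\to0$ and the bandwidths are polynomial) is a small addition that the paper leaves implicit.
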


\begin{proof}
Let $a_n:=\log n/n$. At the boundary $\bar{m}\asymp a_n^{-d/4}$, the bandwidth choices
\begin{align*}
    h_C\asymp a_n^{1/4}, \quad
    h_\mu\asymp \left(\frac{a_n}{\bar{m}}\right)^{1/(d+4)}
\end{align*}
make the mean contribution, covariance smoothing bias, local covariance variance terms, and subject-level stochastic floor all no larger than $a_n^{1/2}$. Therefore
\begin{align*}
    r_C \lesssim a_n^{1/2} = \left(\frac{\log n}{n}\right)^{1/2}.
\end{align*}
The operator and spectral rates follow from Theorems~\ref{thm:operator_rate} and~\ref{thm:eigen_rates}. Detailed balanced-design calculations are recorded in Appendix~\ref{app:balanced_bandwidth}.
\end{proof}

\begin{corollary}[Uniform ultra-dense regime]\label{cor:ultradense_rate}
Suppose Assumptions~\ref{ass:manifold}--\ref{ass:eigengap} hold, including Assumption~\ref{ass:ep_compatibility} for the bandwidths used in this corollary. Suppose also that $m_i\asymp\bar{m}$ and that a non-negligible fraction of subjects contribute covariance pairs, in the sense that $n_C\asymp n$ and
\begin{align*}
    \sum_{i\in\mathcal{I}_C} m_i(m_i-1)\asymp n\bar{m}^2.
\end{align*}
If
\begin{align*}
    \frac{\bar{m}}{(n/\log n)^{d/4}}\to\infty,
\end{align*}
then there exist bandwidth sequences $h_C$ and $h_\mu$ satisfying
\begin{align*}
    \bar{m}^{-1/d}\ll h_C\ll \left(\frac{\log n}{n}\right)^{1/4}, \quad
    \bar{m}^{-1/d}\ll h_\mu\ll \left(\frac{\log n}{n}\right)^{1/4},
\end{align*}
with
\begin{align*}
    \log(1/h_C)=O(\log n), \quad \log(1/h_\mu)=O(\log n).
\end{align*}
For any such choices satisfying Assumption~\ref{ass:ep_compatibility},
\begin{align*}
    \|\hat{\mathcal{C}}-\mathcal{C}\|_{HS}
    = O_p\!\left[\left(\frac{\log n}{n}\right)^{1/2}\right],
\end{align*}
and the same rate holds for $\|\hat{\mathcal{C}}-\mathcal{C}\|_{\mathrm{op}}$. Consequently, for each fixed $k\le K_0$,
\begin{align*}
    |\hat{\lambda}_k-\lambda_k|
    = O_p\!\left[\left(\frac{\log n}{n}\right)^{1/2}\right],
\end{align*}
and, after choosing the sign so that $\langle \hat{\phi}_k,\phi_k\rangle_{L^2}\ge0$,
\begin{align*}
    \|\hat{\phi}_k-\phi_k\|_{L^2(\mathcal{M})}
    = O_p\!\left[\frac{1}{\delta_k}\left(\frac{\log n}{n}\right)^{1/2}\right].
\end{align*}
Moreover, all local smoothing terms are $o\{(\log n/n)^{1/2}\}$, while the uniform subject-level stochastic floor remains of order $(\log n/n)^{1/2}$.
\end{corollary}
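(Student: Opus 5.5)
The plan is to reduce the corollary to Theorems~\ref{thm:cov_rate}--\ref{thm:eigen_rates} through the balanced upper bounds \eqref{eq:balanced_cov_rate} and \eqref{eq:balanced_mean_rate}, exactly as in the proofs of Corollaries~\ref{cor:nondense_rate} and~\ref{cor:dense_boundary}. Write $a_n:=\log n/n$. The first step is to show that the bandwidth window is nonempty. The hypothesis $\bar m/a_n^{-d/4}\to\infty$ is equivalent to $\bar m^{-1/d}\ll a_n^{1/4}$. Hence one may take, for instance, the geometric mean $h_C=h_\mu=(\bar m^{-1/d}a_n^{1/4})^{1/2}$, which lies strictly inside the window. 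It also satisfies $\log(1/h)=O(\log n)$ as long as $\bar m$ grows at most polynomially in $n$. I would state this polynomial-growth condition explicitly, or otherwise cap $\bar m$ at a polynomial level, since with exponentially large $\bar m$ the lower endpoint of the window could violate this log condition. The bandwidths also tend to zero, so Assumption~\ref{ass:bandwidth} holds once the variance conditions below are verified.

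The second step bounds every term of \eqref{eq:balanced_cov_rate} for an arbitrary admissible choice. The bias terms satisfy $h_C^2\ll a_n^{1/2}$ and $h_\mu^2\ll a_n^{1/2}$ by the upper constraint. The lower constraint gives $\bar m h_C^d\to\infty$, and hence
\begin{align*}
    \frac{a_n}{\bar m^2h_C^{2d}}=\frac{a_n}{(\bar m h_C^d)^2}=o(a_n),\qquad \frac{a_n}{\bar m h_C^d}=o(a_n).
\end{align*}
The same argument with $h_\mu$ gives $a_n/(\bar m h_\mu^d)=o(a_n)$, so $r_\mu\lesssim o(a_n^{1/2})+a_n^{1/2}$. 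This also verifies the logarithmic variance conditions of Assumption~\ref{ass:bandwidth}, since each term is $O(a_n)\to0$. Collecting terms gives $r_C\lesssim a_n^{1/2}$. The only contribution that is not $o(a_n^{1/2})$ is the subject-level floor $(\log n\, V_{C,3})^{1/2}\asymp a_n^{1/2}$, together with the analogous floor $(\log n\,V_{\mu,2})^{1/2}$ inside $r_\mu$. This proves the closing sentence of the corollary: all local smoothing terms are $o(a_n^{1/2})$ while the floor remains of order $a_n^{1/2}$. Theorem~\ref{thm:operator_rate} then yields the Hilbert--Schmidt and operator-norm rates. Theorem~\ref{thm:eigen_rates}, under Assumption~\ref{ass:eigengap}, yields the eigenvalue rate and the sign-aligned eigenfunction rate with the factor $1/\delta_k$.

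I expect the main obstacle to be the compatibility requirement, Assumption~\ref{ass:ep_compatibility}, not the rate algebra. In the ultra-dense regime $\bar m h^d\to\infty$, so $b_{C,h}\asymp n^{-1}$ and $a_{C,h}\asymp 1$. The truncation window for $\tau_{C,n}$ must therefore satisfy $\tau_{C,n}\log n/n\lesssim a_n^{1/2}$, that is, $\tau_{C,n}\lesssim (n/\log n)^{1/2}/\log n$ up to constants, together with $\tau_{C,n}^{1-\beta}=o(a_n^{1/2})$. This window is nonempty only for $\beta$ sufficiently large relative to these exponents. For that reason the corollary is phrased as holding ``for any such choices satisfying Assumption~\ref{ass:ep_compatibility}'', and I would not prove compatibility but take it as a hypothesis. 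The corresponding balanced orders would be cited from Appendix~\ref{app:balanced_bandwidth}. The remaining work is bookkeeping, which I would keep brief.
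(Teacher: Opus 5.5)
Your proposal is correct and follows the paper's proof essentially step for step: with $a_n=\log n/n$, you reduce to the balanced bound \eqref{eq:balanced_cov_rate}, use $h^2=o(a_n^{1/2})$ and $\bar m h^d\to\infty$ to make every local bias and variance term $o(a_n^{1/2})$, keep only the floor $(\log n\,V_{C,3})^{1/2}\asymp a_n^{1/2}$ (and its analogue in $r_\mu$), and then invoke Theorems~\ref{thm:operator_rate} and~\ref{thm:eigen_rates}, taking Assumption~\ref{ass:ep_compatibility} as a hypothesis exactly as the paper does.

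Two side remarks need adjusting, though neither affects the argument. First, no polynomial-growth condition on $\bar m$ is needed: $h=a_n^{1/4}\max\{x_n^{1/2},(\log n)^{-1}\}$ with $x_n:=\bar m^{-1/d}a_n^{-1/4}\to0$ lies in the window for every such $\bar m$ and satisfies $\log(1/h)\le\tfrac14\log n+\log\log n$. Second, $b_{C,h}\asymp n^{-1}$ and $a_{C,h}\asymp1$ require $\bar m h^d\gtrsim\log n$, not merely $\bar m h^d\to\infty$; and under those orders your truncation window for $\tau_{C,n}$ is nonempty for every $\beta>2$, not only for large $\beta$.
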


\begin{proof}
Let $a_n:=\log n/n$. The condition
\begin{align*}
    \frac{\bar{m}}{(n/\log n)^{d/4}}\to\infty
\end{align*}
is equivalent to $\bar{m}^{-1/d}=o(a_n^{1/4})$. Hence there exist bandwidths satisfying
\begin{align*}
    \bar{m}^{-1/d}\ll h_C\ll a_n^{1/4}, \quad
    \bar{m}^{-1/d}\ll h_\mu\ll a_n^{1/4}.
\end{align*}
For such choices, the local covariance smoothing bias and local covariance stochastic terms are $o(a_n^{1/2})$, while the uniform subject-level stochastic floor remains $a_n^{1/2}$. The same bandwidth condition makes the local part of the mean rate no larger than the same floor. Thus
\begin{align*}
    r_C \lesssim a_n^{1/2} = \left(\frac{\log n}{n}\right)^{1/2}.
\end{align*}
The operator and spectral rates follow from Theorems~\ref{thm:operator_rate} and~\ref{thm:eigen_rates}. Detailed balanced-design calculations are recorded in Appendix~\ref{app:balanced_bandwidth}.
\end{proof}

\begin{remark}[Interpretation of the phase transition]\label{rem:phase_transition}
The preceding corollaries show that the sparse-to-dense transition is governed by the intrinsic dimension $d$ of the indexing manifold. Since the present theory is based on uniform convergence and derives operator convergence from the supremum-norm covariance bound, the rates contain logarithmic factors. These logarithmic factors should be viewed as a feature of the present uniform-to-operator proof route, not as a formal lower-bound phenomenon.

In the non-dense regime,
\begin{align*}
    \bar{m}\lesssim (n/\log n)^{d/4},
\end{align*}
the covariance smoothing rate improves as the within-subject sampling frequency $\bar{m}$ increases:
\begin{align*}
    \left(\frac{\log n}{n\bar{m}^2}\right)^{1/(d+2)}.
\end{align*}
At the uniform dense boundary,
\begin{align*}
    \bar{m}\asymp (n/\log n)^{d/4},
\end{align*}
the covariance operator and eigenfunction upper bounds reach the uniform subject-level order $(\log n/n)^{1/2}$. In the uniform ultra-dense regime,
\begin{align*}
    \bar{m}/(n/\log n)^{d/4}\to\infty,
\end{align*}
additional within-subject observations no longer improve the first-order uniform upper rate; they only make the local smoothing bias and local sampling variance negligible relative to the subject-level stochastic floor.

Ignoring logarithmic factors, the transition boundary is
\begin{align*}
    \bar{m}\asymp n^{d/4}.
\end{align*}
When $d=1$, this recovers the classical one-dimensional sparse-to-dense threshold up to logarithmic factors. For manifold-indexed functional data, the intrinsic dimension of $\mathcal{M}$ determines how demanding it is to reach the dense regime.
\end{remark}

\begin{remark}[On optimality]\label{rem:optimality}
The bandwidths above are chosen to optimize the uniform upper bounds derived in Theorems~\ref{thm:cov_rate}--\ref{thm:eigen_rates}. We therefore refer to them as rate-balancing bandwidths. We do not claim minimax optimality in the formal lower-bound sense. Establishing minimax lower bounds over smooth function classes on compact Riemannian manifolds would require a separate argument and is beyond the scope of the present paper.
\end{remark}

\section{Simulation Studies}
\label{sec:simulation}

We conduct simulation studies on two compact Riemannian indexing spaces, the unit circle $\mathbb{S}^1$ and the unit sphere $\mathbb{S}^2$. The purpose is threefold. First, we examine whether the proposed intrinsic FPCA procedure improves estimation when the functional domain has non-Euclidean geometry. Second, we compare the proposed method with a naive coordinate-based Euclidean smoother that intentionally ignores the intrinsic topology or volume structure of the manifold. The simulations are designed to assess the cost of ignoring the intrinsic geometry, rather than to compare against all possible extrinsic or chart-adaptive smoothers. Third, we illustrate the sparse-to-dense transition predicted by the theory under different within-subject sampling frequencies.

In both simulations, the observed data are generated from
\begin{align*}
    Y_{ij}=X_i(s_{ij})+\epsilon_{ij},
    \quad
    \epsilon_{ij}\sim N(0,\sigma^2),
\end{align*}
where $\sigma^2\in\{0.1,0.5\}$. The observation locations $s_{ij}$ are independently sampled from the uniform distribution on the corresponding manifold and are independent of the latent processes and measurement errors. We consider sample sizes $n\in\{50,100,200,400\}$ and use 100 Monte Carlo replications for each combination of the sampling regime, $n$, and $\sigma^2$.

We compare three estimators. The proposed estimator uses intrinsic geodesic localization and the corresponding Riemannian volume-density correction in the mean and covariance smoothers. The naive Euclidean estimator applies the same smoothing procedure after replacing the intrinsic distance by a coordinate-based Euclidean distance. This baseline is deliberately misspecified from the geometric point of view and is included to quantify the effect of ignoring the manifold structure. The comparison should therefore be interpreted as a chart-misspecification diagnostic; extrinsic chordal-distance or chart-adaptive smoothers would be stronger competitors and would be expected to reduce the gap. The oracle estimator is computed from the noise-free latent curves evaluated on the numerical grid and does not involve bandwidth selection; oracle results are reported in Appendix~\ref{app:simulation}.

For the proposed and naive estimators, the mean bandwidth $h_\mu$ and covariance bandwidth $h_C$ are selected by sequential 5-fold cross-validation. We first select $h_\mu$ by cross-validation for the mean function and then select $h_C$ by cross-validation for the covariance estimator conditional on the selected $h_\mu$. To reduce computational cost, each bandwidth is selected by a coarse-to-fine search over the full candidate lattices
\begin{align*}
    \mathcal H_\mu
    =
    \{0.10,0.15,\ldots,1.50\},\quad \mathcal H_C
    =
    \{0.10,0.15,\ldots,1.80\}.
\end{align*}
Specifically, a coarse search is first conducted over 10 approximately equally spaced candidates from the corresponding full lattice, followed by a refined search over the original full lattice in a local neighborhood of the coarse minimizer. If the refined minimizer occurs at the boundary of the refined window, the window is expanded once. The same cross-validation folds are used for the proposed and naive estimators within each Monte Carlo replication.

We report integrated squared errors with respect to the Riemannian volume measure $d\omega$ on the indexing manifold; for $\mathbb{S}^1$ this is the arc-length measure, and for $\mathbb{S}^2$ this is the surface-area measure. For a manifold $\mathcal M$, the mean error is
\begin{align*}
    \mathrm{ISE}_\mu
    =
    \int_{\mathcal M}
    \{\hat\mu(s)-\mu(s)\}^2\,d\omega(s).
\end{align*}
For eigenfunctions, signs are aligned using the $L^2(\mathcal M)$ inner product, and we report
\begin{align*}
    \mathrm{ISE}_{\phi_k}
    =
    \int_{\mathcal M}
    \{\hat\phi_k^{\,\mathrm{al}}(s)-\phi_k(s)\}^2\,d\omega(s),
\end{align*}
where $\hat\phi_k^{\,\mathrm{al}}$ denotes the sign-aligned estimate. Eigenvalue accuracy is measured by $(\hat\lambda_k-\lambda_k)^2$. The main tables report averages over the two noise levels $\sigma^2\in\{0.1,0.5\}$, while Appendix~\ref{app:simulation} reports the detailed results separately for each noise level.

\subsection{Case of \texorpdfstring{$\mathbb{S}^1$}{S1}: Unit Circle}
\label{subsec:sim_circle}

We first examine the proposed FPCA procedure on the unit circle $\mathbb{S}^1$. This example isolates the effect of intrinsic topology in a simple setting where the mismatch of a naive Euclidean smoother is transparent. We parameterize the circle by $s\in[0,2\pi)$ and regard functions on $\mathbb{S}^1$ as periodic functions of $s$. The proposed method uses the intrinsic circular distance
\begin{align*}
    d_{\mathbb{S}^1}(s,t)
    =
    \min\{|s-t|,2\pi-|s-t|\},
    \quad s,t\in[0,2\pi).
\end{align*}
The naive Euclidean baseline uses the ordinary linear distance $|s-t|$ on $[0,2\pi)$ without periodic wrapping and therefore ignores the identification of $0$ and $2\pi$.

The latent process is generated from the finite Karhunen--Lo\`eve expansion
\begin{align}
    X_i(s)
    =
    \mu(s)+\sum_{k=1}^{2}\xi_{ik}\phi_k(s),
    \quad s\in[0,2\pi),
    \label{eq:sim_s1_kl}
\end{align}
where
\begin{align*}
    \mu(s)
    &=
    2\sin(s)+0.5\cos(2s),\\
    \phi_1(s)
    &=
    \frac{1}{\sqrt{\pi}}\sin(s),
    \quad
    \phi_2(s)
    =
    \frac{1}{\sqrt{\pi}}\cos(s),
\end{align*}
and the principal component scores are independent Gaussian random variables,
\begin{align*}
    \xi_{i1}\sim N(0,4),
    \quad
    \xi_{i2}\sim N(0,2).
\end{align*}

To examine the sparse-to-dense transition predicted by the theory for one-dimensional domains, we use three sampling regimes for the number of measurements per subject:
\begin{align*}
    \text{Sparse:}\quad
    &m_i\sim\mathrm{Unif}\{3,4,5\},\\
    \text{DenseBoundary:}\quad
    &m_i=\left\lceil 2n^{1/4}\right\rceil,\\
    \text{UltraDense:}\quad
    &m_i=\max\left\{\left\lceil n^{1/2}\right\rceil,
    \left\lceil 2n^{1/4}\right\rceil+2\right\}.
\end{align*}
The dense-boundary regime corresponds to the theoretical boundary $\bar m\asymp n^{1/4}$ for $d=1$, up to constant factors, while the ultra-dense regime lies above this boundary. Estimation errors are evaluated on an equally spaced grid of 80 points on $[0,2\pi)$ using the arc-length measure.

Table~\ref{tab:s1_main} summarizes the main results for the mean function and the first two eigenfunctions. The proposed method consistently improves the mean estimator relative to the naive Euclidean baseline in all regimes and sample sizes. The improvement becomes clearer as $n$ increases. For example, in the ultra-dense regime with $n=400$, the mean IMSE decreases from 0.035 for the naive Euclidean estimator to 0.028 for the proposed intrinsic estimator. The eigenfunction results show a similar broad trend: although the smallest sample size exhibits finite-sample variability and a few entries are nearly tied, the proposed method tends to be more accurate for the leading eigenfunctions once $n$ is moderately large.

\begin{table}[!ht]
\centering
\caption{
Average IMSE for the mean and eigenfunctions on $\mathbb{S}^1$.
The values are averaged over $\sigma^2\in\{0.1,0.5\}$ and over 100 Monte Carlo replications for each noise level. ``Prop.'' denotes the proposed intrinsic method and ``Naive'' denotes the Euclidean baseline based on the linear distance $|s-t|$.
}
\label{tab:s1_main}
\resizebox{\textwidth}{!}{%
\begin{tabular}{llccccccc}
\toprule
Regime & $n$ & $\bar m$
& \multicolumn{2}{c}{IMSE of $\hat\mu$}
& \multicolumn{2}{c}{IMSE of $\hat\phi_1$}
& \multicolumn{2}{c}{IMSE of $\hat\phi_2$} \\
\cmidrule(lr){4-5}
\cmidrule(lr){6-7}
\cmidrule(lr){8-9}
&&& Prop. & Naive & Prop. & Naive & Prop. & Naive \\
\midrule
\multirow{4}{*}{Sparse}
& 50  & 4.0  & 0.454 & 0.520 & 0.227 & 0.211 & 0.350 & 0.339 \\
& 100 & 4.0  & 0.235 & 0.281 & 0.106 & 0.115 & 0.141 & 0.151 \\
& 200 & 4.0  & 0.129 & 0.160 & 0.047 & 0.059 & 0.071 & 0.076 \\
& 400 & 4.0  & 0.075 & 0.092 & 0.024 & 0.034 & 0.035 & 0.039 \\
\midrule
\multirow{4}{*}{DenseBoundary}
& 50  & 6.0  & 0.373 & 0.423 & 0.151 & 0.146 & 0.186 & 0.172 \\
& 100 & 7.0  & 0.168 & 0.198 & 0.054 & 0.063 & 0.068 & 0.068 \\
& 200 & 8.0  & 0.085 & 0.105 & 0.025 & 0.032 & 0.032 & 0.034 \\
& 400 & 9.0  & 0.047 & 0.056 & 0.012 & 0.017 & 0.016 & 0.018 \\
\midrule
\multirow{4}{*}{UltraDense}
& 50  & 8.0  & 0.317 & 0.359 & 0.115 & 0.117 & 0.146 & 0.144 \\
& 100 & 10.0 & 0.142 & 0.166 & 0.047 & 0.053 & 0.061 & 0.063 \\
& 200 & 15.0 & 0.067 & 0.078 & 0.017 & 0.022 & 0.021 & 0.023 \\
& 400 & 20.0 & 0.028 & 0.035 & 0.008 & 0.011 & 0.010 & 0.011 \\
\bottomrule
\end{tabular}%
}
\end{table}

The results also exhibit the expected sparse-to-dense transition. For a fixed sample size, the estimation errors are largest in the sparse regime, decrease near the dense-boundary regime, and are typically smallest in the ultra-dense regime. This pattern is particularly pronounced for the mean function. Under the proposed method and $n=400$, the average mean IMSE decreases from 0.075 in the sparse regime to 0.047 in the dense-boundary regime and to 0.028 in the ultra-dense regime. This is consistent with the theoretical boundary $\bar m\asymp n^{1/4}$ for $d=1$, although the finite-sample results should be interpreted as qualitative evidence of the transition rather than a direct verification of the asymptotic rates.

The behavior of eigenvalue estimates is more mixed and is reported in Appendix~\ref{app:simulation_circle}. In finite samples, the cross-validated covariance smoother can introduce scalar eigenvalue shrinkage, so the primary empirical advantage of the proposed method on $\mathbb{S}^1$ is most clearly seen in the estimation of the mean function and eigenfunctions, where respecting the periodic geometry removes the artificial boundary effect induced by the linear Euclidean metric.

\subsection{Case of \texorpdfstring{$\mathbb{S}^2$}{S2}: Unit Sphere}
\label{subsec:sim_sphere}

We next examine the proposed FPCA procedure on the unit sphere $\mathbb{S}^2\subset\mathbb{R}^3$. This example evaluates intrinsic smoothing on a two-dimensional curved domain. Unlike the circle, the sphere has both nonzero curvature and coordinate singularities under standard angular parameterizations. For two points $s,t\in\mathbb{S}^2$, the proposed method uses the intrinsic geodesic distance
\begin{align*}
    d_{\mathbb{S}^2}(s,t)
    =
    \arccos(s^\top t).
\end{align*}
The intrinsic kernel weight also includes the spherical volume-density correction $\theta_s(t)^{-1}$, where $\theta_s(t)=\sin(r)/r$, $r=d_{\mathbb{S}^2}(s,t)$, with the continuous extension $\theta_s(s)=1$. The naive angular Euclidean baseline transforms each point to unwrapped angular coordinates $(\varphi,\vartheta)\in[0,2\pi)\times[0,\pi]$, where $\varphi$ is longitude and $\vartheta$ is colatitude, and then uses the ordinary Euclidean distance in this rectangle. This baseline intentionally ignores longitude periodicity, polar singularities, and the spherical volume structure.

The latent process is generated from the finite Karhunen--Lo\`eve expansion
\begin{align}
    X_i(s)
    =
    \mu(s)+\sum_{k=1}^{3}\xi_{ik}\phi_k(s),
    \quad s=(s_1,s_2,s_3)^\top\in\mathbb{S}^2,
    \label{eq:sim_s2_kl}
\end{align}
where
\begin{align*}
    \mu(s)
    &=
    1.5s_3+0.75s_1s_2,\\
    \phi_1(s)
    &=
    \sqrt{\frac{3}{4\pi}}s_3,
    \quad
    \phi_2(s)
    =
    \sqrt{\frac{3}{4\pi}}s_2,
    \quad
    \phi_3(s)
    =
    \sqrt{\frac{3}{4\pi}}s_1.
\end{align*}
The principal component scores are independent Gaussian random variables,
\begin{align*}
    \xi_{i1}\sim N(0,5),
    \quad
    \xi_{i2}\sim N(0,3),
    \quad
    \xi_{i3}\sim N(0,1.5).
\end{align*}

To examine the sparse-to-dense transition for a two-dimensional indexing manifold, we use three sampling regimes:
\begin{align*}
    \text{Sparse:}\quad
    &m_i\sim\mathrm{Unif}\{5,6,7,8\},\\
    \text{DenseBoundary:}\quad
    &m_i=\left\lceil 2\sqrt{\frac{n}{\log n}}\right\rceil,\\
    \text{UltraDense:}\quad
    &m_i=\max\left\{\left\lceil n^{3/4}\right\rceil,
    \left\lceil 2\sqrt{\frac{n}{\log n}}\right\rceil+2\right\}.
\end{align*}
The dense-boundary regime is chosen to reflect the two-dimensional sparse-to-dense boundary up to logarithmic factors, while the ultra-dense regime lies well above this boundary. Estimation errors are evaluated on a Fibonacci grid of 400 approximately uniformly distributed points on $\mathbb{S}^2$, and integrals are approximated using the equal-area weight $4\pi/400$.

Table~\ref{tab:s2_main} summarizes the main results for the mean function and the first three eigenfunctions. The proposed intrinsic estimator consistently improves mean estimation relative to the naive angular Euclidean baseline in all regimes and sample sizes. The improvement is even more substantial for eigenfunction estimation. For example, in the sparse regime with $n=400$, the mean IMSE decreases from 0.247 for the naive estimator to 0.155 for the proposed estimator, while the IMSEs of the first three eigenfunctions decrease from 0.257, 0.294, and 0.135 to 0.046, 0.070, and 0.088, respectively. This indicates that respecting the spherical geometry is particularly important for recovering the principal component directions.

\begin{table}[!ht]
\centering
\caption{
Average IMSE for the mean and eigenfunctions on $\mathbb{S}^2$.
The values are averaged over $\sigma^2\in\{0.1,0.5\}$ and over 100 Monte Carlo replications for each noise level. ``Prop.'' denotes the proposed intrinsic method and ``Naive'' denotes the Euclidean baseline based on the unwrapped angular coordinate chart.
}
\label{tab:s2_main}
\resizebox{\textwidth}{!}{%
\begin{tabular}{llccccccccc}
\toprule
Regime & $n$ & $\bar m$
& \multicolumn{2}{c}{IMSE of $\hat\mu$}
& \multicolumn{2}{c}{IMSE of $\hat\phi_1$}
& \multicolumn{2}{c}{IMSE of $\hat\phi_2$}
& \multicolumn{2}{c}{IMSE of $\hat\phi_3$} \\
\cmidrule(lr){4-5}
\cmidrule(lr){6-7}
\cmidrule(lr){8-9}
\cmidrule(lr){10-11}
&&& Prop. & Naive & Prop. & Naive & Prop. & Naive & Prop. & Naive \\
\midrule
\multirow{4}{*}{Sparse}
& 50  & 6.5  & 0.731 & 1.100 & 0.334 & 0.987 & 0.567 & 1.092 & 0.622 & 0.787 \\
& 100 & 6.5  & 0.416 & 0.651 & 0.189 & 0.813 & 0.370 & 0.927 & 0.402 & 0.478 \\
& 200 & 6.5  & 0.252 & 0.394 & 0.087 & 0.537 & 0.141 & 0.590 & 0.171 & 0.249 \\
& 400 & 6.5  & 0.155 & 0.247 & 0.046 & 0.257 & 0.070 & 0.294 & 0.088 & 0.135 \\
\midrule
\multirow{4}{*}{DenseBoundary}
& 50  & 8.0  & 0.658 & 0.981 & 0.223 & 0.826 & 0.422 & 0.932 & 0.499 & 0.658 \\
& 100 & 10.0 & 0.354 & 0.530 & 0.120 & 0.505 & 0.205 & 0.568 & 0.202 & 0.284 \\
& 200 & 13.0 & 0.183 & 0.277 & 0.053 & 0.216 & 0.083 & 0.254 & 0.083 & 0.123 \\
& 400 & 17.0 & 0.093 & 0.147 & 0.024 & 0.056 & 0.035 & 0.077 & 0.036 & 0.057 \\
\midrule
\multirow{4}{*}{UltraDense}
& 50  & 19.0 & 0.437 & 0.627 & 0.200 & 0.568 & 0.295 & 0.653 & 0.221 & 0.303 \\
& 100 & 32.0 & 0.195 & 0.280 & 0.074 & 0.258 & 0.103 & 0.291 & 0.077 & 0.110 \\
& 200 & 54.0 & 0.091 & 0.127 & 0.032 & 0.071 & 0.043 & 0.088 & 0.033 & 0.048 \\
& 400 & 90.0 & 0.044 & 0.062 & 0.015 & 0.023 & 0.022 & 0.033 & 0.015 & 0.022 \\
\bottomrule
\end{tabular}%
}
\end{table}

The sparse-to-dense pattern is also visible. For a fixed sample size, the proposed estimator becomes more accurate as the within-subject sampling frequency increases. When $n=400$, the average mean IMSE under the proposed method decreases from 0.155 in the sparse regime to 0.093 in the dense-boundary regime and to 0.044 in the ultra-dense regime. The same trend is seen for eigenfunction estimation. For $\phi_1$, the corresponding IMSEs are 0.046, 0.024, and 0.015. These results are consistent with the theoretical prediction that the effective smoothing dimension is the intrinsic dimension of the indexing manifold, which is $d=2$ for $\mathbb{S}^2$.

Table~\ref{tab:s2_eigenvalue} reports the squared errors of the estimated eigenvalues. The proposed method improves the leading eigenvalue estimate in all regimes and sample sizes. The lower eigenvalues show a more mixed pattern: the naive angular Euclidean smoother can have smaller squared error for $\lambda_2$ and $\lambda_3$, even though it performs worse for the corresponding eigenfunctions. This behavior is not unexpected, because scalar eigenvalue errors are sensitive to finite-sample smoothing bias and may occasionally favor a geometrically misspecified smoother that happens to shrink lower spectral components less. Therefore, the main empirical advantage of the proposed method on $\mathbb{S}^2$ is most clearly reflected in the recovery of the mean function and eigenfunctions, rather than in uniformly smaller errors for every scalar eigenvalue.

\begin{table}[!ht]
\centering
\caption{
Average squared errors of the estimated eigenvalues on $\mathbb{S}^2$.
The values are averaged over $\sigma^2\in\{0.1,0.5\}$ and over 100 Monte Carlo replications for each noise level.
}
\label{tab:s2_eigenvalue}
\resizebox{\textwidth}{!}{%
\begin{tabular}{llccccccc}
\toprule
Regime & $n$ & $\bar m$
& \multicolumn{2}{c}{$(\hat\lambda_1-\lambda_1)^2$}
& \multicolumn{2}{c}{$(\hat\lambda_2-\lambda_2)^2$}
& \multicolumn{2}{c}{$(\hat\lambda_3-\lambda_3)^2$} \\
\cmidrule(lr){4-5}
\cmidrule(lr){6-7}
\cmidrule(lr){8-9}
&&& Prop. & Naive & Prop. & Naive & Prop. & Naive \\
\midrule
\multirow{4}{*}{Sparse}
& 50  & 6.5  & 4.019 & 4.478 & 1.910 & 1.683 & 0.533 & 0.385 \\
& 100 & 6.5  & 2.455 & 3.408 & 1.197 & 0.872 & 0.299 & 0.184 \\
& 200 & 6.5  & 1.857 & 3.094 & 0.812 & 0.449 & 0.192 & 0.104 \\
& 400 & 6.5  & 1.231 & 2.477 & 0.538 & 0.201 & 0.126 & 0.063 \\
\midrule
\multirow{4}{*}{DenseBoundary}
& 50  & 8.0  & 2.932 & 3.751 & 1.611 & 1.195 & 0.472 & 0.288 \\
& 100 & 10.0 & 1.767 & 2.767 & 0.875 & 0.461 & 0.231 & 0.131 \\
& 200 & 13.0 & 1.023 & 1.900 & 0.454 & 0.194 & 0.119 & 0.062 \\
& 400 & 17.0 & 0.609 & 1.186 & 0.252 & 0.095 & 0.069 & 0.036 \\
\midrule
\multirow{4}{*}{UltraDense}
& 50  & 19.0 & 2.037 & 2.382 & 0.983 & 0.670 & 0.283 & 0.175 \\
& 100 & 32.0 & 1.044 & 1.519 & 0.420 & 0.243 & 0.118 & 0.070 \\
& 200 & 54.0 & 0.564 & 0.837 & 0.215 & 0.122 & 0.054 & 0.034 \\
& 400 & 90.0 & 0.219 & 0.347 & 0.104 & 0.065 & 0.027 & 0.016 \\
\bottomrule
\end{tabular}%
}
\end{table}

Finally, the bandwidth diagnostics indicate that the cross-validation procedure is stable. For the proposed method, the selected bandwidths are not concentrated at the global boundaries of the search grids, and the proportions of zero-denominator or fallback events are essentially zero in the final fits. These diagnostics reduce the concern that the observed improvement in mean and eigenfunction estimation is driven by a restricted bandwidth grid or numerical degeneracy.

\section{Real Data Analysis} \label{sec:realdata}

We illustrate the proposed framework using head-related transfer function (HRTF) data from the SONICOM HRTF data set, publicly available from the SONICOM data repository at \url{https://ecosystem.sonicom.eu/datafiles/2078}; see also \cite{EngelEtAl2023} for details on the data set and measurement protocol. The SONICOM project provides open resources for spatial hearing, binaural audio, and immersive audio research, with particular emphasis on personalized spatial audio for augmented and virtual reality applications. HRTFs characterize how an incoming sound from a given spatial direction is transformed by the listener's head, torso, and outer ears before reaching the ears. This direction-dependent filtering is essential for spatial audio and binaural rendering, because it provides the acoustic cues used to perceive sound-source direction, externalization, and individualized virtual acoustic scenes. From the viewpoint of the present paper, this data set is especially suitable because the indexing variable is the sound-source direction, which is intrinsically a point on the unit sphere $\mathbb S^2$. Thus the data provide a direct real-data example of scalar-valued functional observations indexed by a compact Riemannian manifold, rather than a coordinate-periodic domain introduced only by modeling convention.

For each subject, we extract a scalar HRTF feature at each measured source direction by computing a band-averaged log-magnitude response over the 3--8 kHz frequency range. This frequency range is used because spectral variation in this band contains important direction-dependent cues induced by the listener's outer-ear and head-related filtering. After preprocessing, the data consist of $399$ subjects observed at a common set of $793$ source directions. In particular, the same direction grid $\{s_1,\ldots,s_{793}\}\subset\mathbb S^2$ is used for all subjects. We write the resulting observation for subject $i$ at direction $s_j$ as
\begin{align*}
    Y_{ij}=X_i(s_j),
    \quad i=1,\ldots,399,
    \quad j=1,\ldots,793,
\end{align*}
where $X_i:\mathbb S^2\to\mathbb R$ is the subject-specific scalar HRTF surface. This data structure matches the setting of the paper: the responses are real-valued functions, while the non-Euclidean geometry enters through the spherical indexing domain.

We compare the proposed intrinsic FPCA estimator with a coordinate-based naive baseline. The proposed estimator uses the spherical geodesic distance
\begin{align*}
    d_{\mathbb S^2}(s,t)
    =
    \arccos\left(\langle s,t\rangle_{\mathbb R^3}\right),
    \quad s,t\in\mathbb S^2,
\end{align*}
together with the volume-density correction for the unit sphere. In normal coordinates on $\mathbb S^2$, this correction is
\begin{align*}
    \theta_s(t)^{-1}
    =
    \frac{d_{\mathbb S^2}(s,t)}{\sin d_{\mathbb S^2}(s,t)},
    \quad s,t\in\mathbb S^2,
\end{align*}
with the limiting value $1$ at $t=s$. The naive baseline applies the same smoothing and FPCA procedure after replacing the spherical geodesic distance by the ordinary Euclidean distance in the azimuth--elevation coordinate chart. Thus the baseline ignores the intrinsic spherical geometry of the source-direction domain, including the nonconstant volume element and coordinate singularities induced by the angular chart.

For both methods, the mean bandwidth $h_\mu$ and covariance bandwidth $h_C$ are selected by a two-stage hold-out validation procedure. This validation scheme follows the same sequential principle as the bandwidth selection used in the simulations, but is adapted to the dense common-direction structure of the SONICOM data. Since all subjects are observed on the same dense direction grid, we construct validation splits at the level of subject-direction observations. Specifically, for each validation split $\ell=1,\ldots,5$, we randomly retain $30\%$ of the subject-direction observations as training entries and use the remaining $70\%$ as withheld validation entries. The same split is used for the proposed method and the naive baseline.

Let $M_{ij,\ell}$ denote the training indicator in the $\ell$th validation split, where $M_{ij,\ell}=1$ if the observation $Y_{ij}$ for subject $i$ at direction $s_j$ is retained as a training entry, and $M_{ij,\ell}=0$ otherwise for subject indices $i=1,\ldots,n$ and source directions $j=1,\ldots,p$. Let
\begin{align*}
    \mathcal H_\ell
    =
    \left\{(i,j)\in \mathbb{N}\times\mathbb{N}:M_{ij,\ell}=0,\quad i=1,\ldots,n,\quad  j=1,\ldots,p\right\}
\end{align*}
be the set of withheld validation entries. In this data analysis $n=399$ and $p=793$.

In the first stage, we select the mean bandwidth. For a candidate bandwidth $h_\mu$, the mean surface at the observed direction $s_j$ is estimated from the training entries by
\begin{align*}
    \hat\mu_{\ell,h_\mu}(s_j)
    =
    \frac{
    \sum_{m=1}^{p}
    K_{h_\mu}(s_j,s_m)
    \sum_{i=1}^{n}
    M_{im,\ell}Y_{im}
    }{
    \sum_{m=1}^{p}
    K_{h_\mu}(s_j,s_m)
    \sum_{i=1}^{n}
    M_{im,\ell}
    },
    \quad j=1,\ldots,p.
\end{align*}
Here $K_{h_\mu}$ denotes the smoothing weight used by the method under consideration: for the proposed method it is the volume-corrected spherical kernel weight based on geodesic distance on $\mathbb S^2$, while for the naive baseline it is the corresponding kernel weight based on Euclidean distance in the azimuth--elevation chart. The first-stage validation criterion is the withheld mean-surface prediction error
\begin{align*}
    \operatorname{CV}_{\mu,\ell}(h_\mu)
    =
    \frac{1}{|\mathcal H_\ell|}
    \sum_{(i,j)\in\mathcal H_\ell}
    \left\{
    Y_{ij}
    -
    \hat\mu_{\ell,h_\mu}(s_j)
    \right\}^2.
\end{align*}
We select
\begin{align*}
    \hat h_{\mu,\ell}
    =
    \argmin_{h_\mu\in\mathcal H_\mu}
    \operatorname{CV}_{\mu,\ell}(h_\mu),
    \quad
    \mathcal H_\mu
    =
    \left\{0.01,0.02,\ldots,0.50\right\},
\end{align*}
where $\mathcal H_\mu$ contains $50$ equally spaced candidate values.

In the second stage, conditional on the selected mean bandwidth $\hat h_{\mu,\ell}$, we select the covariance bandwidth. For each candidate bandwidth $h_C$, we estimate the covariance surface from the training entries using the mean estimate $\hat\mu_{\ell,\hat h_{\mu,\ell}}$. This is the training-entry analogue of the covariance estimator defined in \eqref{eq:cov_est}, with $M_{ij,\ell}$ used to determine which observations enter the ordered within-subject pairs. Let $\hat C_{\ell,h_C}$ denote the resulting covariance estimator, and let
\begin{align*}
    \hat\phi_{1,\ell,h_C},\ldots,\hat\phi_{K,\ell,h_C}
\end{align*}
be the leading empirical eigenfunctions of the corresponding covariance operator. The subject scores are then estimated from the training directions of each subject by the integration-based projection formula in \eqref{eq:dense_score}, using the same discrete spherical quadrature weights restricted to the retained directions, yielding the rank-$K$ reconstruction
\begin{align*}
    \hat X_{i,\ell,h_C}^{(K)}(s)
    =
    \hat\mu_{\ell,\hat h_{\mu,\ell}}(s)
    +
    \sum_{k=1}^{K}
    \hat\xi_{ik,\ell,h_C}
    \hat\phi_{k,\ell,h_C}(s),
    \quad s\in\mathbb S^2.
\end{align*}
The second-stage validation criterion is the rank-$5$ reconstruction error over the withheld entries,
\begin{align*}
    \operatorname{CV}_{C,\ell}(h_C)
    =
    \frac{1}{|\mathcal H_\ell|}
    \sum_{(i,j)\in\mathcal H_\ell}
    \left\{
    Y_{ij}
    -
    \hat X_{i,\ell,h_C}^{(5)}(s_j)
    \right\}^2.
\end{align*}
We select
\begin{align*}
    \hat h_{C,\ell}
    =
    \argmin_{h_C\in\mathcal H_C}
    \operatorname{CV}_{C,\ell}(h_C),
    \quad
    \mathcal H_C
    =
    \left\{0.02,0.04,\ldots,1.00\right\},
\end{align*}
where $\mathcal H_C$ also contains $50$ equally spaced candidate values. Thus the first stage tunes the local smoothing needed for the mean surface, while the second stage tunes the additional smoothing needed for covariance estimation and low-rank reconstruction.

After repeating this procedure over the five validation splits, we use the split-specific selected bandwidths for the validation errors reported below and use the median selected bandwidths as the final bandwidths for fitting the mean surface, covariance surface, and eigenfunctions on the full data set. The selected bandwidths are stable across the five splits. The median selected bandwidths are
\begin{align*}
    (\hat h_\mu,\hat h_C)=(0.18,0.58)
    \quad&\text{for the proposed method},\\
    (\hat h_\mu,\hat h_C)=(0.19,0.62)
    \quad&\text{for the naive baseline}.
\end{align*}
The selected covariance bandwidths are larger than the selected mean bandwidths, reflecting the greater smoothing needed for covariance estimation.

We evaluate predictive reconstruction accuracy using the same withheld entries. For a fixed reconstruction rank $K$, the FPCA reconstruction of the $i$th HRTF surface in the $\ell$th validation split is
\begin{align*}
    \hat X_{i,\ell}^{(K)}(s)
    =
    \hat\mu_{\ell}(s)
    +
    \sum_{k=1}^{K}
    \hat\xi_{ik,\ell}^{(K)}
    \hat\phi_{k,\ell}(s),
    \quad s\in\mathbb S^2,
\end{align*}
where $\hat\mu_{\ell}$ and $\hat\phi_{k,\ell}$ are fitted from the training entries in the $\ell$th split using the selected bandwidths $\hat h_{\mu,\ell}$ and $\hat h_{C,\ell}$, and the subject scores $\hat\xi_{ik,\ell}^{(K)}$ are estimated from the observed training directions of subject $i$. The split-specific reconstruction error is
\begin{align}
    \operatorname{RE}_{\ell}(K)
    =
    \frac{1}{|\mathcal H_\ell|}
    \sum_{(i,j)\in\mathcal H_\ell}
    \left\{
    Y_{ij}
    -
    \hat X_{i,\ell}^{(K)}(s_j)
    \right\}^2,
    \quad K\in\{1,2,3,5,10\}.
    \label{eq:realdata_reconstruction_error}
\end{align}
Table~\ref{tab:realdata_reconstruction} reports the mean of \eqref{eq:realdata_reconstruction_error} over the five random validation splits, together with the corresponding standard deviation. The proposed estimator yields consistently smaller hold-out errors than the naive coordinate-based baseline for all values of $K$ considered. The improvement is modest, as expected for this densely sampled data set, but it is systematic across $K=1,2,3,5,10$. Because validation entries are withheld at the subject-direction level rather than at the subject level, these errors should be interpreted as within-surface directional reconstruction errors, not as prediction errors for entirely new subjects.

\begin{table}[!ht]
\centering
\caption{
Hold-out reconstruction errors for the SONICOM HRTF sphere-indexed FPCA analysis. Errors are mean squared errors computed over withheld subject-direction entries according to \eqref{eq:realdata_reconstruction_error}; they evaluate within-subject directional reconstruction, not prediction for entirely new subjects. The reported values are averages over five random validation splits, with descriptive across-split standard deviations in parentheses; these standard deviations are not used as standard errors for an inferential significance test.
}
\label{tab:realdata_reconstruction}
\begin{tabular}{lccccc}
\toprule
Method & $K=1$ & $K=2$ & $K=3$ & $K=5$ & $K=10$ \\
\midrule
Proposed
& 2.095 (0.005)
& 1.408 (0.002)
& 1.054 (0.003)
& 0.819 (0.006)
& 0.657 (0.003) \\
Naive
& 2.102 (0.007)
& 1.417 (0.002)
& 1.066 (0.003)
& 0.844 (0.009)
& 0.671 (0.005) \\
\bottomrule
\end{tabular}
\end{table}

The improvement is most visible for moderate reconstruction ranks. For example, at $K=5$, the hold-out mean squared error decreases from $0.844$ for the naive baseline to $0.819$ for the proposed intrinsic method. At $K=10$, the corresponding error decreases from $0.671$ to $0.657$. These differences should not be interpreted as a large prediction gain; rather, they indicate that respecting the spherical geometry yields a slightly more coherent reconstruction while remaining competitive with a coordinate-based smoother on a densely sampled direction grid.

Table~\ref{tab:realdata_fve} reports the fraction of variance explained by the leading principal components. The proposed and naive methods give similar low-dimensional decompositions. Under the proposed method, the first component explains approximately $82.2\%$ of the total variation, the first three components explain approximately $95.5\%$, the first five components explain approximately $97.7\%$, and the first ten components explain approximately $99.4\%$. Thus the main subject-to-subject variation in the scalar HRTF surfaces is well summarized by a small number of spherical FPCA components.

\begin{table}[!ht]
\centering
\caption{
Fraction of variance explained in the SONICOM HRTF sphere-indexed FPCA analysis. Values are averaged over five validation splits.
}
\label{tab:realdata_fve}
\begin{tabular}{lccccc}
\toprule
Method & $K=1$ & $K=2$ & $K=3$ & $K=5$ & $K=10$ \\
\midrule
Proposed & 0.822 & 0.912 & 0.955 & 0.977 & 0.994 \\
Naive    & 0.818 & 0.910 & 0.951 & 0.972 & 0.992 \\
\bottomrule
\end{tabular}
\end{table}

Figure~\ref{fig:realdata_mean} displays the estimated mean HRTF surfaces from the proposed intrinsic method and the naive coordinate-based baseline. Each surface is shown from four viewing directions of the source sphere, corresponding to front, right, back, and left views. The rows compare the proposed and naive estimators using a common color scale. The two estimated mean surfaces are visually similar, which is consistent with the small difference in mean-surface validation errors and with the dense coverage of the SONICOM source directions. Both methods show stronger attenuation in a lower-elevation region and relatively larger values over upper and lateral source directions. Nevertheless, the proposed estimator has the advantage of being coordinate-free and of using geodesic neighborhoods on the sphere rather than neighborhoods defined by an azimuth--elevation chart.

\begin{figure}[!ht]
\centering
\includegraphics[width=\textwidth]{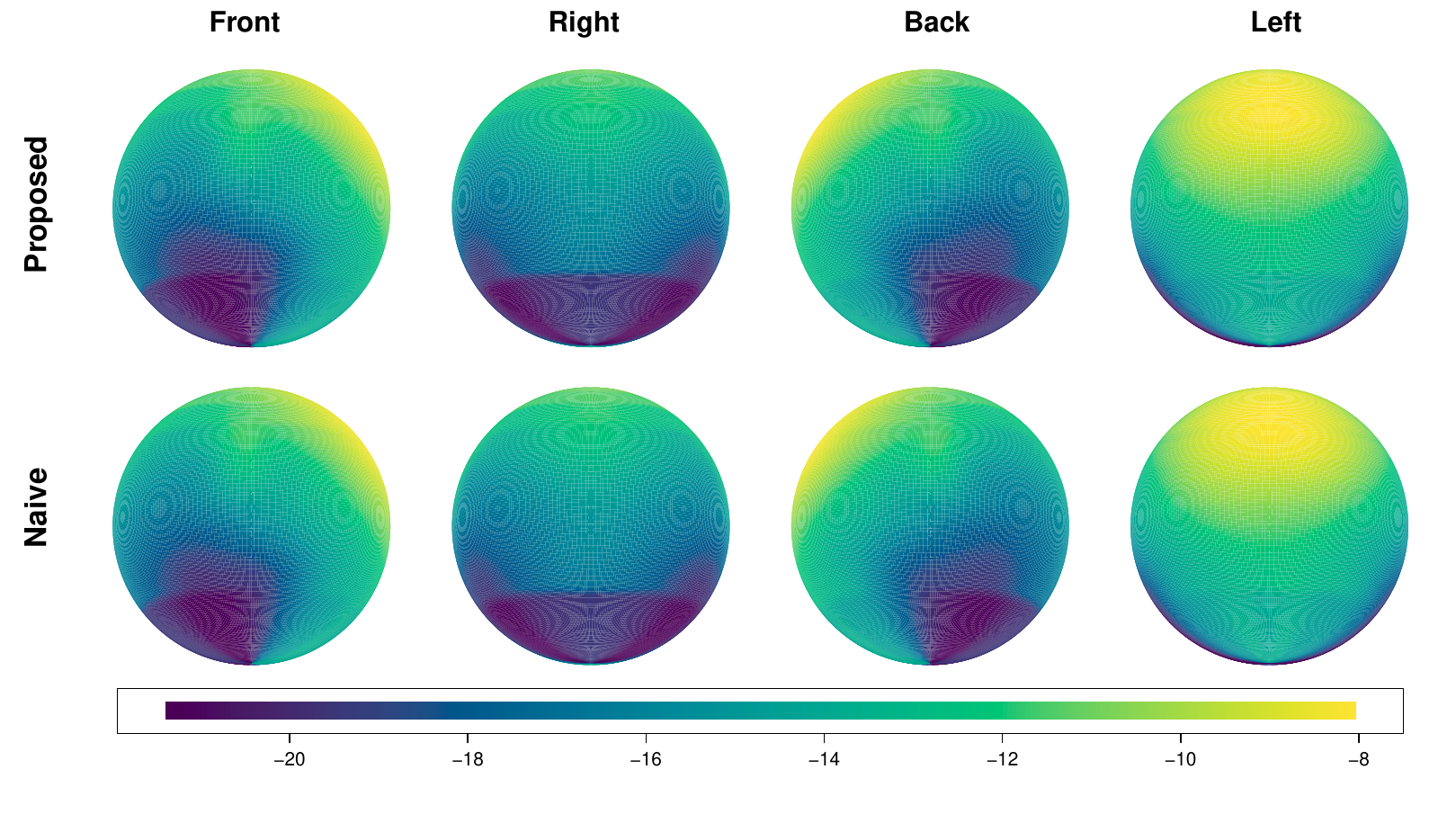}
\caption{
Estimated mean HRTF surfaces for the SONICOM data, displayed on the unit sphere from four viewing directions. The columns correspond to front, right, back, and left views, and the rows correspond to the proposed intrinsic sphere FPCA estimator and the naive azimuth--elevation FPCA baseline. A common color scale is used for both methods. The proposed method uses spherical geodesic localization and the spherical volume-density correction, whereas the naive baseline smooths in the azimuth--elevation coordinate chart.
}
\label{fig:realdata_mean}
\end{figure}

Figure~\ref{fig:realdata_eigenfunctions} shows the first three eigenfunctions estimated by the proposed intrinsic sphere FPCA. Each row corresponds to one eigenfunction, and each column shows the same eigenfunction from a different viewing direction. As usual in FPCA, the signs of eigenfunctions are arbitrary, so the interpretation concerns contrast patterns rather than the absolute sign. The first eigenfunction represents the dominant global mode of subject-to-subject variation and is nearly constant over most of the source sphere, with a localized contrast near lower elevation. The second eigenfunction mainly captures a lateral directional contrast, separating source directions across azimuthal regions. The third eigenfunction captures an elevation-related contrast, with variation between lower and upper portions of the source sphere. Together with the FVE values in Table~\ref{tab:realdata_fve}, these plots show that the leading spherical eigenfunctions provide a compact and interpretable summary of the main variation in the HRTF data.

\begin{figure}[!ht]
\centering
\includegraphics[width=\textwidth]{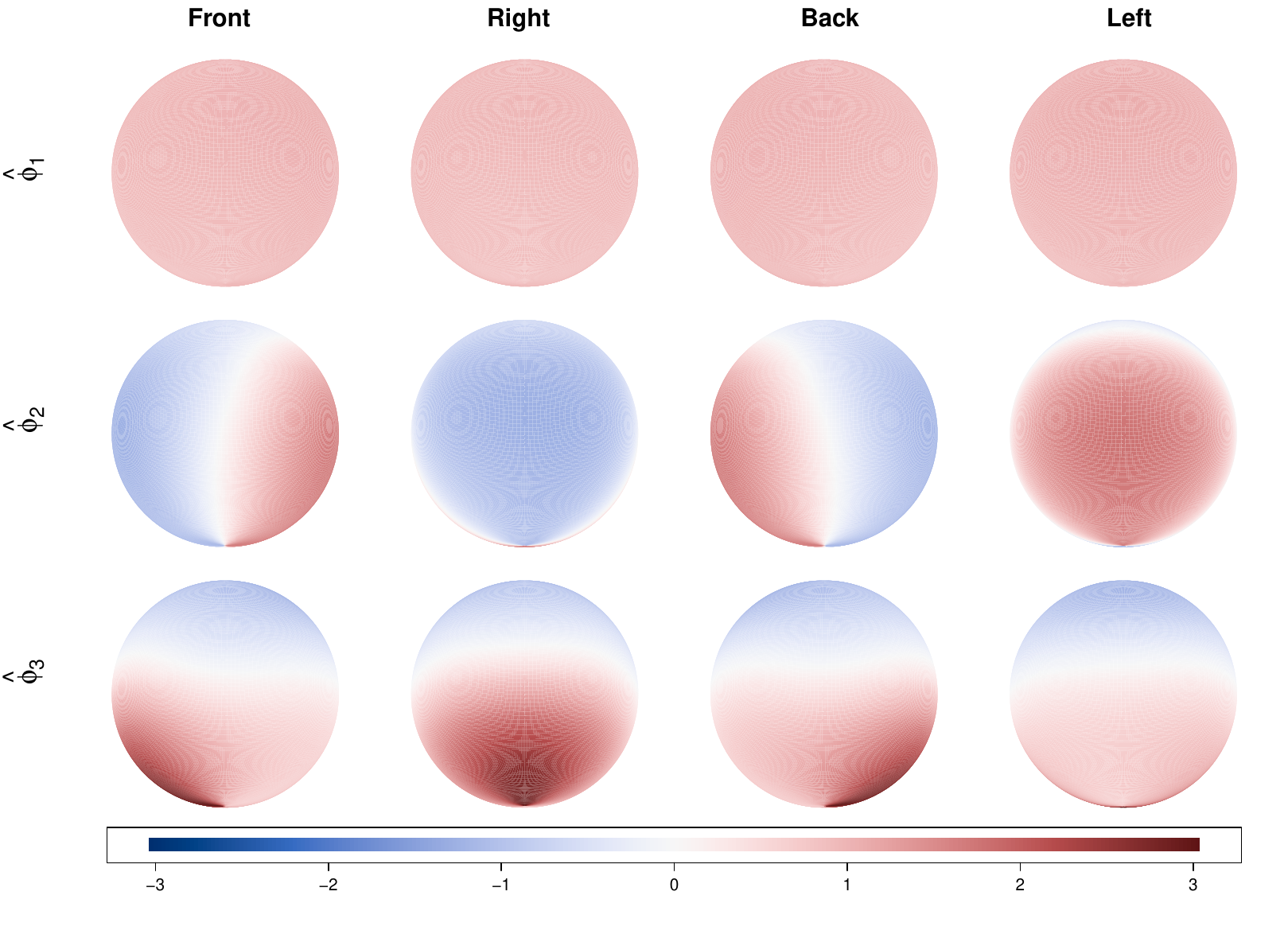}
\caption{
First three eigenfunctions estimated by the proposed intrinsic sphere FPCA for the SONICOM HRTF data, displayed on the unit sphere from four viewing directions. The columns correspond to front, right, back, and left views, and the rows correspond to $\hat\phi_1$, $\hat\phi_2$, and $\hat\phi_3$. A common symmetric color scale is used across the three eigenfunctions.
}
\label{fig:realdata_eigenfunctions}
\end{figure}

Overall, the SONICOM analysis illustrates the proposed methodology on a real data set whose indexing domain is genuinely spherical. Unlike coordinate-periodic examples where the manifold interpretation may depend on a modeling convention, source directions for HRTFs are naturally points on $\mathbb S^2$. The proposed method gives modest but consistent improvements in hold-out reconstruction error relative to the coordinate-based baseline, while producing stable low-dimensional FPCA summaries. The analysis therefore supports the practical relevance of using intrinsic geodesic localization and volume-density correction for scalar-valued functional data indexed by a Riemannian manifold.

\section{Discussion} \label{sec:discussion}

This paper develops a sparse-to-dense FPCA framework for scalar-valued functional data indexed by a compact Riemannian manifold. The setting is different from manifold-valued functional data: the random functions take real values, while the manifold appears as the intrinsic indexing domain. This distinction allows the covariance operator to remain a compact self-adjoint integral operator on $L^2(\mathcal{M})$, while the geometry of $\mathcal{M}$ enters through geodesic distances, the Riemannian volume measure, and the local smoothing construction.

The proposed estimators extend the sparse-to-dense framework of \cite{ZhangWang2016} from Euclidean time domains to compact Riemannian indexing domains. The intrinsic kernel weights use geodesic balls and the Pelletier-type volume-density correction, so that local smoothing calculations reduce, to leading order, to Euclidean tangent-space integrals. This yields uniform convergence rates for the mean and covariance estimators with bandwidth powers determined by the intrinsic dimension $d$ of $\mathcal{M}$. The covariance estimator uses ordered within-subject pairs of distinct observations, thereby avoiding contamination by the measurement error variance. The resulting covariance-operator and spectral convergence rates then follow from the uniform covariance bound and standard perturbation arguments for compact self-adjoint operators.

A technical feature of the analysis is the use of VC-type empirical-process conditions for intrinsic kernel classes. This replaces the Lipschitz-kernel mesh approximation commonly used in Euclidean-domain uniform smoothing arguments by an entropy-based formulation. The resulting framework is more flexible with respect to kernel regularity, since it can accommodate non-Lipschitz kernels, such as uniform-type radial kernels, whenever the induced volume-corrected Riemannian kernel class and the clustered empirical-process classes satisfy the stated entropy and compatibility requirements. The empirical-process compatibility assumption makes this requirement explicit for the clustered and ordered-pair processes that arise in sparse functional data.

The balanced-design corollaries show how the sparse-to-dense transition changes with the intrinsic dimension of the indexing manifold. In the covariance problem, the local smoothing variance contains the factor $h_C^{2d}$, reflecting smoothing over $\mathcal{M}\times\mathcal{M}$. Consequently, the dense boundary occurs at $\bar{m}\asymp(n/\log n)^{d/4}$ under uniform convergence. When $d=1$, this boundary reduces, up to logarithmic factors, to the classical sparse-to-dense threshold $\bar{m}\asymp n^{1/4}$ in one-dimensional FPCA \citep{ZhangWang2016}. For higher-dimensional manifold domains, the boundary illustrates the increasing sampling demand required to enter the dense regime. In the ultra-dense regime, the local smoothing terms can be made negligible, but the subject-level stochastic floor remains of order $(\log n/n)^{1/2}$ under the supremum norm.

The numerical studies support these theoretical conclusions. In the unit-circle and unit-sphere simulations, the proposed intrinsic smoother improves mean and eigenfunction estimation relative to coordinate-based Euclidean baselines that ignore the topology or volume structure of the indexing manifold. The improvement is especially pronounced on $\mathbb S^2$, where the naive angular-coordinate baseline is affected by longitude periodicity, polar singularities, and the nonconstant spherical volume density. The simulations also display the predicted sparse-to-dense behavior: for a fixed sample size, estimation errors decrease as the within-subject sampling frequency moves from the sparse regime toward and beyond the dense boundary. The scalar eigenvalue errors are more mixed in finite samples, reflecting their sensitivity to covariance-smoothing bias, but the recovery of the mean function and principal component directions consistently favors the intrinsic approach.

The SONICOM HRTF application illustrates the framework on a real data set whose indexing domain is genuinely spherical. In this analysis, each subject is represented by a scalar HRTF feature indexed by sound-source directions on $\mathbb S^2$. Thus the manifold structure is not introduced as a modeling convenience for periodic covariates, but is inherent in the physical meaning of the domain. The proposed intrinsic estimator gives modest but consistent improvements in hold-out reconstruction error relative to an azimuth--elevation coordinate baseline, while producing stable and interpretable low-dimensional FPCA summaries. The improvement is not dramatic, which is expected because the source directions are densely sampled and the coordinate-based smoother is already competitive. Nevertheless, the analysis demonstrates the practical advantage of using a coordinate-free construction based on spherical geodesic neighborhoods and the corresponding volume-density correction.

Several extensions remain open. First, the present paper assumes that $\mathcal{M}$ is compact and without boundary; manifolds with boundary would require boundary correction or boundary-aware intrinsic smoothing, and the corresponding bias and empirical-process arguments would need to be modified. Second, the theory is developed for deterministic bandwidth sequences satisfying the stated compatibility conditions; data-adaptive bandwidth selection, although important in applications, would require additional arguments to account for the randomness of the selected bandwidths. Third, the current formulation uses scalar bandwidths with respect to the intrinsic Riemannian distance. In some applications, different directions on the manifold, or different factors of a product manifold, may exhibit different smoothness scales. Extending the theory to anisotropic intrinsic bandwidths would be useful, but would require a formulation that remains geometrically meaningful under changes of coordinates. Fourth, the rates in this paper are upper bounds, and establishing minimax lower bounds for manifold-indexed FPCA would require separate packing or testing arguments on compact Riemannian manifolds, which is beyond the scope of the present work. Fifth, the current framework treats scalar-valued functions indexed by a manifold; extending sparse-to-dense FPCA to genuinely manifold-valued functional observations would involve a different geometry, since both the domain and the range would carry nonlinear structure.

Overall, the results show that the main sparse-to-dense phenomena of functional principal component analysis persist for scalar-valued functional data indexed by compact Riemannian manifolds, once the smoothing procedure and empirical-process arguments are adapted to the intrinsic geometry of the domain. The theory, simulations, and real-data analysis together indicate that intrinsic localization is most valuable when the geometry of the indexing domain affects local neighborhoods, boundary identifications, or volume distortion. Even when prediction errors are similar to those of a coordinate-based baseline, the intrinsic formulation provides a geometrically coherent FPCA representation on curved or periodic domains.

\section*{Acknowledgements}

Chang Jun Im was supported by the National Research Foundation of Korea grant funded by the Korea government (MSIT) (No.~RS-2025-00515381). Jeong Min Jeon was supported by the National Research Foundation of Korea grant funded by the Korea government (MSIT) (No.~RS-2023-00211910).

\newpage

\appendix

{\centering \LARGE{\bf Appendix}}

\section{Geometric Preliminaries}\label{app:preliminaries}

This appendix collects the Riemannian geometric facts used in the proofs. For standard references on Riemannian geometry, see \cite{DoCarmo1992} and \cite{Chavel2006}.

\subsection{Exponential map, injectivity radius, and volume density}

Let $(\mathcal{M},g)$ be a compact, connected, $d$-dimensional smooth Riemannian manifold without boundary, as in Assumption~\ref{ass:manifold}. Throughout the definitions in this subsection, fix $s\in\mathcal{M}$; later uniform statements allow $s$ to vary over $\mathcal{M}$.

Let $T_s\mathcal{M}$ denote the tangent space at $s$, equipped with the inner product $\langle\cdot,\cdot\rangle_s$ induced by the metric $g$. The corresponding norm is denoted by
\begin{align*}
    \|v\|_s := \langle v,v\rangle_s^{1/2}, \quad v\in T_s\mathcal{M}.
\end{align*}
For $r>0$, write
\begin{align*}
    B_s(0,r) := \{v\in T_s\mathcal{M}:\|v\|_s<r\}
\end{align*}
for the open ball in the tangent space $T_s\mathcal{M}$. We also write
\begin{align*}
    B_{\mathcal{M}}(s,r) := \{u\in\mathcal{M}:d_{\mathcal{M}}(s,u)<r\}
\end{align*}
for the open geodesic ball in $\mathcal{M}$ centered at $s$ with radius $r$.

For $v\in T_s\mathcal{M}$, let $\gamma_{s,v}$ be the geodesic satisfying $\gamma_{s,v}(0)=s$ and $\dot{\gamma}_{s,v}(0)=v$. The exponential map at $s$ is defined by
\begin{align*}
    \Exp_s(v) := \gamma_{s,v}(1), \quad v\in T_s\mathcal{M}.
\end{align*}
Since $\mathcal{M}$ is compact, it is geodesically complete by the Hopf--Rinow theorem; hence $\Exp_s(v)$ is well defined for every $v\in T_s\mathcal{M}$.

The injectivity radius at $s$, denoted by $\operatorname{inj}(s)$, is defined by
\begin{align*}
    \operatorname{inj}(s) := \sup\bigl\{r>0 : \Exp_s \text{ is a diffeomorphism from } B_s(0,r) \text{ onto its image}\bigr\}.
\end{align*}
Since $\mathcal{M}$ is compact and without boundary, its global injectivity radius
\begin{align*}
    \iota_{\mathcal{M}} := \inf_{p\in\mathcal{M}}\operatorname{inj}(p)
\end{align*}
is strictly positive. We use the same localization radius $h_0\in(0,\iota_{\mathcal{M}})$ fixed in the main text. For $u\in B_{\mathcal{M}}(s,h_0)$, the logarithmic map at $s$ is defined by
\begin{align*}
    \Log_s(u) := \Exp_s^{-1}(u) \in T_s\mathcal{M},
\end{align*}
and
\begin{align*}
    d_{\mathcal{M}}(s,u) = \|\Log_s(u)\|_s, \quad u\in B_{\mathcal{M}}(s,h_0).
\end{align*}

We next define the volume density function. Fix an orthonormal basis of $T_s\mathcal{M}$ and identify $T_s\mathcal{M}$ with $\mathbb{R}^d$ through this basis. For $u=\Exp_s(v)$ with $\|v\|_s<\iota_{\mathcal{M}}$, let $\theta_s(u)$ denote the Jacobian density, equivalently the absolute Jacobian determinant, of the exponential map at $v=\Log_s(u)$. Thus, in normal coordinates centered at $s$, the Riemannian volume measure satisfies
\begin{align*}
    \mathrm{d}v_g(u) = \theta_s(u)\,\mathrm{d}v, \quad u=\Exp_s(v), \quad \|v\|_s<\iota_{\mathcal{M}},
\end{align*}
where $\mathrm{d}v$ is the Lebesgue measure on $T_s\mathcal{M}$ induced by the inner product $\langle\cdot,\cdot\rangle_s$. The value of $\theta_s(u)$ is independent of the chosen orthonormal basis.

\begin{lemma}[Uniform bounds for the volume density]\label{lem:theta_bounds}
Let $h_0\in(0,\iota_{\mathcal{M}})$. Under Assumption~\ref{ass:manifold}, there exist constants $0<c_\theta<C_\theta<\infty$ such that
\begin{align*}
    c_\theta \le \theta_s(u) \le C_\theta, \quad s,u\in\mathcal{M}, \quad d_{\mathcal{M}}(s,u)\le h_0.
\end{align*}
\end{lemma}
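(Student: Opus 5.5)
The plan is to obtain the bounds from continuity and positivity of $\theta$ on a compact set. Consider
\begin{align*}
    \mathcal{D}:=\{(s,v): s\in\mathcal{M},\ v\in T_s\mathcal{M},\ \|v\|_s\le h_0\}\subset T\mathcal{M}.
\end{align*}
This is a closed disk bundle of radius $h_0$ over the compact base $\mathcal{M}$. In local trivializations it is homeomorphic to (compact chart piece) $\times$ (closed Euclidean ball), so $\mathcal{D}$ is compact.

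Next I define $\Theta(s,v):=|\det d(\Exp_s)_v|$. The determinant is computed with respect to orthonormal frames of $T_s\mathcal{M}$ and of $T_{\Exp_s(v)}\mathcal{M}$; its absolute value does not depend on the frames chosen, as noted in the appendix. By construction $\theta_s(u)=\Theta(s,\Log_s(u))$ whenever $d_{\mathcal{M}}(s,u)\le h_0$. The map $(s,v)\mapsto\Exp_s(v)$ is smooth on $T\mathcal{M}$, since geodesics depend smoothly on initial data. Using smooth local orthonormal frames, for instance Gram--Schmidt applied to coordinate frames, $\Theta$ is then locally the absolute value of a smooth function, so $\Theta$ is continuous on $T\mathcal{M}$.

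For positivity, fix $(s,v)\in\mathcal{D}$. Because $\|v\|_s\le h_0<\iota_{\mathcal{M}}\le\operatorname{inj}(s)$, the map $\Exp_s$ is a diffeomorphism of $B_s(0,r)$ onto its image for some $r$ with $h_0<r<\operatorname{inj}(s)$. Its differential at $v$ is therefore invertible, and $\Theta(s,v)>0$. A continuous positive function on the compact set $\mathcal{D}$ attains a positive minimum $c_\theta$ and a finite maximum $C_\theta$. If the maximum equals the minimum, one may simply enlarge $C_\theta$ to get strict inequality. Since $\Theta\equiv1$ at $v=0$, we also have $c_\theta\le1\le C_\theta$.

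I expect the only delicate point to be the regularity step: confirming that $\theta_s(u)$, viewed jointly in $(s,v)$, is continuous and is well defined independently of the chart. Choosing local orthonormal frames and invoking smooth dependence of $\Exp$ on $(s,v)$ handles this. Another route is through Jacobi fields, since $\Theta(s,v)$ is the determinant of the Jacobi-field matrix along $\gamma_{s,v}$, which depends continuously on the initial data. One could also cite the standard continuity of the volume density function from \cite{Chavel2006}.
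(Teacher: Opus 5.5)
Your proposal is correct and follows essentially the same route as the paper: compactness of the closed $h_0$-disk bundle in $T\mathcal{M}$, continuity and strict positivity of the Jacobian of $(s,v)\mapsto\Exp_s(v)$ there because $\Exp_s$ is nonsingular below the injectivity radius, the extreme value theorem, and finally the substitution $v=\Log_s(u)$ with $\|v\|_s=d_{\mathcal{M}}(s,u)\le h_0$. You also supply two details the paper leaves implicit: joint continuity via local orthonormal frames, and enlarging $C_\theta$ to obtain the strict inequality $c_\theta<C_\theta$, which is needed for instance on a flat torus where $\theta\equiv1$.
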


\begin{proof}
Consider the closed tangent disk bundle
\begin{align*}
    \mathcal{T}_{h_0} := \{(p,v):p\in\mathcal{M},\ v\in T_p\mathcal{M},\ \|v\|_p\le h_0\}.
\end{align*}
Since $\mathcal{M}$ is compact and $h_0<\infty$, the set $\mathcal{T}_{h_0}$ is compact as a subset of the tangent bundle $T\mathcal{M}$. Moreover, since $h_0<\iota_{\mathcal{M}}\le\operatorname{inj}(p)$ for every $p\in\mathcal{M}$, the exponential map is nonsingular on $\mathcal{T}_{h_0}$. Therefore, the Jacobian density
\begin{align*}
    (p,v)\mapsto \theta_p(\Exp_p(v))
\end{align*}
is smooth and strictly positive on $\mathcal{T}_{h_0}$. By compactness and the extreme value theorem, there exist constants $0<c_\theta<C_\theta<\infty$ such that
\begin{align*}
    c_\theta \le \theta_p(\Exp_p(v)) \le C_\theta, \quad (p,v)\in\mathcal{T}_{h_0}.
\end{align*}
Now take any $s,u\in\mathcal{M}$ with $d_{\mathcal{M}}(s,u)\le h_0$. Since $h_0<\iota_{\mathcal{M}}\le\operatorname{inj}(s)$, there exists a unique vector $v=\Log_s(u)\in T_s\mathcal{M}$ satisfying
\begin{align*}
    u=\Exp_s(v), \quad \|v\|_s=d_{\mathcal{M}}(s,u)\le h_0.
\end{align*}
Substituting $p=s$ and this $v$ into the preceding bound gives
\begin{align*}
    c_\theta \le \theta_s(u) \le C_\theta, \quad s,u\in\mathcal{M}, \quad d_{\mathcal{M}}(s,u)\le h_0,
\end{align*}
which is the claim.
\end{proof}

\subsection{Kernel moments and the normal-coordinate identity}

The following identities are used repeatedly in the bias calculations. Recall that the radial kernel $K$ is normalized by
\begin{align*}
    \int_{\mathbb{R}^d} K(\|z\|)\,\mathrm{d}z = 1.
\end{align*}
The same identities hold in any tangent space $T_s\mathcal{M}$ after choosing an orthonormal basis, since the Lebesgue measure induced by $\langle\cdot,\cdot\rangle_s$ is invariant under orthogonal changes of basis.

\begin{lemma}[Moments of radial kernels]\label{lem:kernel_moments}
Under Assumption~\ref{ass:kernel},
\begin{align*}
    \int_{\mathbb{R}^d} K(\|z\|)\,\mathrm{d}z = 1, \quad
    \int_{\mathbb{R}^d} z\,K(\|z\|)\,\mathrm{d}z = 0.
\end{align*}
Moreover,
\begin{align*}
    \int_{\mathbb{R}^d} z z^\top K(\|z\|)\,\mathrm{d}z = \kappa_K I_d,
\end{align*}
where
\begin{align*}
    \kappa_K := \frac{1}{d}\int_{\mathbb{R}^d}\|z\|^2 K(\|z\|)\,\mathrm{d}z < \infty.
\end{align*}
\end{lemma}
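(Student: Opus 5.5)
The three identities follow from radial symmetry together with the support and boundedness of $K$ in Assumption~\ref{ass:kernel}; the work is mostly checking finiteness. The first identity is just the normalization in Assumption~\ref{ass:kernel}. Since $K$ is bounded and supported on $[0,1]$, the map $z\mapsto K(\|z\|)$ is bounded and vanishes outside the closed unit ball of $\mathbb{R}^d$. Hence $z\mapsto \|z\|^p K(\|z\|)$ is integrable for every $p\ge0$, with
\begin{align*}
    \int_{\mathbb{R}^d}\|z\|^pK(\|z\|)\,\mathrm{d}z\le \int_{\mathbb{R}^d}K(\|z\|)\,\mathrm{d}z=1.
\end{align*}
In particular every first and second moment in the statement is well defined, and $\kappa_K\le 1/d<\infty$.

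For the first-moment identity, use the reflection $z\mapsto -z$. It preserves Lebesgue measure and leaves $K(\|z\|)$ unchanged, so
\begin{align*}
    \int z\,K(\|z\|)\,\mathrm{d}z=-\int z\,K(\|z\|)\,\mathrm{d}z,
\end{align*}
and the integral is zero.

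For the second-moment matrix, let $M:=\int zz^\top K(\|z\|)\,\mathrm{d}z$. The off-diagonal entries vanish: for $a\ne b$, the coordinate reflection $z_a\mapsto -z_a$ preserves $\|z\|$ and flips the sign of $z_az_b$, so $M_{ab}=0$. The diagonal entries are all equal: the coordinate permutation swapping $z_a$ and $z_b$ is orthogonal and preserves $\|z\|$, so $M_{aa}=M_{bb}$. Therefore $M=cI_d$. Taking the trace gives
\begin{align*}
    dc=\operatorname{tr}M=\int\|z\|^2K(\|z\|)\,\mathrm{d}z,
\end{align*}
so $c=\kappa_K$.

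Finally, these identities transfer to any tangent space $T_s\mathcal{M}$ as noted before the lemma, because orthogonal changes of basis preserve both Lebesgue measure and the norm. There is no real obstacle here; the only point needing care is the integrability bound above, which licenses the symmetry arguments.
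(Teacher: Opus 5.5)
Your proof is correct and follows the paper's route: the normalization gives the first identity, reflection symmetry kills the first moment, and orthogonal invariance plus a trace computation gives the second-moment matrix. Using explicit coordinate reflections and transpositions spells out the step the paper compresses into ``$QM_KQ^\top=M_K$ for all orthogonal $Q$ implies $M_K$ is proportional to $I_d$.'' Your bound $\int\|z\|^pK(\|z\|)\,\mathrm{d}z\le 1$, which uses $K\ge0$ and support in $[0,1]$, is a slightly sharper form of the paper's finiteness remark.
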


\begin{proof}
The first identity is the normalization condition in Assumption~\ref{ass:kernel}. For the first moment, each coordinate function $z_\ell K(\|z\|)$ is odd in $z_\ell$, while the integration domain $\mathbb{R}^d$ and the radial kernel are symmetric; hence
\begin{align*}
    \int_{\mathbb{R}^d} z\,K(\|z\|)\,\mathrm{d}z = 0.
\end{align*}
Let
\begin{align*}
    M_K := \int_{\mathbb{R}^d} z z^\top K(\|z\|)\,\mathrm{d}z.
\end{align*}
For any orthogonal matrix $Q$, the rotational invariance of Lebesgue measure and the radiality of $K$ imply
\begin{align*}
    Q M_K Q^\top = M_K.
\end{align*}
Thus $M_K$ is proportional to the identity matrix, say $M_K = \kappa_K I_d$. Taking traces gives
\begin{align*}
    d\,\kappa_K = \operatorname{tr}(M_K) = \int_{\mathbb{R}^d}\|z\|^2 K(\|z\|)\,\mathrm{d}z.
\end{align*}
The integral is finite because $K$ is bounded and compactly supported on $[0,1]$.
\end{proof}

\begin{lemma}[Normal-coordinate cancellation]\label{lem:normal_coordinate_cancellation}
Let $\psi:\mathcal{M}\to\mathbb{R}$ be bounded and measurable, and let $h\in(0,h_0)$. Then, for every $s\in\mathcal{M}$,
\begin{align*}
    \int_{\mathcal{M}} \mathcal{L}_{s,h}(u)\psi(u)\,\mathrm{d}v_g(u)
    = \int_{\|z\|_s\le1} K(\|z\|_s)\,\psi(\Exp_s(hz))\,\mathrm{d}z,
\end{align*}
where $\mathrm{d}z$ denotes the Lebesgue measure on $T_s\mathcal{M}$ induced by $\langle\cdot,\cdot\rangle_s$.
\end{lemma}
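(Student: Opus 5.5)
The plan is a direct change of variables through the exponential map at $s$, followed by a dilation in the tangent space. By the definition \eqref{eq:kernel_weight}, $\mathcal{L}_{s,h}$ vanishes off $B_{\mathcal{M}}(s,h_0)$. On $B_{\mathcal{M}}(s,h_0)$, the factor $K(d_{\mathcal{M}}(s,u)/h)$ vanishes unless $d_{\mathcal{M}}(s,u)\le h$, because $K$ is supported on $[0,1]$. So the integral over $\mathcal{M}$ reduces to an integral over $B_{\mathcal{M}}(s,h_0)$. Since $h_0<\iota_{\mathcal{M}}\le\operatorname{inj}(s)$, the map $\Exp_s$ is a diffeomorphism from $B_s(0,h_0)$ onto $B_{\mathcal{M}}(s,h_0)$.

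Next, substitute $u=\Exp_s(v)$ with $v\in B_s(0,h_0)$. The defining property of the volume density gives $\mathrm{d}v_g(u)=\theta_s(u)\,\mathrm{d}v$. The relation $d_{\mathcal{M}}(s,u)=\|v\|_s$ holds on this ball, as recorded in Appendix~\ref{app:preliminaries}. The integrand becomes
\begin{align*}
    h^{-d}\theta_s(\Exp_s v)^{-1}K(\|v\|_s/h)\,\psi(\Exp_s v)\,\theta_s(\Exp_s v)\,\mathrm{d}v
    = h^{-d}K(\|v\|_s/h)\,\psi(\Exp_s v)\,\mathrm{d}v.
\end{align*}
The density factors cancel exactly, and this cancellation is the point of the volume correction. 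Lemma~\ref{lem:theta_bounds} guarantees $\theta_s$ is bounded away from $0$ and $\infty$ on this ball. Hence the reciprocal is well defined, and the integrand is bounded and measurable because $K$ and $\psi$ are bounded. All integrals are therefore finite, and the substitution is legitimate.

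Finally, substitute $v=hz$ in $T_s\mathcal{M}$. This linear map has Jacobian $h^d$, which cancels the prefactor $h^{-d}$. The domain $\|v\|_s<h_0$ becomes $\|z\|_s<h_0/h$, and this set contains the closed unit ball because $h<h_0$. Outside $\|z\|_s\le1$ the integrand vanishes, again by the support of $K$. This yields $\int_{\|z\|_s\le1}K(\|z\|_s)\psi(\Exp_s(hz))\,\mathrm{d}z$. Identifying $T_s\mathcal{M}$ with $\mathbb{R}^d$ through an orthonormal basis makes $\mathrm{d}z$ ordinary Lebesgue measure.

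The only delicate point is bookkeeping of the supports. One must check that the restriction to $B_{\mathcal{M}}(s,h_0)$ in \eqref{eq:kernel_weight} loses nothing, and that the boundary sphere $\|z\|_s=1$, where $K$ may be nonzero for the uniform kernel, is handled consistently. The first is settled by the inequality $h<h_0$. The second is harmless in any case, because that sphere has Lebesgue measure zero.
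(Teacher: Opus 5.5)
Your proposal is correct and follows essentially the same route as the paper. You use the support of $K$ and $h<h_0<\iota_{\mathcal{M}}$ to restrict to the normal neighborhood, change variables through $\Exp_s$ so the $\theta_s^{\pm1}$ factors cancel, and rescale so the $h^{\pm d}$ factors cancel; the paper does the same but performs both substitutions at once via $u=\Exp_s(hz)$, whereas you split them into $u=\Exp_s(v)$ followed by $v=hz$.
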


\begin{proof}
Since $K$ is supported on $[0,1]$, the integrand on the left-hand side is zero unless
\begin{align*}
    d_{\mathcal{M}}(s,u) \le h < h_0.
\end{align*}
Thus all nonzero contributions come from points $u\in B_{\mathcal{M}}(s,h_0)$, where the normal coordinate map at $s$ is well defined. Therefore, the change of variables $u=\Exp_s(w)$ is valid on the support of the integrand. Write $w=hz$ with $z\in T_s\mathcal{M}$; then $u=\Exp_s(hz)$, $\|z\|_s\le1$, and the Riemannian volume element satisfies
\begin{align*}
    \mathrm{d}v_g(u) = \theta_s(\Exp_s(hz))\,h^d\,\mathrm{d}z.
\end{align*}
Using the definition of $\mathcal{L}_{s,h}$, we obtain
\begin{align*}
    &\int_{\mathcal{M}} \mathcal{L}_{s,h}(u)\psi(u)\,\mathrm{d}v_g(u) \\
    &\quad= \int_{\|z\|_s\le1}
        h^{-d}\,\theta_s(\Exp_s(hz))^{-1}\,
        K\!\left(\frac{\|\Log_s(\Exp_s(hz))\|_s}{h}\right)
        \psi(\Exp_s(hz))\,
        \theta_s(\Exp_s(hz))\,h^d\,\mathrm{d}z.
\end{align*}
Since $h<h_0$ and $\|z\|_s\le1$, we have $hz\in B_s(0,h_0)$ and hence $\Log_s(\Exp_s(hz))=hz$, so that
\begin{align*}
    K\!\left(\frac{\|\Log_s(\Exp_s(hz))\|_s}{h}\right) = K(\|z\|_s).
\end{align*}
The factors $h^{-d}$ and $h^d$ cancel, as do $\theta_s(\Exp_s(hz))^{-1}$ and $\theta_s(\Exp_s(hz))$. This gives
\begin{align*}
    \int_{\mathcal{M}} \mathcal{L}_{s,h}(u)\psi(u)\,\mathrm{d}v_g(u)
    = \int_{\|z\|_s\le1} K(\|z\|_s)\,\psi(\Exp_s(hz))\,\mathrm{d}z,
\end{align*}
as claimed.
\end{proof}

\begin{lemma}[Product kernel entropy]\label{lem:product_kernel_entropy}
Under Assumption~\ref{ass:kernel}, the product class
\begin{align*}
    \mathscr{K}_{\mathcal{M}}^{(2)}
    := \left\{
        (u,v)\mapsto \varphi_1(u)\varphi_2(v) :
        \varphi_1,\varphi_2\in\mathscr{K}_{\mathcal{M}}
    \right\}
\end{align*}
is of VC type on $\mathcal{M}\times\mathcal{M}$. In particular, writing
\begin{align*}
    \ell_{s,h}^{\circ}(u)
    := \theta_s(u)^{-1}\,K\!\left(\frac{d_{\mathcal{M}}(s,u)}{h}\right)
        \mathbf{1}\{u\in B_{\mathcal{M}}(s,h_0)\},
    \quad s,u\in\mathcal{M}, \quad h\in(0,h_0),
\end{align*}
the covariance kernel class
\begin{align*}
    \mathscr{K}_{\mathcal{M},\mathrm{cov}}^{(2)}
    := \left\{
        (u,v)\mapsto \ell_{s,h}^{\circ}(u)\,\ell_{t,h}^{\circ}(v) :
        s,t\in\mathcal{M},\ h\in(0,h_0)
    \right\}
\end{align*}
is of VC type on $\mathcal{M}\times\mathcal{M}$.
\end{lemma}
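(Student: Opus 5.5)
The plan is to show that products of functions from two VC-type classes on different coordinates are again of VC type. This follows from a direct covering-number argument, so no shatter-dimension computation is needed.

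First, fix the envelope. By Assumption~\ref{ass:kernel}, $\mathscr{K}_{\mathcal{M}}$ is of VC type with a bounded envelope $F$; take $F\equiv\|K\|_\infty c_\theta^{-1}$, which is legitimate by Lemma~\ref{lem:theta_bounds}. Then there exist $A,\nu$ with
\begin{align*}
N(\varepsilon\|F\|_{L^2(Q)},\mathscr{K}_{\mathcal{M}},L^2(Q))\le (A/\varepsilon)^\nu,\qquad 0<\varepsilon\le1,
\end{align*}
uniformly over finitely supported probability measures $Q$ on $\mathcal{M}$. For the product class we use the envelope $F\otimes F$, which is the constant $\|F\|_\infty^2$.

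Second, compare product functions through marginals. Let $P$ be a finitely supported probability on $\mathcal{M}\times\mathcal{M}$ with marginals $P_1,P_2$, which are again finitely supported. For $\varphi_1,\varphi_1'$ and $\varphi_2,\varphi_2'$ in $\mathscr{K}_{\mathcal{M}}$, insert the cross term $\varphi_1'(u)\varphi_2(v)$ and apply the triangle inequality together with $|\varphi_2|\le\|F\|_\infty$ and $|\varphi_1'|\le\|F\|_\infty$. This gives
\begin{align*}
\|\varphi_1\otimes\varphi_2-\varphi_1'\otimes\varphi_2'\|_{L^2(P)}
\le \|F\|_\infty\bigl(\|\varphi_1-\varphi_1'\|_{L^2(P_1)}+\|\varphi_2-\varphi_2'\|_{L^2(P_2)}\bigr).
\end{align*}
Now take $\varepsilon\|F\|_\infty/2$-nets of $\mathscr{K}_{\mathcal{M}}$ in $L^2(P_1)$ and in $L^2(P_2)$. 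Because $F$ is constant, $\|F\|_{L^2(Q)}=\|F\|_\infty$ for every probability $Q$, so each net has size at most $(2A/\varepsilon)^\nu$. Their pairwise products then form an $\varepsilon\|F\|_\infty^2$-net of $\mathscr{K}^{(2)}_{\mathcal{M}}$ in $L^2(P)$. Hence
\begin{align*}
N(\varepsilon\|F\otimes F\|_{L^2(P)},\mathscr{K}^{(2)}_{\mathcal{M}},L^2(P))\le(2A/\varepsilon)^{2\nu},
\end{align*}
uniformly in $P$, which is the VC-type property. If the definition of VC type in use allows arbitrary $Q$ rather than finitely supported ones, the same argument goes through, since marginals of probability measures are probability measures.

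Third, the covariance class $\mathscr{K}^{(2)}_{\mathcal{M},\mathrm{cov}}$ is contained in $\mathscr{K}^{(2)}_{\mathcal{M}}$: it is the subfamily in which both factors share the same $h$. Covering numbers are monotone under inclusion up to a factor-2 change of radius, incurred because net centers must be relocated inside the subclass. This gives the second claim, with $\ell^\circ_{s,h}$ being exactly the generic element of $\mathscr{K}_{\mathcal{M}}$.

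The only delicate point is the first step. The product bound requires a uniformly bounded envelope so that the cross terms are controlled in sup norm, and this is exactly what Lemma~\ref{lem:theta_bounds} supplies through the bound $\theta_s^{-1}\le c_\theta^{-1}$. Measurability, if the VC-type definition requires it, would be handled by noting that the class is pointwise-measurable, since it is indexed by the separable parameter set $\mathcal{M}\times(0,h_0)$. This is an assumption I would state rather than prove in full detail.
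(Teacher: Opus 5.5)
Your proposal is correct and follows the same route as the paper: lift $\mathscr{K}_{\mathcal{M}}$ to each coordinate of $\mathcal{M}\times\mathcal{M}$ through the marginals, use closure of uniformly bounded VC-type classes under pointwise products, and pass to the subclass $\mathscr{K}^{(2)}_{\mathcal{M},\mathrm{cov}}$. The only difference is that you prove the product-closure step explicitly (cross-term triangle inequality plus a product of nets), whereas the paper cites it as standard. One immaterial constant: if your net centers must lie in $\mathscr{K}_{\mathcal{M}}$, the count becomes $(4A/\varepsilon)^{\nu}$ rather than $(2A/\varepsilon)^{\nu}$.
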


\begin{proof}
By Assumption~\ref{ass:kernel}, $\mathscr{K}_{\mathcal{M}}$ is a VC-type class with bounded envelope $F$. Let $Q$ be any probability measure on $\mathcal{M}\times\mathcal{M}$, and let $Q_1$ and $Q_2$ denote its first and second marginal distributions. The lifted classes
\begin{align*}
    \mathscr{K}_1 &:= \{(u,v)\mapsto \varphi_1(u) : \varphi_1\in\mathscr{K}_{\mathcal{M}}\}, \\
    \mathscr{K}_2 &:= \{(u,v)\mapsto \varphi_2(v) : \varphi_2\in\mathscr{K}_{\mathcal{M}}\}
\end{align*}
have the same $L^2(Q)$ covering behavior as $\mathscr{K}_{\mathcal{M}}$ under $Q_1$ and $Q_2$, respectively. Hence both lifted classes are VC type on $\mathcal{M}\times\mathcal{M}$, with bounded envelopes $F(u)$ and $F(v)$.

Since $F$ is bounded, the pointwise product class
\begin{align*}
    \mathscr{K}_1\cdot\mathscr{K}_2
    = \left\{
        (u,v)\mapsto \varphi_1(u)\varphi_2(v) :
        \varphi_1,\varphi_2\in\mathscr{K}_{\mathcal{M}}
    \right\}
\end{align*}
is also of VC type, by the standard closure property of uniformly bounded VC-type classes under pointwise multiplication. Its envelope is $F(u)F(v)$, which is bounded. Since $\mathscr{K}_{\mathcal{M},\mathrm{cov}}^{(2)}$ is a subclass of $\mathscr{K}_1\cdot\mathscr{K}_2$, it inherits the same VC-type entropy bound.
\end{proof}

\subsection{Riemannian Taylor expansion}

For a twice continuously differentiable function $\psi:\mathcal{M}\to\mathbb{R}$, let $\nabla\psi(s)$ and $\nabla^2\psi(s)$ denote its Riemannian gradient and Hessian at $s\in\mathcal{M}$, respectively. We view $\nabla^2\psi(s)$ as a symmetric bilinear form on $T_s\mathcal{M}\times T_s\mathcal{M}$.

\begin{lemma}[Uniform Taylor expansion in normal coordinates]\label{lem:taylor}
Let $\psi\in C^2(\mathcal{M})$. For $s\in\mathcal{M}$ and $u\in B_{\mathcal{M}}(s,h_0)$, write $v=\Log_s(u)$. Then
\begin{align*}
    \psi(u) = \psi(s) + \langle \nabla\psi(s),v\rangle_s + R_\psi(s,u),
\end{align*}
where there exists a constant $C_\psi<\infty$, independent of $s$ and $u$, such that
\begin{align*}
    |R_\psi(s,u)| \le C_\psi\,d_{\mathcal{M}}(s,u)^2, \quad s\in\mathcal{M}, \quad u\in B_{\mathcal{M}}(s,h_0).
\end{align*}
Equivalently, for $z\in T_s\mathcal{M}$ with $\|z\|_s\le1$ and $h\in(0,h_0)$, define
\begin{align*}
    R_{\psi,h}(s,z) := R_\psi(s,\Exp_s(hz)).
\end{align*}
Then
\begin{align*}
    \psi(\Exp_s(hz)) = \psi(s) + h\langle \nabla\psi(s),z\rangle_s + R_{\psi,h}(s,z), \quad
    |R_{\psi,h}(s,z)| \le C_\psi\,h^2\|z\|_s^2.
\end{align*}
\end{lemma}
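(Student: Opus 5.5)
The plan is to apply the fundamental theorem of calculus along the unit-speed radial geodesic from $s$ to $u$, and then to get uniformity from compactness of the closed tangent disk bundle $\mathcal{T}_{h_0}$ used in the proof of Lemma~\ref{lem:theta_bounds}. Fix $s\in\mathcal{M}$ and $u\in B_{\mathcal{M}}(s,h_0)$, and set $v=\Log_s(u)$. Since $h_0<\iota_{\mathcal{M}}$, the curve $\gamma(t):=\Exp_s(tv)$, $t\in[0,1]$, is the unique minimizing geodesic from $s$ to $u$, with $\|\dot\gamma(t)\|=\|v\|_s=d_{\mathcal{M}}(s,u)$. Define $g(t):=\psi(\gamma(t))$. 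Because $\psi\in C^2(\mathcal{M})$ and $\gamma$ is smooth, $g\in C^2[0,1]$. We have $g'(0)=\langle\nabla\psi(s),v\rangle_s$. Since $\gamma$ is a geodesic ($\nabla_{\dot\gamma}\dot\gamma=0$), we also have $g''(t)=\nabla^2\psi(\gamma(t))(\dot\gamma(t),\dot\gamma(t))$.

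Taylor's theorem with integral remainder then gives
\begin{align*}
    \psi(u)=g(1)=\psi(s)+\langle\nabla\psi(s),v\rangle_s+\int_0^1(1-t)\,\nabla^2\psi(\gamma(t))(\dot\gamma(t),\dot\gamma(t))\,\mathrm{d}t.
\end{align*}
This identifies $R_\psi(s,u)$ as the integral term. To bound it, let $\|\nabla^2\psi(p)\|$ denote the operator norm of the Hessian with respect to $g_p$. The map $p\mapsto\|\nabla^2\psi(p)\|$ is continuous on the compact manifold $\mathcal{M}$, so $H_\psi:=\sup_{p}\|\nabla^2\psi(p)\|<\infty$. Using $\|\dot\gamma(t)\|_{\gamma(t)}=d_{\mathcal{M}}(s,u)$, we get $|R_\psi(s,u)|\le \tfrac12 H_\psi\, d_{\mathcal{M}}(s,u)^2$. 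This gives the claim with $C_\psi=H_\psi/2$, which depends on neither $s$ nor $u$. Notably, the geodesic equation avoids any Christoffel-symbol correction, so no chart-dependent constants appear.

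The rescaled form then follows by substitution. For $\|z\|_s\le1$ and $h<h_0$ we have $hz\in B_s(0,h_0)$, so $\Log_s(\Exp_s(hz))=hz$ and $d_{\mathcal{M}}(s,\Exp_s(hz))=h\|z\|_s$. Plugging $v=hz$ into the expansion gives both the identity and the bound $|R_{\psi,h}(s,z)|\le C_\psi h^2\|z\|_s^2$.

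The only delicate point is the regularity and measurability needed to say that $g$ is $C^2$ and that $g''$ equals the Hessian along $\dot\gamma$. This is standard: the Riemannian Hessian is defined by $\nabla^2\psi(X,Y)=X(Y\psi)-(\nabla_XY)\psi$, and along a geodesic the second term vanishes. I therefore expect no real obstacle. The main thing to state carefully is the use of minimizing geodesics inside the injectivity radius, so that $\|v\|_s=d_{\mathcal{M}}(s,u)$ holds exactly.
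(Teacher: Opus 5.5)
Your proposal is correct and matches the paper's proof essentially step for step. Both apply Taylor's formula with integral remainder to $t\mapsto\psi(\Exp_s(tv))$ along the minimizing geodesic, use the geodesic equation to identify the second derivative with the Hessian, and take $C_\psi$ equal to half a uniform Hessian bound obtained from compactness, with the rescaled form following from $v=hz$.

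There are two harmless slips. First, $\gamma$ is not unit-speed; it has constant speed $\|v\|_s=d_{\mathcal{M}}(s,u)$, which is what your computation actually uses. Second, compactness of $\mathcal{M}$ alone gives the Hessian bound, so the tangent disk bundle is not needed.
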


\begin{proof}
Fix $s\in\mathcal{M}$ and $u\in B_{\mathcal{M}}(s,h_0)$, and set $v=\Log_s(u)$. Since $h_0<\iota_{\mathcal{M}}$, the curve
\begin{align*}
    \gamma(t) := \Exp_s(tv), \quad t\in[0,1],
\end{align*}
is the unique minimizing geodesic from $s$ to $u$. Hence $\gamma(0)=s$, $\gamma(1)=u$, and
\begin{align*}
    \|\dot{\gamma}(t)\|_{\gamma(t)} = \|v\|_s = d_{\mathcal{M}}(s,u), \quad t\in[0,1].
\end{align*}
Define $\eta(t):=\psi(\gamma(t))$. Then
\begin{align*}
    \eta'(0) = \langle\nabla\psi(s),v\rangle_s,
\end{align*}
and, since $\gamma$ is a geodesic,
\begin{align*}
    \eta''(t) = \nabla^2\psi(\gamma(t))(\dot{\gamma}(t),\dot{\gamma}(t)), \quad t\in[0,1].
\end{align*}
Taylor's formula with integral remainder gives
\begin{align*}
    \psi(u)
    = \psi(s) + \langle\nabla\psi(s),v\rangle_s
    + \int_0^1 (1-t)\,\nabla^2\psi(\gamma(t))(\dot{\gamma}(t),\dot{\gamma}(t))\,\mathrm{d}t.
\end{align*}
Since $\mathcal{M}$ is compact and $\psi\in C^2(\mathcal{M})$, the Hessian is uniformly bounded: there exists $M_\psi<\infty$ such that
\begin{align*}
    |\nabla^2\psi(p)(a,a)| \le M_\psi\|a\|_p^2, \quad p\in\mathcal{M}, \quad a\in T_p\mathcal{M}.
\end{align*}
Therefore,
\begin{align*}
    |R_\psi(s,u)|
    &\le \int_0^1 (1-t)\,M_\psi\,\|\dot{\gamma}(t)\|_{\gamma(t)}^2\,\mathrm{d}t \\
    &= \frac{M_\psi}{2}\|v\|_s^2 = \frac{M_\psi}{2}\,d_{\mathcal{M}}(s,u)^2.
\end{align*}
This proves the first statement with $C_\psi=M_\psi/2$.

To obtain the equivalent form, take $u=\Exp_s(hz)$ with $z\in T_s\mathcal{M}$, $\|z\|_s\le1$, and $h\in(0,h_0)$. Then $v=\Log_s(u)=hz$ and
\begin{align*}
    d_{\mathcal{M}}(s,u) = \|hz\|_s = h\|z\|_s.
\end{align*}
Substituting these identities into the first statement gives
\begin{align*}
    \psi(\Exp_s(hz)) = \psi(s) + h\langle\nabla\psi(s),z\rangle_s + R_{\psi,h}(s,z), \quad
    |R_{\psi,h}(s,z)| \le C_\psi\,h^2\|z\|_s^2,
\end{align*}
which completes the proof.
\end{proof}

\section{Proofs of the Main Theoretical Results}\label{app:proofs}

Throughout the appendix, all convergence statements are understood conditionally on the realized sampling frequencies $m_1,\ldots,m_n$. Constants may vary from line to line and do not depend on $n$, the bandwidths, or the sampling frequencies, although they may depend on fixed quantities such as $\mathcal{M}$, $K$, $h_0$, and the constants in the assumptions. We use the geometric notation and auxiliary lemmas collected in Appendix~\ref{app:preliminaries}.

\subsection{Intrinsic smoothing bias}

\begin{lemma}[Intrinsic smoothing bias]\label{lem:intrinsic_bias}
Let $\psi\in C^2(\mathcal{M})$. Under Assumptions~\ref{ass:manifold} and~\ref{ass:kernel}, for any $h\in(0,h_0)$,
\begin{align*}
    \sup_{s\in\mathcal{M}}
    \left|
        \int_{\mathcal{M}} \mathcal{L}_{s,h}(u)\psi(u)\,\mathrm{d}v_g(u) - \psi(s)
    \right|
    = O(h^2).
\end{align*}
Similarly, if $\Psi\in C^2(\mathcal{M}\times\mathcal{M})$, then
\begin{align*}
    \sup_{(s,t)\in\mathcal{M}\times\mathcal{M}}
    \left|
        \int_{\mathcal{M}}\!\int_{\mathcal{M}}
        \mathcal{L}_{s,h}(u)\mathcal{L}_{t,h}(v)\,\Psi(u,v)
        \,\mathrm{d}v_g(u)\,\mathrm{d}v_g(v)
        - \Psi(s,t)
    \right|
    = O(h^2),
\end{align*}
where, for any function $\Psi$ on $\mathcal{M}\times\mathcal{M}$, the notation $\Psi\in C^2(\mathcal{M}\times\mathcal{M})$ is understood with respect to the product Riemannian manifold structure.
\end{lemma}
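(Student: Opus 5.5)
The plan is to reduce both statements to Euclidean tangent-space integrals via Lemma~\ref{lem:normal_coordinate_cancellation}, and then apply the uniform Taylor expansion of Lemma~\ref{lem:taylor} together with the kernel moment identities of Lemma~\ref{lem:kernel_moments}.

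For the one-point statement, fix $s\in\mathcal{M}$ and $h\in(0,h_0)$. Since $\psi$ is continuous on the compact $\mathcal{M}$, it is bounded and measurable, so Lemma~\ref{lem:normal_coordinate_cancellation} gives
\begin{align*}
    \int_{\mathcal{M}} \mathcal{L}_{s,h}(u)\psi(u)\,\mathrm{d}v_g(u)
    = \int_{\|z\|_s\le1} K(\|z\|_s)\,\psi(\Exp_s(hz))\,\mathrm{d}z .
\end{align*}
Next, substitute the expansion $\psi(\Exp_s(hz))=\psi(s)+h\langle\nabla\psi(s),z\rangle_s+R_{\psi,h}(s,z)$ from Lemma~\ref{lem:taylor}. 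The constant term integrates to $\psi(s)$ by the normalization of $K$. The linear term vanishes by the zero first moment in Lemma~\ref{lem:kernel_moments}; that lemma was stated on $\mathbb{R}^d$ but transfers to $T_s\mathcal{M}$ through an orthonormal basis. The remainder is bounded by
\[
C_\psi h^2\int\|z\|_s^2K(\|z\|_s)\,\mathrm{d}z=C_\psi h^2\,d\,\kappa_K .
\]
This bound is uniform in $s$ because $C_\psi$ does not depend on $s$, and $\kappa_K$ does not depend on $s$ either. Taking the supremum over $s$ yields $O(h^2)$.

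For the two-point statement, I would first apply Fubini and then Lemma~\ref{lem:normal_coordinate_cancellation} in each coordinate separately; $\Psi$ is bounded and measurable on the compact product. This gives
\begin{align*}
    \int_{\|z\|_s\le1}\int_{\|y\|_t\le1} K(\|z\|_s)K(\|y\|_t)\,\Psi(\Exp_s(hz),\Exp_t(hy))\,\mathrm{d}y\,\mathrm{d}z .
\end{align*}
The key observation is that on the product Riemannian manifold $\mathcal{M}\times\mathcal{M}$ the exponential map factorizes: $\Exp_{(s,t)}(hz,hy)=(\Exp_s(hz),\Exp_t(hy))$. Its injectivity radius is at least $\iota_{\mathcal{M}}$, and the product distance satisfies $d^2=h^2(\|z\|_s^2+\|y\|_t^2)\le 2h^2$.

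At this point I would apply the Taylor argument of Lemma~\ref{lem:taylor} directly to $\Psi$ along the product geodesic $t'\mapsto(\Exp_s(t'hz),\Exp_t(t'hy))$. This gives a first-order term $h\langle\nabla_1\Psi(s,t),z\rangle_s+h\langle\nabla_2\Psi(s,t),y\rangle_t$ plus a remainder bounded by $C_\Psi h^2(\|z\|_s^2+\|y\|_t^2)$. The Hessian bound holds uniformly because $\Psi\in C^2$ on the compact product.

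One subtlety remains: Lemma~\ref{lem:taylor} was stated for radius $h_0$, while product points lie at distance up to $\sqrt2 h$. I do not need the product injectivity radius, however. The proof of Lemma~\ref{lem:taylor} only uses that $t'\mapsto(\Exp_s(t'hz),\Exp_t(t'hy))$ is a geodesic of the product with speed $\sqrt{\|hz\|^2+\|hy\|^2}$, which holds for any vectors, so I will rerun that argument verbatim. Integrating against the product kernel, the constant term gives $\Psi(s,t)$ by the normalization, applied once in each factor. Each linear term vanishes by the zero first moment in its own variable, with the other factor integrating to one. The remainder is at most $2C_\Psi d\kappa_K h^2$. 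This bound is uniform in $(s,t)$, which completes the plan.

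I expect the main obstacle to be justifying the product-manifold Taylor expansion uniformly, that is, handling the remainder without relying on the injectivity-radius version of Lemma~\ref{lem:taylor}. As explained above, rerunning that argument along the explicit product geodesic resolves it.
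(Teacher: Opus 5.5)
Your proof is correct. The one-point part coincides with the paper's argument: normal-coordinate cancellation, then the uniform Taylor expansion, then the vanishing first kernel moment.

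For the product statement you take a genuinely different route. The paper never expands $\Psi$ jointly. It sets $A_h(s,v):=\int_{\mathcal{M}}\mathcal{L}_{s,h}(u)\Psi(u,v)\,\mathrm{d}v_g(u)$ and applies the one-point result to $u\mapsto\Psi(u,v)$, uniformly in $v$, to get $\sup|A_h-\Psi|=O(h^2)$. It then splits the double integral minus $\Psi(s,t)$ into two pieces. The first is $\int_{\mathcal{M}}\mathcal{L}_{t,h}(v)\{A_h(s,v)-\Psi(s,v)\}\,\mathrm{d}v_g(v)$, which is controlled because $\mathcal{L}_{t,h}$ integrates to one. The second is a one-point bias for $v\mapsto\Psi(s,v)$.

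You instead perform a single second-order expansion along the product geodesic $\tau\mapsto(\Exp_s(\tau hz),\Exp_t(\tau hy))$. As you correctly observe, this needs only the one-variable Taylor formula along a geodesic, not the injectivity radius of the product, so the fact that product points can lie at distance up to $\sqrt{2}h$ is harmless.

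The trade-off is as follows. The paper's iterated route reuses the one-point lemma as a black box and needs only uniform $C^2$ bounds on the partial maps $\Psi(\cdot,v)$ and $\Psi(s,\cdot)$, never the mixed second derivatives. Your route uses the full product Hessian, which the hypothesis $\Psi\in C^2(\mathcal{M}\times\mathcal{M})$ does provide. In exchange, it is self-contained, gives the explicit constant $2C_\Psi d\kappa_K$, and makes transparent that the product case is just the $2d$-dimensional analogue of the one-point computation.
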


\begin{proof}
We first prove the single-kernel smoothing statement on $\mathcal{M}$. By Lemma~\ref{lem:normal_coordinate_cancellation}, with $u=\Exp_s(hz)$,
\begin{align*}
    \int_{\mathcal{M}} \mathcal{L}_{s,h}(u)\psi(u)\,\mathrm{d}v_g(u)
    = \int_{\|z\|_s\le1} K(\|z\|_s)\,\psi(\Exp_s(hz))\,\mathrm{d}z,
    \quad s\in\mathcal{M},
\end{align*}
where $\mathrm{d}z$ denotes the Lebesgue measure on $T_s\mathcal{M}$ induced by $\langle\cdot,\cdot\rangle_s$.

By Lemma~\ref{lem:taylor}, for $z\in T_s\mathcal{M}$ with $\|z\|_s\le1$ and $h\in(0,h_0)$,
\begin{align*}
    \psi(\Exp_s(hz))
    = \psi(s) + h\langle\nabla\psi(s),z\rangle_s + R_{\psi,h}(s,z), \quad
    |R_{\psi,h}(s,z)| \le C_\psi h^2\|z\|_s^2,
\end{align*}
where $C_\psi<\infty$ is independent of $s$, $z$, and $h$. Hence
\begin{align*}
    &\int_{\mathcal{M}} \mathcal{L}_{s,h}(u)\psi(u)\,\mathrm{d}v_g(u) - \psi(s) \\
    &\quad= \psi(s)\left\{\int_{\|z\|_s\le1} K(\|z\|_s)\,\mathrm{d}z - 1\right\}
    + h\left\langle\nabla\psi(s),\,\int_{\|z\|_s\le1} z\,K(\|z\|_s)\,\mathrm{d}z\right\rangle_s \\
    &\qquad+ \int_{\|z\|_s\le1} K(\|z\|_s)\,R_{\psi,h}(s,z)\,\mathrm{d}z.
\end{align*}
The first term is zero by the normalization of $K$. The second term is zero by Lemma~\ref{lem:kernel_moments}, applied after identifying $T_s\mathcal{M}$ with $\mathbb{R}^d$ through an orthonormal basis. Therefore,
\begin{align*}
    \left|\int_{\mathcal{M}} \mathcal{L}_{s,h}(u)\psi(u)\,\mathrm{d}v_g(u) - \psi(s)\right|
    &\le C_\psi h^2 \int_{\|z\|_s\le1} K(\|z\|_s)\|z\|_s^2\,\mathrm{d}z \\
    &= O(h^2), \quad s\in\mathcal{M}.
\end{align*}
The last bound is uniform in $s\in\mathcal{M}$, since the integral is the same in every tangent space after choosing an orthonormal basis and $K$ is bounded with compact support.

We now prove the product-kernel statement. Define
\begin{align*}
    A_h(s,v) := \int_{\mathcal{M}} \mathcal{L}_{s,h}(u)\Psi(u,v)\,\mathrm{d}v_g(u),
    \quad (s,v)\in\mathcal{M}\times\mathcal{M}.
\end{align*}
Since $\Psi\in C^2(\mathcal{M}\times\mathcal{M})$ and $\mathcal{M}$ is compact, the $C^2$-norms of the functions $u\mapsto\Psi(u,v)$ are uniformly bounded over $v\in\mathcal{M}$. Applying the first part uniformly in $v$ gives
\begin{align*}
    \sup_{(s,v)\in\mathcal{M}\times\mathcal{M}} |A_h(s,v)-\Psi(s,v)| = O(h^2).
\end{align*}
Therefore,
\begin{align*}
    &\int_{\mathcal{M}}\!\int_{\mathcal{M}}
        \mathcal{L}_{s,h}(u)\mathcal{L}_{t,h}(v)\,\Psi(u,v)
        \,\mathrm{d}v_g(u)\,\mathrm{d}v_g(v) - \Psi(s,t) \\
    &\quad= \int_{\mathcal{M}} \mathcal{L}_{t,h}(v)\{A_h(s,v)-\Psi(s,v)\}\,\mathrm{d}v_g(v)
    + \left\{\int_{\mathcal{M}} \mathcal{L}_{t,h}(v)\Psi(s,v)\,\mathrm{d}v_g(v) - \Psi(s,t)\right\}.
\end{align*}
For the first term, Lemma~\ref{lem:normal_coordinate_cancellation} applied to $|\mathcal{L}_{t,h}|$ gives
\begin{align*}
    \int_{\mathcal{M}} \mathcal{L}_{t,h}(v)\,\mathrm{d}v_g(v)
    = \int_{\|z\|_t\le1} K(\|z\|_t)\,\mathrm{d}z \le C_K,
    \quad t\in\mathcal{M},
\end{align*}
where $C_K<\infty$ because $K$ is bounded and compactly supported. Hence
\begin{align*}
    \left|\int_{\mathcal{M}} \mathcal{L}_{t,h}(v)\{A_h(s,v)-\Psi(s,v)\}\,\mathrm{d}v_g(v)\right|
    \le C_K \sup_{(s,v)\in\mathcal{M}\times\mathcal{M}} |A_h(s,v)-\Psi(s,v)|
    = O(h^2),
\end{align*}
uniformly over $(s,t)\in\mathcal{M}\times\mathcal{M}$. For the second term, apply the first part to the function $v\mapsto\Psi(s,v)$; since the corresponding $C^2$-norms are uniformly bounded over $s\in\mathcal{M}$, this term is also $O(h^2)$ uniformly over $(s,t)\in\mathcal{M}\times\mathcal{M}$. Combining the two bounds proves the product-kernel statement.
\end{proof}

\subsection{Empirical-process bounds for weighted clustered sums}

This subsection proves the clustered empirical-process bounds used in the uniform convergence arguments. All statements are conditional on the realized sampling frequencies $m_1,\ldots,m_n$. Throughout this subsection, the bandwidth $h$ satisfies $h\in(0,h_0)$ and $\log(1/h)=O(\log n)$.

The first lemma records a Bernstein-type maximal inequality for bounded independent cluster-level processes indexed by VC-type classes; the independent units in this argument are the subject-level clusters, not the individual within-subject observations. The second lemma records localized cluster-envelope bounds implied by the random design on the compact manifold. The final two lemmas apply the maximal inequality, together with finite-moment truncation and localized envelope control, to the weighted observation-level and ordered-pair processes used in the mean and covariance estimators. The entropy condition needed for the truncated subject-level cluster-sum classes is imposed through Assumption~\ref{ass:ep_compatibility}, which avoids relying on a crude finite-sum closure argument that would introduce undesirable dependence on the within-subject sampling frequencies. In the balanced OBS and SUBJ regimes considered in Section~\ref{sec:balanced_rates}, Appendix~\ref{app:balanced_bandwidth} records sufficient balanced-design orders for the localized quantities appearing in Assumption~\ref{ass:ep_compatibility}.

The proof follows the finite-moment truncation strategy used in sparse-to-dense FPCA theory. Since Assumption~\ref{ass:process} imposes finite moment conditions rather than sub-Gaussian or sub-exponential tail assumptions, the stochastic processes are decomposed into truncated parts and tail remainders. For the truncated parts, Lemma~\ref{lem:vcineq} yields a square-root variance term and a maximal cluster-envelope term. The tail remainders are controlled by the finite moment assumptions together with the truncation levels and localized envelope bounds. In this sense, the argument is the manifold-domain analogue of the uniform empirical-process argument in \cite{ZhangWang2016}, with Euclidean bandwidth powers replaced by the intrinsic powers determined by the dimension $d$ of $\mathcal{M}$.

\begin{lemma}[Clustered VC-type maximal inequality]\label{lem:vcineq}
Conditionally on the realized sampling frequencies $m_1,\ldots,m_n$, let $O_1,\ldots,O_n$ be independent subject-level clusters. For $a\in\mathcal{A}$, let
\begin{align*}
    \xi_i(a) := f_{i,a}(O_i) - \mathbb{E}\{f_{i,a}(O_i)\}, \quad i=1,\ldots,n,
\end{align*}
where the functions $f_{i,a}$ may depend on $i$ through deterministic weights and sampling frequencies. Suppose that there exist deterministic sequences $\sigma_n>0$, $B_n>0$, and $J_n\ge1$ such that
\begin{align*}
    \sup_{a\in\mathcal{A}} \sum_{i=1}^n \mathbb{E}\{\xi_i(a)^2\} \le \sigma_n^2, \quad
    \max_{1\le i\le n}\sup_{a\in\mathcal{A}} |\xi_i(a)| \le B_n
\end{align*}
almost surely. In applications, this condition is verified by bounding the corresponding uncentered cluster contribution; centering only changes the envelope by a universal constant. Let
\begin{align*}
    \mathcal{F}_n := \{(i,o)\mapsto f_{i,a}(o) : a\in\mathcal{A},\ 1\le i\le n\}.
\end{align*}
Assume that $\mathcal{F}_n$ is of VC type with entropy integral of order $J_n^{1/2}$, in the sense that
\begin{align*}
    \int_0^1 \sup_Q \left\{1 + \log N\!\left(\varepsilon B_n,\mathcal{F}_n,L^2(Q)\right)\right\}^{1/2} \mathrm{d}\varepsilon
    \le C J_n^{1/2},
\end{align*}
where the supremum is over all finitely supported probability measures $Q$ on the domain of $\mathcal{F}_n$, and $C<\infty$ is independent of $n$. Then
\begin{align*}
    \sup_{a\in\mathcal{A}} \left|\sum_{i=1}^n \xi_i(a)\right|
    = O_p\!\left\{(\sigma_n^2 J_n)^{1/2} + B_n J_n\right\}.
\end{align*}
In the applications below, the VC-type entropy and the condition $\log(1/h)=O(\log n)$ allow us to take $J_n\asymp\log n$.
\end{lemma}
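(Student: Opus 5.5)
Our route is the standard one: symmetrization, a chaining bound for the Rademacher process, and Talagrand/Bousquet concentration. It gives a bound in expectation, from which the $O_p$ statement follows by Markov's inequality. Write $Z:=\sup_{a\in\mathcal{A}}|\sum_{i=1}^n\xi_i(a)|$. We treat $i$ as part of the observation, so that $\tilde O_i:=(i,O_i)$ are independent but not identically distributed. Each summand is then $f(\tilde O_i)-\mathbb{E}f(\tilde O_i)$ for a single function $f\in\mathcal{F}_n$, namely $f(i,o)=f_{i,a}(o)$. This is the reason the entropy condition is imposed on $\mathcal{F}_n$ rather than on each slice separately. Measurability issues are handled by the separability implicit in the VC-type assumption, so that suprema may be taken over countable subclasses.

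First, symmetrization (valid for independent, non-identically distributed summands) gives
\begin{align*}
\mathbb{E}Z\le 2\,\mathbb{E}\sup_{f\in\mathcal{F}_n}\Bigl|\sum_{i=1}^n\varepsilon_i f(\tilde O_i)\Bigr|,
\end{align*}
where $\varepsilon_i$ are Rademacher signs. Conditionally on the data, this is a sub-Gaussian process with respect to $\sqrt{n}\,\|f-g\|_{L^2(P_n)}$, where $P_n$ is the empirical measure of $\tilde O_1,\ldots,\tilde O_n$. Dudley's entropy integral, truncated at the radius $\hat\sigma:=\sup_f(\sum_i f(\tilde O_i)^2)^{1/2}$ and combined with the uniform entropy hypothesis at scale $\varepsilon B_n$, yields the local bound of Gin\'e--Koltchinskii and Chernozhukov--Chetverikov--Kato type:
\begin{align*}
\mathbb{E}Z\lesssim \sqrt{\sigma_n^2J_n}+B_nJ_n .
\end{align*}
This uses $\mathbb{E}\hat\sigma^2\le\sigma_n^2+C B_n\,\mathbb{E}Z'$ by the contraction principle. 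The resulting quadratic inequality in $\mathbb{E}Z$ is solved in the usual way, and this is exactly where the term $B_nJ_n$ arises. Some care is needed because centering changes the envelope only by a factor $2$, as the statement notes, and the uncentered functions must be used in the entropy bound while the centered ones are used in the variance bound.

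Finally, Markov's inequality gives $Z=O_p(\mathbb{E}Z)$, which is the claim. Alternatively, Bousquet's version of Talagrand's inequality gives $Z\le 2\mathbb{E}Z+\sqrt{2\sigma_n^2x}+CB_nx$ with probability $1-e^{-x}$, which is the high-probability form needed when the lemma is later applied. For the remark at the end of the statement, we would note that for VC-type classes $\log N(\varepsilon B_n,\mathcal{F}_n,L^2(Q))\le v\log(A/\varepsilon)$, whose entropy integral is $O(1)$. The factor $\log n$ enters through the ratio $B_n/\sigma_n$, which is polynomial in $n$ and $1/h$ under $\log(1/h)=O(\log n)$. Setting $J_n\asymp\log n$ absorbs this factor.

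We expect the main obstacle to be the localized chaining step. The quadratic self-bounding argument must be carried out with the random radius $\hat\sigma$ while the entropy is normalized by $B_n$ rather than by $\hat\sigma$. The first thing we would try is to quote Corollary 5.1 of Chernozhukov, Chetverikov and Kato (2014) directly, after checking that its proof goes through for non-identically distributed summands via the pooled class $\mathcal{F}_n$.
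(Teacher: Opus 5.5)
Your route is the paper's own: symmetrization over the independent clusters $(i,O_i)$, then a localized Bernstein-type maximal inequality for VC-type classes bounding the conditional expectation, then Markov's inequality. The paper also mentions the high-probability Bernstein form as an alternative.

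The step that fails as you phrase it is the claim that truncated Dudley chaining \emph{combined with the stated entropy hypothesis} gives $\mathbb{E}Z\lesssim(\sigma_n^2J_n)^{1/2}+B_nJ_n$. The paper's sketch is equally loose at this point. The localized bound is driven by the entropy integral over $[0,\delta_n]$ with $\delta_n\asymp\sigma_n/(\sqrt{n}B_n)$. For a VC-type class this produces the factor $v\log(A\sqrt{n}B_n/\sigma_n)$.

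The hypothesis only controls the integral over $[0,1]$, which, as you note yourself, is $O(1)$ for every VC-type class. It does not control the local integral: monotonicity only gives $\log N(\varepsilon B_n,\mathcal{F}_n,L^2(Q))\lesssim J_n/\varepsilon^2$, which is useless near $0$. So the hypothesis can hold with $J_n\equiv1$, and then the conclusion is false. Take i.i.d.\ uniform $O_i$ on $[0,1]$, $m_i=1$, and $f_{i,s}(o)=\mathbf{1}\{|o-s|\le h\}$ with $h=n^{-1/2}$. Then $\sigma_n^2\asymp nh$ and $B_n\asymp1$. By Stute's modulus-of-continuity theorem for the uniform empirical process, $\sup_s|\sum_i\xi_i(s)|$ is of exact order $\{nh\log(1/h)\}^{1/2}$, not $O_p\{(nh)^{1/2}\}$.

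What your argument actually establishes is the bound with $J_n$ replaced by $J_n+v\log(A\sqrt{n}B_n/\sigma_n)$. Equivalently, it proves the lemma under the extra requirement $J_n\gtrsim\log(\sqrt{n}B_n/\sigma_n)$. Your final paragraph already relies on exactly this. In the paper's applications $\sqrt{n}B_n/\sigma_n$ is polynomial in $n$, so $J_n\asymp\log n$ is legitimate there. You should state the requirement explicitly, either as an assumption on $J_n$ or by putting $\log(\sqrt{n}B_n/\sigma_n)$ into the conclusion. Do not attribute the logarithm to the entropy-integral hypothesis.

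The remaining ingredients are fine:
\begin{itemize}
\item symmetrization for non-identically distributed summands through the pooled class $\mathcal{F}_n$;
\item the contraction bound for $\mathbb{E}\hat\sigma^2$, applied to the centered class so that $\sigma_n^2$ is the right variance proxy;
\item the passage to $O_p$.
\end{itemize}
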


\begin{proof}
Condition on the realized sampling frequencies. The result is a standard consequence of symmetrization over the independent subject-level clusters and a Bernstein-type maximal inequality for bounded empirical processes indexed by VC-type classes. Specifically, the symmetrized process satisfies the usual VC-type maximal bound with variance proxy $\sigma_n^2$, envelope $B_n$, and entropy integral of order $J_n^{1/2}$. The corresponding Bernstein maximal inequality yields
\begin{align*}
    \mathbb{E}\!\left[\sup_{a\in\mathcal{A}} \left|\sum_{i=1}^n \xi_i(a)\right|\,\Big|\,m_1,\ldots,m_n\right]
    \le C\left\{(\sigma_n^2 J_n)^{1/2} + B_n J_n\right\},
\end{align*}
with a constant $C<\infty$ independent of $n$, the bandwidth, and the sampling frequencies. The same order also follows from the high-probability Bernstein form of the inequality. Therefore,
\begin{align*}
    \sup_{a\in\mathcal{A}} \left|\sum_{i=1}^n \xi_i(a)\right|
    = O_p\!\left\{(\sigma_n^2 J_n)^{1/2} + B_n J_n\right\}.
\end{align*}
This is the standard bounded VC-type empirical-process bound; see, for example, the symmetrization and maximal inequalities in \cite{vanDerVaartWellner1996} and \cite{GineNickl2016}.
\end{proof}

\begin{lemma}[Localized cluster-envelope bounds]\label{lem:localized_cluster_envelope}
Suppose Assumptions~\ref{ass:manifold}, \ref{ass:sampling}, \ref{ass:kernel}, and~\ref{ass:weights} hold. For $h\in(0,h_0)$ with $\log(1/h)=O(\log n)$, define the local observation counts
\begin{align*}
    N_i(s;h) := \sum_{j=1}^{m_i} \mathbf{1}\{s_{ij}\in B_{\mathcal{M}}(s,h)\}, \quad s\in\mathcal{M},
\end{align*}
and the local ordered-pair counts
\begin{align*}
    N_i^{(2)}(s,t;h) := \sum_{j\ne k} \mathbf{1}\{s_{ij}\in B_{\mathcal{M}}(s,h),\ s_{ik}\in B_{\mathcal{M}}(t,h)\},
    \quad (s,t)\in\mathcal{M}\times\mathcal{M}.
\end{align*}
Then, conditionally on the realized sampling frequencies $m_1,\ldots,m_n$,
\begin{align}
\label{eq:localized_count_bound_mean}
    \max_{1\le i\le n} \sup_{s\in\mathcal{M}}
    \frac{N_i(s;h)}{m_i h^d+\log n} = O_p(1),
\end{align}
and
\begin{align}
\label{eq:localized_count_bound_cov}
    \max_{1\le i\le n} \sup_{(s,t)\in\mathcal{M}\times\mathcal{M}}
    \frac{N_i^{(2)}(s,t;h)}{(m_i h^d+\log n)^2} = O_p(1).
\end{align}
Consequently,
\begin{align}
\label{eq:localized_envelope_mean}
    \max_{1\le i\le n} w_i h^{-d}\sup_{s\in\mathcal{M}} N_i(s;h)
    = O_p\!\left[\max_{1\le i\le n} w_i h^{-d}(m_i h^d+\log n)\right],
\end{align}
and
\begin{align}
\label{eq:localized_envelope_cov}
    \max_{i\in\mathcal{I}_C} v_i h^{-2d} \sup_{(s,t)\in\mathcal{M}\times\mathcal{M}} N_i^{(2)}(s,t;h)
    = O_p\!\left[\max_{i\in\mathcal{I}_C} v_i h^{-2d}(m_i h^d+\log n)^2\right].
\end{align}
\end{lemma}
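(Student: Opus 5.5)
The plan is to reduce the supremum over the continuum of balls to a maximum over a finite family of slightly enlarged balls, and then apply a Chernoff--Bernstein bound for binomial counts, uniformly over subjects via a union bound.

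\emph{Discretization.} Fix a maximal $h$-separated set $\{p_1,\ldots,p_{L}\}\subset\mathcal{M}$. By compactness and the uniform volume bounds of Lemma~\ref{lem:theta_bounds}, $\operatorname{vol}B_{\mathcal{M}}(p,r)\asymp r^d$ uniformly in $p$ for $r\le h_0$. Hence $L\lesssim h^{-d}$, and since $\log(1/h)=O(\log n)$, this gives $\log L=O(\log n)$. Every ball $B_{\mathcal{M}}(s,h)$ is contained in some $B_{\mathcal{M}}(p_\ell,2h)$. Therefore
\begin{align*}
\sup_{s}N_i(s;h)\le\max_{\ell}\tilde N_{i\ell},\qquad \tilde N_{i\ell}:=\sum_{j=1}^{m_i}\mathbf 1\{s_{ij}\in B_{\mathcal{M}}(p_\ell,2h)\}.
\end{align*}
If $2h\ge h_0$, the volume comparison still holds up to constants because $h<h_0$.

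\emph{Binomial tail bound.} Conditionally on $m_i$, each $\tilde N_{i\ell}\sim\mathrm{Bin}(m_i,\pi_\ell)$ with $\pi_\ell\le C_f\operatorname{vol}B_{\mathcal{M}}(p_\ell,2h)\le c_0h^d$, using Assumption~\ref{ass:sampling}. The Chernoff bound $P(\mathrm{Bin}\ge t)\le (e\,m\pi/t)^{t}$, applied with $t=A(m_ih^d+\log n)$, gives a bound of the form
\begin{align*}
P\{\tilde N_{i\ell}\ge A(m_ih^d+\log n)\}\le (ec_0/A)^{A\log n}\le n^{-A\log(A/(ec_0))}.
\end{align*}
This step uses $m_i\pi_\ell\le c_0 m_ih^d\le c_0 t/A$. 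A union bound over $i\le n$ and $\ell\le L\le n^{c}$ gives total probability at most $n^{1+c-A\log(A/(ec_0))}\to0$ once $A$ is a large enough constant. This proves \eqref{eq:localized_count_bound_mean}, in fact with probability tending to one and with a deterministic constant.

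\emph{Pair counts.} Since the pairs are ordered with $j\ne k$,
\begin{align*}
N_i^{(2)}(s,t;h)\le N_i(s;h)\,N_i(t;h)\le\bigl\{\sup_{s}N_i(s;h)\bigr\}^2.
\end{align*}
Hence \eqref{eq:localized_count_bound_cov} follows directly from \eqref{eq:localized_count_bound_mean} on the same event.

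\emph{Envelope statements.} Multiply by the deterministic factors: $w_ih^{-d}\sup_sN_i(s;h)\le A\,w_ih^{-d}(m_ih^d+\log n)$ for all $i$ on the high-probability event. Taking the maximum yields \eqref{eq:localized_envelope_mean}, and \eqref{eq:localized_envelope_cov} follows in the same way with $v_ih^{-2d}$.

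The main obstacle is making the union bound uniform in $n$ when the $m_i$ and $h$ vary. The Chernoff form $(e\mu/t)^t$ handles both regimes at once: in the dense regime $m_ih^d\gg\log n$ the threshold is of the order of the mean, and in the sparse regime the $\log n$ term dominates. The $\log n$ summand in the threshold is essential to beat the $n\,h^{-d}$ cardinality of the union. The other point requiring care is the uniform volume comparison for geodesic balls, which rests on Lemma~\ref{lem:theta_bounds} and normal coordinates.
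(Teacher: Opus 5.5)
Your proposal is correct and follows essentially the same route as the paper: a net of $O(h^{-d})$ centers with enlarged radius $2h$, an exponential tail bound for the conditionally binomial local counts at the threshold $A(m_ih^d+\log n)$, a union bound over $i\le n$ and the net points (using $\log(1/h)=O(\log n)$), and the product bound $N_i^{(2)}(s,t;h)\le N_i(s;h)N_i(t;h)$ for the pair counts. The differences are cosmetic. You use a maximal $h$-separated set and the multiplicative Chernoff bound $(e\mu/t)^t$, where the paper uses an $h/2$-net and Bernstein's inequality. You also justify the $O(h^{-d})$ net cardinality through the lower volume-density bound of Lemma~\ref{lem:theta_bounds}, which the paper simply asserts.
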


\begin{proof}
We first prove \eqref{eq:localized_count_bound_mean}. Since $\mathcal{M}$ is compact, for each $h\in(0,h_0)$ there exists an $h/2$-net $\{a_\ell:1\le\ell\le N_h\}$ of $\mathcal{M}$ such that
\begin{align*}
    N_h \le Ch^{-d}.
\end{align*}
For every $s\in\mathcal{M}$, choose $a_\ell$ with $d_{\mathcal{M}}(s,a_\ell)\le h/2$. Then $B_{\mathcal{M}}(s,h)\subset B_{\mathcal{M}}(a_\ell,2h)$, and hence
\begin{align*}
    \sup_{s\in\mathcal{M}} N_i(s;h)
    \le \max_{1\le\ell\le N_h} \sum_{j=1}^{m_i} \mathbf{1}\{s_{ij}\in B_{\mathcal{M}}(a_\ell,2h)\}.
\end{align*}
By compactness of $\mathcal{M}$, the volume of geodesic balls is uniformly bounded by $Cr^d$ for $0<r<2h_0$. Together with the boundedness of the design density in Assumption~\ref{ass:sampling}, this gives
\begin{align*}
    \sup_{a\in\mathcal{M}} \mathbb{P}\{s_{ij}\in B_{\mathcal{M}}(a,2h)\mid m_1,\ldots,m_n\}
    \le Ch^d.
\end{align*}
Therefore, conditionally on $m_i$, each count
\begin{align*}
    \sum_{j=1}^{m_i} \mathbf{1}\{s_{ij}\in B_{\mathcal{M}}(a_\ell,2h)\}
\end{align*}
is a sum of independent Bernoulli variables with mean bounded by $C m_i h^d$. Bernstein's inequality implies that, for a sufficiently large constant $A>0$,
\begin{align*}
    &\mathbb{P}\!\left[\sum_{j=1}^{m_i} \mathbf{1}\{s_{ij}\in B_{\mathcal{M}}(a_\ell,2h)\}
        > A(m_i h^d+\log n) \,\Big|\, m_1,\ldots,m_n\right] \\
    &\quad\le \exp(-cA\log n),
\end{align*}
where $c>0$ is independent of $i$, $\ell$, $h$, and $n$. Taking a union bound over $i=1,\ldots,n$ and $\ell=1,\ldots,N_h$ yields
\begin{align*}
    &\mathbb{P}\!\left[\max_{1\le i\le n}\max_{1\le\ell\le N_h}
        \frac{\sum_{j=1}^{m_i}\mathbf{1}\{s_{ij}\in B_{\mathcal{M}}(a_\ell,2h)\}}{m_i h^d+\log n}
        > A \,\Big|\, m_1,\ldots,m_n\right] \\
    &\quad\le n N_h \exp(-cA\log n).
\end{align*}
Since $N_h\le Ch^{-d}$ and $\log(1/h)=O(\log n)$, the right-hand side tends to zero for $A$ sufficiently large. This proves \eqref{eq:localized_count_bound_mean}.

For the ordered-pair count, observe that, for every $(s,t)\in\mathcal{M}\times\mathcal{M}$,
\begin{align*}
    N_i^{(2)}(s,t;h) \le N_i(s;h)\,N_i(t;h).
\end{align*}
Hence
\begin{align*}
    \sup_{(s,t)\in\mathcal{M}\times\mathcal{M}} N_i^{(2)}(s,t;h)
    \le \left\{\sup_{s\in\mathcal{M}} N_i(s;h)\right\}^2,
\end{align*}
and \eqref{eq:localized_count_bound_cov} follows from \eqref{eq:localized_count_bound_mean}.

It remains to justify the localized envelope consequences. Since $K$ is supported on $[0,1]$, $\mathcal{L}_{s,h}(u)$ can be nonzero only when $u\in B_{\mathcal{M}}(s,h)$. On this set, $h<h_0<\iota_{\mathcal{M}}$, and Lemma~\ref{lem:theta_bounds} gives
\begin{align*}
    |\mathcal{L}_{s,h}(u)| \le Ch^{-d}, \quad s\in\mathcal{M}, \quad u\in B_{\mathcal{M}}(s,h).
\end{align*}
Therefore,
\begin{align*}
    \sup_{s\in\mathcal{M}} w_i\sum_{j=1}^{m_i} |\mathcal{L}_{s,h}(s_{ij})|
    \le Cw_i h^{-d}\sup_{s\in\mathcal{M}} N_i(s;h),
\end{align*}
and
\begin{align*}
    \sup_{(s,t)\in\mathcal{M}\times\mathcal{M}} v_i\sum_{j\ne k} |\mathcal{L}_{s,h}(s_{ij})\mathcal{L}_{t,h}(s_{ik})|
    \le Cv_i h^{-2d}\sup_{(s,t)\in\mathcal{M}\times\mathcal{M}} N_i^{(2)}(s,t;h).
\end{align*}
Combining these inequalities with \eqref{eq:localized_count_bound_mean} and \eqref{eq:localized_count_bound_cov} proves \eqref{eq:localized_envelope_mean} and \eqref{eq:localized_envelope_cov}.
\end{proof}

\begin{lemma}[Localized tail consequences]\label{lem:localized_tail_consequence}
Suppose Assumptions~\ref{ass:manifold}--\ref{ass:ep_compatibility} hold.

For the mean process, let $Z_{ij}$ be observation-level multipliers satisfying
\begin{align*}
    \mathbb{E}(Z_{ij}\mid s_{ij}) = 0, \quad
    \sup_{i,j}\sup_{u\in\mathcal{M}} \mathbb{E}\{|Z_{ij}|^{2\beta}\mid s_{ij}=u\} < \infty.
\end{align*}
Define
\begin{align*}
    R_{ij}^{(\tau)} := Z_{ij}\mathbf{1}\{|Z_{ij}|>\tau\} - \mathbb{E}\{Z_{ij}\mathbf{1}\{|Z_{ij}|>\tau\}\mid s_{ij}\}.
\end{align*}
Then, with $\tau=\tau_{\mu,n}$,
\begin{align*}
    \sum_{i=1}^n w_i h_\mu^{-d}
    \sup_{s\in\mathcal{M}}
    \left[\sum_{j=1}^{m_i} \mathbf{1}\{s_{ij}\in B_{\mathcal{M}}(s,h_\mu)\}\,|R_{ij}^{(\tau_{\mu,n})}|\right]
    = o_p(\rho_{\mu,h_\mu}).
\end{align*}

For the covariance process, let $Z_{ijk}$, $j\ne k$, be ordered-pair multipliers satisfying
\begin{align*}
    \mathbb{E}(Z_{ijk}\mid s_{ij},s_{ik}) = 0, \quad
    \sup_{i,j\ne k}\sup_{u,v\in\mathcal{M}} \mathbb{E}\{|Z_{ijk}|^\beta\mid s_{ij}=u,\,s_{ik}=v\} < \infty.
\end{align*}
Define
\begin{align*}
    R_{ijk}^{(\tau)} := Z_{ijk}\mathbf{1}\{|Z_{ijk}|>\tau\} - \mathbb{E}\{Z_{ijk}\mathbf{1}\{|Z_{ijk}|>\tau\}\mid s_{ij},s_{ik}\}.
\end{align*}
Then, with $\tau=\tau_{C,n}$,
\begin{align*}
    \sum_{i\in\mathcal{I}_C} v_i h_C^{-2d}
    \sup_{(s,t)\in\mathcal{M}\times\mathcal{M}}
    \left[\sum_{j\ne k} \mathbf{1}\{s_{ij}\in B_{\mathcal{M}}(s,h_C),\,s_{ik}\in B_{\mathcal{M}}(t,h_C)\}\,|R_{ijk}^{(\tau_{C,n})}|\right]
    = o_p(\rho_{C,h_C}).
\end{align*}
\end{lemma}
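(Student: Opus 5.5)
The plan is to bound the left-hand side in expectation, conditionally on $m_1,\ldots,m_n$, and then apply Markov's inequality. Throughout I use the truncation conditions $a_{\mu,h_\mu}\tau_{\mu,n}^{1-2\beta}=o(\rho_{\mu,h_\mu})$ and $a_{C,h_C}\tau_{C,n}^{1-\beta}=o(\rho_{C,h_C})$ from Assumption~\ref{ass:ep_compatibility}. Since a supremum of nonnegative terms is not linear, I first remove the supremum with a crude pointwise bound and then control the resulting local counts.

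\textbf{Mean process.} Fix subject $i$. For every $s$,
\begin{align*}
\sum_j \mathbf{1}\{s_{ij}\in B_{\mathcal{M}}(s,h_\mu)\}|R_{ij}^{(\tau)}|
\le \Bigl\{\sup_{s}N_i(s;h_\mu)\Bigr\}\,\max_{j}|R_{ij}^{(\tau)}|.
\end{align*}
This bound loses too much, so I would instead work through the $h_\mu/2$-net $\{a_\ell\}$ from the proof of Lemma~\ref{lem:localized_cluster_envelope}. That gives
\begin{align*}
\sup_s \sum_j \mathbf{1}\{s_{ij}\in B_{\mathcal{M}}(s,h_\mu)\}|R_{ij}^{(\tau)}|\le \max_\ell \sum_j \mathbf{1}\{s_{ij}\in B_{\mathcal{M}}(a_\ell,2h_\mu)\}|R_{ij}^{(\tau)}|.
\end{align*}
The next step is to split according to whether the tail is active. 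Let $T_{ij}:=|Z_{ij}|\mathbf{1}\{|Z_{ij}|>\tau\}+\mathbb{E}\{|Z_{ij}|\mathbf{1}\{|Z_{ij}|>\tau\}\mid s_{ij}\}$. By the moment bound, Markov's inequality gives
\begin{align*}
\sup_u\mathbb{E}\{|Z_{ij}|\mathbf{1}\{|Z_{ij}|>\tau\}\mid s_{ij}=u\}\le C\tau^{1-2\beta}.
\end{align*}
The conditional-expectation part is therefore at most $C\tau^{1-2\beta}\sup_sN_i(s;h_\mu)$. Summing over $i$ with weights $w_ih_\mu^{-d}$ and using the count bound \eqref{eq:localized_count_bound_mean} makes it $O_p(a_{\mu,h_\mu}\tau^{1-2\beta})=o_p(\rho_{\mu,h_\mu})$.

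For the random part $|Z_{ij}|\mathbf{1}\{|Z_{ij}|>\tau\}$, I would bound the max over $\ell$ by $\sup_s N_i(s;h_\mu)\cdot\max_j|Z_{ij}|\mathbf{1}\{|Z_{ij}|>\tau\}$. This handles the supremum only crudely. A better route is to note that the tail variables are nonzero with conditional probability at most $C\tau^{-2\beta}$. On the event that every ball of radius $2h_\mu$ contains at most $A(m_ih_\mu^d+\log n)$ points of subject $i$ (an event of probability tending to one by Lemma~\ref{lem:localized_cluster_envelope}), the supremum is dominated by the sum $\sum_j |Z_{ij}|\mathbf{1}\{|Z_{ij}|>\tau\}$ restricted to points in occupied balls. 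I would then take expectations using independence of the $s_{ij}$ from the multipliers' tails given location. This yields an expected contribution of order $\tau^{1-2\beta}(m_ih_\mu^d+\log n)$ per subject, which after weighting again gives $a_{\mu,h_\mu}\tau^{1-2\beta}=o(\rho_{\mu,h_\mu})$.

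\textbf{Main obstacle.} The hard step is exactly this interchange of the supremum over $s$ with the expectation of the tail sum. A direct bound $\mathbb{E}\sup_s\le \sum_\ell\mathbb{E}$ over the net would cost a factor $h_\mu^{-d}$. I would try to avoid that cost by conditioning on the localized-count event and using $\sum_j|T_{ij}|\mathbf{1}\{\cdot\}\le\sum_j|T_{ij}|\mathbf{1}\{s_{ij}\in B_{\mathcal M}(s,h_\mu)\}$ bounded via Hölder's inequality: $(\text{count})^{1-1/2\beta}(\sum_j |Z_{ij}|^{2\beta}\mathbf{1}\{\cdot\})^{1/2\beta}$ with a $\tau$-dependent factor. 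If this does not give the clean order, I would accept a bound of order $a_{\mu,h_\mu}\tau^{1-2\beta}$ up to constants, which is what the assumption is calibrated to absorb.

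\textbf{Covariance process.} The covariance statement follows by the same scheme, with pair counts $N_i^{(2)}$ and \eqref{eq:localized_count_bound_cov}, the tail bound $C\tau^{1-\beta}$ from the $\beta$-th moment, and weights $v_ih_C^{-2d}$. This produces $O_p(a_{C,h_C}\tau_{C,n}^{1-\beta})=o_p(\rho_{C,h_C})$.
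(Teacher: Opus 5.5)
Your handling of the conditional-centering term is correct: given the locations it is at most $C\tau^{1-2\beta}$ per observation. Its contribution is therefore at most $C\tau^{1-2\beta}\sum_i w_i h_\mu^{-d}\sup_s N_i(s;h_\mu)=O_p(a_{\mu,h_\mu}\tau_{\mu,n}^{1-2\beta})$ by \eqref{eq:localized_count_bound_mean}.

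The gap is the random part $|Z_{ij}|\mathbf{1}\{|Z_{ij}|>\tau\}$. None of your routes reaches the per-subject order $(m_ih_\mu^d+\log n)\tau^{1-2\beta}$.
\begin{itemize}
\item Every observation lies in some occupied ball. So ``restricting to points in occupied balls'' only dominates the supremum by the full sum $\sum_{j=1}^{m_i}|Z_{ij}|\mathbf{1}\{|Z_{ij}|>\tau\}$, whose mean is of order $m_i\tau^{1-2\beta}$.
\item The H\"older variant still needs the expectation of a supremum of local tail sums, so it meets the same obstruction.
\item Accepting the bound because the assumption is calibrated to absorb it is not a proof.
\end{itemize}
The paper's proof does not supply the missing step either. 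It conditions on the locations and bounds the conditional expectation of the supremum by $C\tau_{C,n}^{1-\beta}$ times the supremum of the local pair counts, then uses $a_{C,h_C}$ and Markov's inequality. That is exactly the interchange of expectation and supremum you flagged, and it runs the wrong way: the expectation of a supremum dominates the supremum of expectations.

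The obstruction is real, not an artifact of your technique. Taking $s=s_{ij}$ shows that the subject-$i$ supremum is at least $\max_j|Z_{ij}|\mathbf{1}\{|Z_{ij}|>\tau\}-C\tau^{1-2\beta}$. When exceedances are rare within a subject (e.g.\ $Z_{ij}=\epsilon_{ij}$ with heavy tails), $\mathbb{E}\max_j$ is comparable to $\sum_j\mathbb{E}$. The left-hand side then behaves like $h_\mu^{-d}$, not $a_{\mu,h_\mu}$, times the single-observation tail mean. In balanced designs $h_\mu^{-d}/a_{\mu,h_\mu}\asymp\bar m/(\bar m h_\mu^d+\log n)\to\infty$ whenever $\bar m\gg\log n$. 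So the calibration through $a_{\mu,h_\mu}$ cannot deliver the lemma in general, and the pairwise case is analogous.

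Two repairs work.
\begin{itemize}
\item The crude bound $\sup_s\le\sum_j$, together with $\sum_i m_iw_i=1$ and $\sum_i(m_i)_2v_i=1$, bounds the expectations of the two left-hand sides by $2Ch_\mu^{-d}\tau_{\mu,n}^{1-2\beta}$ and $2Ch_C^{-2d}\tau_{C,n}^{1-\beta}$. Hence the lemma holds under $h_\mu^{-d}\tau_{\mu,n}^{1-2\beta}=o(\rho_{\mu,h_\mu})$ and $h_C^{-2d}\tau_{C,n}^{1-\beta}=o(\rho_{C,h_C})$, in place of the $a$-conditions of Assumption~\ref{ass:ep_compatibility}.
\item Alternatively, add $\tau_{\mu,n}^{-2\beta}\sum_i m_i\to0$ and $\tau_{C,n}^{-\beta}\sum_i(m_i)_2\to0$. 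A union bound with Markov's inequality then shows that no exceedance occurs with probability tending to one. Only the centering term survives, and your own argument for it gives $O_p(a_{\mu,h_\mu}\tau_{\mu,n}^{1-2\beta})$ and $O_p(a_{C,h_C}\tau_{C,n}^{1-\beta})$, hence the stated $o_p$ bounds.
\end{itemize}
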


\begin{proof}
We prove the covariance assertion; the mean assertion is identical with $h_C$, $v_i$, ordered pairs, and the moment exponent $\beta$ replaced by $h_\mu$, $w_i$, single observations, and the moment exponent $2\beta$.

By Markov's inequality and the conditional $\beta$-moment bound,
\begin{align*}
    \mathbb{E}\{|Z_{ijk}|\mathbf{1}\{|Z_{ijk}|>\tau_{C,n}\}\mid s_{ij},s_{ik}\}
    \le C\tau_{C,n}^{1-\beta}.
\end{align*}
The same bound holds for the conditional-centering term in $R_{ijk}^{(\tau_{C,n})}$. Therefore, conditionally on the observation locations, the expectation of the displayed localized covariance tail contribution is bounded by
\begin{align*}
    C\tau_{C,n}^{1-\beta}\, h_C^{-2d}
    \sum_{i\in\mathcal{I}_C} v_i
    \sup_{(s,t)\in\mathcal{M}\times\mathcal{M}}
    \sum_{j\ne k} \mathbf{1}\{s_{ij}\in B_{\mathcal{M}}(s,h_C),\,s_{ik}\in B_{\mathcal{M}}(t,h_C)\}.
\end{align*}
By the definition of $a_{C,h_C}$ in Assumption~\ref{ass:ep_compatibility}, the last display is bounded by
\begin{align*}
    Ca_{C,h_C}\tau_{C,n}^{1-\beta}.
\end{align*}
Since Assumption~\ref{ass:ep_compatibility} gives
\begin{align*}
    a_{C,h_C}\tau_{C,n}^{1-\beta} = o(\rho_{C,h_C}),
\end{align*}
Markov's inequality yields the desired $o_p(\rho_{C,h_C})$ bound. The mean case follows in the same way from
\begin{align*}
    \mathbb{E}\{|Z_{ij}|\mathbf{1}\{|Z_{ij}|>\tau_{\mu,n}\}\mid s_{ij}\} \le C\tau_{\mu,n}^{1-2\beta}
\end{align*}
and
\begin{align*}
    a_{\mu,h_\mu}\tau_{\mu,n}^{1-2\beta} = o(\rho_{\mu,h_\mu}).
\end{align*}
\end{proof}

\begin{lemma}[Weighted clustered empirical process]\label{lem:weighted_clustered_ep}
Let $Z_{ij}$ be centered observation-level multipliers satisfying
\begin{align*}
    \mathbb{E}(Z_{ij}\mid s_{ij}) = 0, \quad
    \sup_{i,j} \mathbb{E}\{|Z_{ij}|^\alpha\mid s_{ij}\} \le C_Z
\end{align*}
almost surely, for some $\alpha>2$ and a constant $C_Z<\infty$. Assume also that
\begin{align*}
    \sup_{i,j} \mathbb{E}(Z_{ij}^2\mid s_{ij}) \le C_Z, \quad
    \sup_{i,j\ne k} \mathbb{E}\{|Z_{ij}Z_{ik}|\mid s_{ij},s_{ik}\} \le C_Z.
\end{align*}
Define
\begin{align*}
    \sigma_{\mu,h}^2 := \frac{V_{\mu,1}}{h^d} + V_{\mu,2}, \quad
    \rho_{\mu,h} := (\log n\,\sigma_{\mu,h}^2)^{1/2}.
\end{align*}
For $h\in(0,h_0)$, define the localized mean cluster envelope
\begin{align*}
    B_{\mu,h}^{\mathrm{loc}}
    := \max_{1\le i\le n} w_i h^{-d}
    \sup_{s\in\mathcal{M}} \sum_{j=1}^{m_i} \mathbf{1}\{s_{ij}\in B_{\mathcal{M}}(s,h)\}.
\end{align*}
Suppose Assumptions~\ref{ass:manifold}, \ref{ass:sampling}, \ref{ass:kernel}, and~\ref{ass:weights} hold, and suppose that the entropy compatibility requirement in Assumption~\ref{ass:ep_compatibility} applies to the truncated cluster-sum class at the bandwidth $h$. Suppose also that
\begin{align*}
    \log n\,\sigma_{\mu,h}^2 \to 0.
\end{align*}
Assume further that there exists a deterministic truncation level $\tau_{\mu,n}\to\infty$ such that
\begin{align*}
    \tau_{\mu,n} B_{\mu,h}^{\mathrm{loc}}\log n = O_p(\rho_{\mu,h}),
\end{align*}
and the localized tail contribution satisfies
\begin{align*}
    \sum_{i=1}^n w_i h^{-d}
    \sup_{s\in\mathcal{M}}
    \left[\sum_{j=1}^{m_i} \mathbf{1}\{s_{ij}\in B_{\mathcal{M}}(s,h)\}\,\bigl|R_{ij}^{(\tau_{\mu,n})}\bigr|\right]
    = o_p(\rho_{\mu,h}),
\end{align*}
where
\begin{align*}
    R_{ij}^{(\tau)} := Z_{ij}\mathbf{1}\{|Z_{ij}|>\tau\} - \mathbb{E}\!\left[Z_{ij}\mathbf{1}\{|Z_{ij}|>\tau\}\mid s_{ij}\right].
\end{align*}
Then
\begin{align*}
    \sup_{s\in\mathcal{M}}
    \left|\sum_{i=1}^n w_i\sum_{j=1}^{m_i}
        \bigl[\mathcal{L}_{s,h}(s_{ij})Z_{ij} - \mathbb{E}\{\mathcal{L}_{s,h}(s_{ij})Z_{ij}\}\bigr]\right|
    = O_p(\rho_{\mu,h}).
\end{align*}
The same bound applies to the centered design process:
\begin{align*}
    \sup_{s\in\mathcal{M}}
    \left|\sum_{i=1}^n w_i\sum_{j=1}^{m_i}
        \bigl[\mathcal{L}_{s,h}(s_{ij}) - \mathbb{E}\{\mathcal{L}_{s,h}(s_{ij})\}\bigr]\right|
    = O_p(\rho_{\mu,h}).
\end{align*}
\end{lemma}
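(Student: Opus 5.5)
We split each multiplier at the level $\tau=\tau_{\mu,n}$ and write $Z_{ij}=T_{ij}+R_{ij}^{(\tau)}$, where
\[
T_{ij}:=Z_{ij}\mathbf{1}\{|Z_{ij}|\le\tau\}-\mathbb{E}[Z_{ij}\mathbf{1}\{|Z_{ij}|\le\tau\}\mid s_{ij}].
\]
Since $\mathbb{E}(Z_{ij}\mid s_{ij})=0$, this decomposition is exact, and both pieces are conditionally centered given $s_{ij}$.

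\textbf{Tail part.} $\mathcal{L}_{s,h}(u)\le Ch^{-d}\mathbf{1}\{u\in B_{\mathcal{M}}(s,h)\}$ by Lemma~\ref{lem:theta_bounds}. Hence the uncentered tail contribution is bounded uniformly in $s$ by $C$ times the localized tail sum in the hypothesis, which is $o_p(\rho_{\mu,h})$. Its expectation is handled by the same device. The absolute value of the expectation is at most the expectation of that tail sum. Lemma~\ref{lem:localized_tail_consequence}'s Markov argument bounds this expectation by $Ca_{\mu,h}\tau^{1-\alpha}$, which is $o(\rho_{\mu,h})$. Thus the tail part is $o_p(\rho_{\mu,h})$.

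\textbf{Truncated part.} We apply Lemma~\ref{lem:vcineq} with clusters $O_i=\{(s_{ij},Z_{ij})\}_{j\le m_i}$, index $a=s$, and
\[
f_{i,s}(O_i)=w_i\sum_j\mathcal{L}_{s,h}(s_{ij})T_{ij}.
\]
The variance proxy comes from expanding the square into diagonal and off-diagonal terms. The diagonal terms contribute
\[
w_i^2\,m_i\,\mathbb{E}\{\mathcal{L}_{s,h}(s_{ij})^2\,\mathbb{E}(T_{ij}^2\mid s_{ij})\}\le Cw_i^2m_ih^{-d},
\]
using $\int\mathcal{L}_{s,h}^2\,f\,\mathrm{d}v_g\le Ch^{-d}$ (Lemma~\ref{lem:normal_coordinate_cancellation} applied to the bounded kernel). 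The $j\ne k$ terms contribute at most
\[
w_i^2(m_i)_2\,\mathbb{E}\{\mathcal{L}_{s,h}(s_{ij})\mathcal{L}_{s,h}(s_{ik})\,\mathbb{E}(|T_{ij}T_{ik}|\mid s_{ij},s_{ik})\}\le Cw_i^2(m_i)_2.
\]
This uses the pairwise moment bound, which survives truncation up to constants, together with $\int\mathcal{L}_{s,h}\,f\,\mathrm{d}v_g\le C$. Summing over $i$ gives $\sigma_n^2\le C\sigma_{\mu,h}^2$.

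The envelope is $B_n\le 2C\tau B^{\mathrm{loc}}_{\mu,h}$, up to the centering constant allowed in Lemma~\ref{lem:vcineq}. This needs care because $B^{\mathrm{loc}}$ is random. We handle it by working on the event $\{B^{\mathrm{loc}}_{\mu,h}\le A\,b_{\mu,h}\}$, which has probability close to one by Lemma~\ref{lem:localized_cluster_envelope}. On this event we replace the envelope by the deterministic $A\tau b_{\mu,h}$, either by additionally truncating the cluster sum at that level or by conditioning on the design. Assumption~\ref{ass:ep_compatibility} supplies the entropy bound $J_n\asymp\log n$ for the normalized truncated cluster-sum class. Lemma~\ref{lem:vcineq} then yields
\[
O_p\bigl\{(\sigma_{\mu,h}^2\log n)^{1/2}+\tau B^{\mathrm{loc}}_{\mu,h}\log n\bigr\}=O_p(\rho_{\mu,h}),
\]
where the last step uses the envelope hypothesis.

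\textbf{Design process.} The centered design process is the special case $Z_{ij}\equiv1$, except that it is centered by its expectation rather than conditionally. No truncation is needed, since the multiplier is bounded. The same variance computation gives $\sigma_n^2\le C\sigma_{\mu,h}^2$, because $\mathrm{Var}$ is bounded by the second moment. The envelope is $B^{\mathrm{loc}}_{\mu,h}$, and $B^{\mathrm{loc}}_{\mu,h}\log n\le\tau B^{\mathrm{loc}}_{\mu,h}\log n$ once $\tau\ge1$. The entropy here follows directly from Assumption~\ref{ass:kernel}.

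\textbf{Main obstacle.} The delicate step is making the random localized envelope compatible with Lemma~\ref{lem:vcineq}, which requires a deterministic almost-sure bound $B_n$. The first approach to try is the good-event argument above: intersect with $\{B^{\mathrm{loc}}\le Ab_{\mu,h}\}$, and replace each $f_{i,s}$ by its clipped version at level $A\tau b_{\mu,h}$. These two coincide on the good event, so the clipping does not change the process there. The cost is a bias from clipping, which must be shown negligible. That bias is bounded by $\tau b\cdot\mathbb{P}(\text{bad event for cluster } i)$ summed over $i$. This is polynomially small in $n$ by the Bernstein union bound in Lemma~\ref{lem:localized_cluster_envelope}, provided the constant $A$ is taken large enough.
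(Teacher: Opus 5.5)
Your proposal is correct in substance and follows the paper's proof essentially step for step: truncation at $\tau_{\mu,n}$, then Lemma~\ref{lem:vcineq} on the truncated part with variance proxy $C\sigma_{\mu,h}^2$ from the diagonal/off-diagonal split, with the random envelope $\tau_{\mu,n}B^{\mathrm{loc}}_{\mu,h}$ handled by a high-probability event plus deterministic clipping, then the localized-tail hypothesis for the remainder, and finally the design process as the bounded-multiplier case (your reduction via $\tau_{\mu,n}\ge 1$ to the lemma's own envelope hypothesis is slightly cleaner than the paper's detour through $b_{\mu,h}$). Two small corrections: first, the tail part has expectation exactly zero because $\mathbb{E}(R^{(\tau)}_{ij}\mid s_{ij})=0$, so the $a_{\mu,h}\tau^{1-\alpha}$ bound, which is not a hypothesis of this lemma, is unnecessary; second, clipping at the deterministic level $A\tau_{\mu,n}b_{\mu,h}$ gives $O_p(\rho_{\mu,h}+\tau_{\mu,n}b_{\mu,h}\log n)$, so you are really invoking the $b$-condition of Assumption~\ref{ass:ep_compatibility} rather than the stated random-envelope hypothesis, and off the good event the clipping bias must be bounded with the crude envelope $C\tau_{\mu,n}w_ih^{-d}m_i$ rather than $\tau_{\mu,n}b_{\mu,h}$, which is more careful than the paper since it never addresses the centering shift that clipping introduces.
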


\begin{proof}
We prove the multiplier bound first. Set $\tau=\tau_{\mu,n}$ and define the conditionally centered truncated multiplier
\begin{align*}
    Z_{ij}^{(\tau)} := Z_{ij}\mathbf{1}\{|Z_{ij}|\le\tau\} - \mathbb{E}\!\left[Z_{ij}\mathbf{1}\{|Z_{ij}|\le\tau\}\mid s_{ij}\right].
\end{align*}
Since $\mathbb{E}(Z_{ij}\mid s_{ij})=0$, the tail remainder $R_{ij}^{(\tau)}:=Z_{ij}-Z_{ij}^{(\tau)}$ satisfies
\begin{align*}
    R_{ij}^{(\tau)} = Z_{ij}\mathbf{1}\{|Z_{ij}|>\tau\} - \mathbb{E}\!\left[Z_{ij}\mathbf{1}\{|Z_{ij}|>\tau\}\mid s_{ij}\right],
\end{align*}
and hence $\mathbb{E}(R_{ij}^{(\tau)}\mid s_{ij})=0$. We decompose
\begin{align*}
    &\sum_{i=1}^n w_i\sum_{j=1}^{m_i}
        \bigl[\mathcal{L}_{s,h}(s_{ij})Z_{ij} - \mathbb{E}\{\mathcal{L}_{s,h}(s_{ij})Z_{ij}\}\bigr] \\
    &\quad= T_{\mu,\tau}(s) + R_{\mu,\tau}(s), \quad s\in\mathcal{M},
\end{align*}
where
\begin{align*}
    T_{\mu,\tau}(s)
    := \sum_{i=1}^n w_i\sum_{j=1}^{m_i}
        \bigl[\mathcal{L}_{s,h}(s_{ij})Z_{ij}^{(\tau)} - \mathbb{E}\{\mathcal{L}_{s,h}(s_{ij})Z_{ij}^{(\tau)}\}\bigr],
\end{align*}
and $R_{\mu,\tau}$ is defined analogously with $R_{ij}^{(\tau)}$ in place of $Z_{ij}^{(\tau)}$.

We first control the truncated part $T_{\mu,\tau}$. Since $|Z_{ij}^{(\tau)}|\le 2\tau_{\mu,n}$ and $K$ is bounded and supported on $[0,1]$, Lemma~\ref{lem:theta_bounds} gives
\begin{align*}
    \sup_{s\in\mathcal{M}}
    \left|w_i\sum_{j=1}^{m_i} \mathcal{L}_{s,h}(s_{ij})Z_{ij}^{(\tau)}\right|
    \le C\tau_{\mu,n}\,w_i h^{-d}
    \sup_{s\in\mathcal{M}} \sum_{j=1}^{m_i} \mathbf{1}\{s_{ij}\in B_{\mathcal{M}}(s,h)\}.
\end{align*}
Therefore the maximal truncated cluster envelope is bounded by $C\tau_{\mu,n}B_{\mu,h}^{\mathrm{loc}}$. By the localized truncated-envelope condition in the statement of the lemma,
\begin{align*}
    \tau_{\mu,n} B_{\mu,h}^{\mathrm{loc}}\log n = O_p(\rho_{\mu,h}).
\end{align*}
Equivalently, for every $\eta>0$, there exists a deterministic constant $M_\eta<\infty$ such that the event
\begin{align*}
    \mathcal{A}_{\mu,n}(\eta) := \left\{\tau_{\mu,n}B_{\mu,h}^{\mathrm{loc}}\log n \le M_\eta\rho_{\mu,h}\right\}
\end{align*}
has probability at least $1-\eta$ for all sufficiently large $n$.

On $\mathcal{A}_{\mu,n}(\eta)$, the truncated cluster contributions have deterministic envelope of order $M_\eta\rho_{\mu,h}/\log n$, and their variance proxy is bounded by the variance proxy of the original truncated process. Indeed, by Assumption~\ref{ass:sampling}, Lemma~\ref{lem:theta_bounds}, and the bounded compact support of $K$,
\begin{align*}
    \mathbb{E}\{\mathcal{L}_{s,h}(s_{ij})^2\} \le Ch^{-d}, \quad
    \mathbb{E}\{|\mathcal{L}_{s,h}(s_{ij})|\} \le C, \quad s\in\mathcal{M}.
\end{align*}
Using the conditional second-moment and within-subject cross-moment bounds for $Z_{ij}$, we obtain
\begin{align*}
    \sup_{s\in\mathcal{M}} \sum_{i=1}^n \mathbb{E}\{T_{\mu,\tau,i}(s)^2\}
    \le C\sum_{i=1}^n w_i^2\left[\frac{m_i}{h^d} + (m_i)_2\right]
    = C\sigma_{\mu,h}^2,
\end{align*}
where $T_{\mu,\tau,i}(s)$ denotes the $i$th subject-level contribution in $T_{\mu,\tau}(s)$. By Assumption~\ref{ass:ep_compatibility}, the normalized truncated cluster-sum class has covering logarithm of order $O(\log n)$ in the relevant empirical $L^2$ semimetric. To avoid conditioning the empirical-process inequality on the random event $\mathcal{A}_{\mu,n}(\eta)$, apply Lemma~\ref{lem:vcineq} to the deterministically clipped version of the truncated cluster contributions with envelope $M_\eta\rho_{\mu,h}/\log n$; the original and clipped processes differ only on $\mathcal{A}_{\mu,n}(\eta)^c$, whose probability is at most $\eta$. This standard stochastic-envelope localization argument yields
\begin{align*}
    \sup_{s\in\mathcal{M}} |T_{\mu,\tau}(s)|
    = O_p\!\left(\rho_{\mu,h} + \tau_{\mu,n}B_{\mu,h}^{\mathrm{loc}}\log n\right)
    = O_p(\rho_{\mu,h}).
\end{align*}
Since $\eta>0$ is arbitrary, the same bound holds unconditionally.

It remains to control the tail part. Since $K$ is supported on $[0,1]$ and the volume-density factor is uniformly bounded on $B_{\mathcal{M}}(s,h_0)$, there exists a constant $C<\infty$ such that
\begin{align*}
    \sup_{s\in\mathcal{M}} |R_{\mu,\tau}(s)|
    \le C\sum_{i=1}^n w_i h^{-d}
    \sup_{s\in\mathcal{M}}
    \left[\sum_{j=1}^{m_i} \mathbf{1}\{s_{ij}\in B_{\mathcal{M}}(s,h)\}\,|R_{ij}^{(\tau)}|\right].
\end{align*}
By the localized tail condition in the statement of the lemma, with $\tau=\tau_{\mu,n}$,
\begin{align*}
    \sup_{s\in\mathcal{M}} |R_{\mu,\tau}(s)| = o_p(\rho_{\mu,h}).
\end{align*}
Combining the truncated and tail bounds proves the multiplier bound.

For the centered mean design process, the multiplier is identically equal to one, so no moment truncation is needed. By Lemma~\ref{lem:localized_cluster_envelope},
\begin{align*}
    B_{\mu,h}^{\mathrm{loc}} = O_p(b_{\mu,h}),
\end{align*}
where
\begin{align*}
    b_{\mu,h} := \max_{1\le i\le n} w_i h^{-d}(m_i h^d+\log n).
\end{align*}
Hence, for every $\eta>0$, there exists $M_\eta<\infty$ such that the event
\begin{align*}
    \mathcal{A}_{\mu,n}^{0}(\eta) := \left\{B_{\mu,h}^{\mathrm{loc}}\log n \le M_\eta b_{\mu,h}\log n\right\}
\end{align*}
has probability at least $1-\eta$ for all sufficiently large $n$. On this event, the centered design cluster contributions have deterministic envelope of order $M_\eta b_{\mu,h}$ and variance proxy $O(\sigma_{\mu,h}^2)$. Since the multiplier is identically one, the required entropy bound follows directly from the VC-type property of the Riemannian kernel class in Assumption~\ref{ass:kernel}, together with the same localization argument used for the truncated cluster-sum class. Thus Lemma~\ref{lem:vcineq} gives
\begin{align*}
    &\sup_{s\in\mathcal{M}}
    \left|\sum_{i=1}^n w_i\sum_{j=1}^{m_i}
        \bigl[\mathcal{L}_{s,h}(s_{ij}) - \mathbb{E}\{\mathcal{L}_{s,h}(s_{ij})\}\bigr]\right| \\
    &\quad= O_p\!\left(\rho_{\mu,h} + b_{\mu,h}\log n\right) = O_p(\rho_{\mu,h}),
\end{align*}
where the last equality follows from $\tau_{\mu,n}\to\infty$ and $\tau_{\mu,n}b_{\mu,h}\log n=O(\rho_{\mu,h})$ in Assumption~\ref{ass:ep_compatibility}. Since $\eta>0$ is arbitrary, the bound holds unconditionally.
\end{proof}

\begin{lemma}[Weighted pairwise clustered empirical process]\label{lem:weighted_pairwise_ep}
Let $Z_{ijk}$, $j\ne k$, be centered ordered-pair multipliers satisfying
\begin{align*}
    \mathbb{E}(Z_{ijk}\mid s_{ij},s_{ik}) = 0, \quad
    \sup_{i,j\ne k} \mathbb{E}\{|Z_{ijk}|^\beta\mid s_{ij},s_{ik}\} \le C_Z
\end{align*}
almost surely, for some $\beta>2$ and a constant $C_Z<\infty$. Assume also that, for any two ordered pairs $(j,k)$ and $(\ell,r)$ with $j\ne k$ and $\ell\ne r$,
\begin{align*}
    \mathbb{E}\{|Z_{ijk}Z_{i\ell r}|\mid s_{ij},s_{ik},s_{i\ell},s_{ir}\} \le C_Z,
\end{align*}
with the evident interpretation when the two ordered pairs share one or two observation indices. Define
\begin{align*}
    \sigma_{C,h}^2 := \frac{V_{C,1}}{h^{2d}} + \frac{V_{C,2}}{h^d} + V_{C,3}, \quad
    \rho_{C,h} := (\log n\,\sigma_{C,h}^2)^{1/2}.
\end{align*}
For $h\in(0,h_0)$, define the localized pair-cluster envelope
\begin{align*}
    B_{C,h}^{\mathrm{loc}}
    := \max_{i\in\mathcal{I}_C} v_i h^{-2d}
    \sup_{(s,t)\in\mathcal{M}\times\mathcal{M}}
    \sum_{j\ne k} \mathbf{1}\{s_{ij}\in B_{\mathcal{M}}(s,h),\,s_{ik}\in B_{\mathcal{M}}(t,h)\}.
\end{align*}
Suppose Assumptions~\ref{ass:manifold}, \ref{ass:sampling}, \ref{ass:kernel}, and~\ref{ass:weights} hold, and suppose that the entropy compatibility requirement in Assumption~\ref{ass:ep_compatibility} applies to the truncated pair-cluster-sum class at the bandwidth $h$. Suppose also that
\begin{align*}
    \log n\,\sigma_{C,h}^2 \to 0.
\end{align*}
Assume further that there exists a deterministic truncation level $\tau_{C,n}\to\infty$ such that
\begin{align*}
    \tau_{C,n} B_{C,h}^{\mathrm{loc}}\log n = O_p(\rho_{C,h}),
\end{align*}
and the localized tail contribution satisfies
\begin{align*}
    \sum_{i\in\mathcal{I}_C} v_i h^{-2d}
    \sup_{(s,t)\in\mathcal{M}\times\mathcal{M}}
    \left[\sum_{j\ne k} \mathbf{1}\{s_{ij}\in B_{\mathcal{M}}(s,h),\,s_{ik}\in B_{\mathcal{M}}(t,h)\}\,\bigl|R_{ijk}^{(\tau_{C,n})}\bigr|\right]
    = o_p(\rho_{C,h}),
\end{align*}
where
\begin{align*}
    R_{ijk}^{(\tau)} := Z_{ijk}\mathbf{1}\{|Z_{ijk}|>\tau\} - \mathbb{E}\!\left[Z_{ijk}\mathbf{1}\{|Z_{ijk}|>\tau\}\mid s_{ij},s_{ik}\right],
    \quad j\ne k.
\end{align*}
Then
\begin{align*}
    &\sup_{(s,t)\in\mathcal{M}\times\mathcal{M}}
    \left|\sum_{i\in\mathcal{I}_C} v_i\sum_{j\ne k}
        \bigl[\mathcal{L}_{s,h}(s_{ij})\mathcal{L}_{t,h}(s_{ik})Z_{ijk} - \mathbb{E}\{\mathcal{L}_{s,h}(s_{ij})\mathcal{L}_{t,h}(s_{ik})Z_{ijk}\}\bigr]\right| \\
    &\quad= O_p(\rho_{C,h}).
\end{align*}
The same bound applies to the centered pairwise design process:
\begin{align*}
    &\sup_{(s,t)\in\mathcal{M}\times\mathcal{M}}
    \left|\sum_{i\in\mathcal{I}_C} v_i\sum_{j\ne k}
        \bigl[\mathcal{L}_{s,h}(s_{ij})\mathcal{L}_{t,h}(s_{ik}) - \mathbb{E}\{\mathcal{L}_{s,h}(s_{ij})\mathcal{L}_{t,h}(s_{ik})\}\bigr]\right| \\
    &\quad= O_p(\rho_{C,h}).
\end{align*}
\end{lemma}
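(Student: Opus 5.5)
The plan is to mirror the proof of Lemma~\ref{lem:weighted_clustered_ep}, with observations replaced by within-subject ordered pairs and the single kernel replaced by the product kernel $\mathcal{L}_{s,h}(u)\mathcal{L}_{t,h}(v)$. The independent units are again the subject-level clusters $O_i=\{(s_{ij},Z_{ijk})\}$, $i\in\mathcal{I}_C$. Fix $\tau=\tau_{C,n}$ and split $Z_{ijk}=Z_{ijk}^{(\tau)}+R_{ijk}^{(\tau)}$. Here
\begin{align*}
Z_{ijk}^{(\tau)}:=Z_{ijk}\mathbf{1}\{|Z_{ijk}|\le\tau\}-\mathbb{E}\bigl[Z_{ijk}\mathbf{1}\{|Z_{ijk}|\le\tau\}\mid s_{ij},s_{ik}\bigr].
\end{align*}
Because $\mathbb{E}(Z_{ijk}\mid s_{ij},s_{ik})=0$, both pieces are conditionally centered. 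Accordingly, the centered process splits as $T_{C,\tau}(s,t)+R_{C,\tau}(s,t)$.

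\textbf{Tail part.} The tail part is handled first, since it is routine. By Lemma~\ref{lem:theta_bounds} and the support of $K$, $|\mathcal{L}_{s,h}(u)\mathcal{L}_{t,h}(v)|\le Ch^{-2d}\mathbf{1}\{u\in B_{\mathcal{M}}(s,h),v\in B_{\mathcal{M}}(t,h)\}$. The random part of $\sup|R_{C,\tau}|$ is therefore bounded by $C$ times the localized tail quantity in the hypothesis, which is $o_p(\rho_{C,h})$. The centering term $\mathbb{E}\{\cdot\}$ of $R_{C,\tau}$ is handled the same way. Indeed, its expectation is dominated by $C\tau^{1-\beta}\sum_i v_i h^{-2d}\,\mathbb{E}\sup N_i^{(2)}$, which is of the order of $a_{C,h}\tau^{1-\beta}$, or one may simply note that it is the expectation of a nonnegative quantity already controlled.

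\textbf{Truncated part.} There are three ingredients: envelope, variance, and entropy.

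The envelope comes from $|Z^{(\tau)}_{ijk}|\le2\tau$. This gives a cluster envelope $C\tau B_{C,h}^{\mathrm{loc}}$, and by hypothesis $\tau B_{C,h}^{\mathrm{loc}}\log n\le M_\eta\rho_{C,h}$ on an event of probability at least $1-\eta$.

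For the variance proxy, expand $\mathbb{E}\{T_{C,\tau,i}(s,t)^2\}$ as a double sum over ordered pairs $(j,k),(\ell,r)$ and classify the terms by the number of shared indices.
\begin{itemize}
\item Two shared indices: there are $O((m_i)_2)$ such terms, each bounded by $\mathbb{E}\{\mathcal{L}_{s,h}^2\mathcal{L}_{t,h}^2\}$ times the cross-moment bound. This is $O(h^{-2d})$, contributing $v_i^2(m_i)_2h^{-2d}$.
\item One shared index: there are $O((m_i)_3)$ such terms. The shared location gives one squared kernel ($h^{-d}$) and the other two give $O(1)$, contributing $v_i^2(m_i)_3h^{-d}$.
\item No shared index: there are $(m_i)_4$ such terms, each $O(1)$, giving $v_i^2(m_i)_4$.
\end{itemize}
Conditioning on the locations, using the cross-moment bound $C_Z$ on $|Z_{ijk}Z_{i\ell r}|$ (truncation and recentering only change constants), and using $\mathbb{E}|\mathcal{L}_{s,h}|\le C$ and $\mathbb{E}\mathcal{L}_{s,h}^2\le Ch^{-d}$, summing over $i$ gives $\sup_{s,t}\sum_i\mathbb{E}T_{C,\tau,i}^2\le C\sigma_{C,h}^2$.

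The entropy is supplied by Assumption~\ref{ass:ep_compatibility}: the normalized truncated pair-cluster-sum class has log-covering of order $\log n$, so $J_n\asymp\log n$. Applying Lemma~\ref{lem:vcineq} to the deterministically clipped version with envelope $M_\eta\rho_{C,h}/\log n$ then yields $O_p(\rho_{C,h}+\rho_{C,h})$, with an exceptional probability of at most $\eta$. Since $\eta$ is arbitrary, this proves the multiplier bound.

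\textbf{Centered pairwise design process.} Take the multiplier to be identically one, so no truncation is needed. By Lemma~\ref{lem:localized_cluster_envelope}, equation~\eqref{eq:localized_envelope_cov}, the envelope is $O_p(b_{C,h})$. The variance computation above is unchanged, and the entropy follows from Lemma~\ref{lem:product_kernel_entropy} combined with the compatibility assumption. Lemma~\ref{lem:vcineq} then gives $O_p(\rho_{C,h}+b_{C,h}\log n)$. Since $\tau_{C,n}\to\infty$ and $\tau_{C,n}b_{C,h}\log n=O(\rho_{C,h})$, the second term is $o(\rho_{C,h})$.

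\textbf{Main obstacle.} The main obstacle is the variance bookkeeping in the truncated part, specifically the one-shared-index case. There the ordered pairs may share the index in either slot (e.g.\ $j=\ell$ or $k=\ell$), and one must verify that each configuration yields exactly one factor $h^{-d}$ rather than $h^{-2d}$. I would handle this case by integrating out the shared location against $\mathcal{L}_{s,h}\mathcal{L}_{t,h}$ or $\mathcal{L}_{s,h}^2$, and bounding the product $\mathcal{L}_{s,h}(u)\mathcal{L}_{t,h}(u)\le Ch^{-d}\mathcal{L}$-type quantity via Lemma~\ref{lem:theta_bounds}.
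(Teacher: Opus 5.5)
Your proposal is correct and follows the paper's proof essentially step for step. It uses the same truncation at $\tau_{C,n}$, the same three-case overlap count giving $v_i^2[(m_i)_2h^{-2d}+(m_i)_3h^{-d}+(m_i)_4]$, Lemma~\ref{lem:vcineq} applied to the clipped process on the high-probability envelope event, and Lemma~\ref{lem:localized_cluster_envelope} with Lemma~\ref{lem:product_kernel_entropy} for the design process. One simplification: the centering term of $R_{C,\tau}$ needs no separate bound, since it vanishes identically because $R_{ijk}^{(\tau)}$ is conditionally centered given $(s_{ij},s_{ik})$, as you yourself note (your alternative remark that it is ``the expectation of a nonnegative quantity already controlled'' would not suffice on its own, because an $o_p$ bound does not control an expectation).
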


\begin{proof}
The proof is parallel to that of Lemma~\ref{lem:weighted_clustered_ep}, but the covariance process requires an overlap decomposition for ordered pairs.

Set $\tau=\tau_{C,n}$ and define
\begin{align*}
    Z_{ijk}^{(\tau)} := Z_{ijk}\mathbf{1}\{|Z_{ijk}|\le\tau\} - \mathbb{E}\!\left[Z_{ijk}\mathbf{1}\{|Z_{ijk}|\le\tau\}\mid s_{ij},s_{ik}\right],
    \quad j\ne k.
\end{align*}
Then $|Z_{ijk}^{(\tau)}|\le 2\tau$. Since $\mathbb{E}(Z_{ijk}\mid s_{ij},s_{ik})=0$, the tail remainder $R_{ijk}^{(\tau)}:=Z_{ijk}-Z_{ijk}^{(\tau)}$ satisfies
\begin{align*}
    R_{ijk}^{(\tau)} = Z_{ijk}\mathbf{1}\{|Z_{ijk}|>\tau\} - \mathbb{E}\!\left[Z_{ijk}\mathbf{1}\{|Z_{ijk}|>\tau\}\mid s_{ij},s_{ik}\right],
    \quad j\ne k,
\end{align*}
and hence $\mathbb{E}\{R_{ijk}^{(\tau)}\mid s_{ij},s_{ik}\}=0$. Decompose the pairwise empirical process as
\begin{align*}
    T_{C,\tau}(s,t) + R_{C,\tau}(s,t), \quad (s,t)\in\mathcal{M}\times\mathcal{M},
\end{align*}
where $T_{C,\tau}$ is obtained by replacing $Z_{ijk}$ with $Z_{ijk}^{(\tau)}$, and $R_{C,\tau}$ is obtained by replacing $Z_{ijk}$ with $R_{ijk}^{(\tau)}$.

We first control the truncated part. Since $|Z_{ijk}^{(\tau)}|\le 2\tau_{C,n}$ and $K$ is bounded and supported on $[0,1]$, Lemma~\ref{lem:theta_bounds} gives
\begin{align*}
    &\sup_{(s,t)\in\mathcal{M}\times\mathcal{M}}
    \left|v_i\sum_{j\ne k} \mathcal{L}_{s,h}(s_{ij})\mathcal{L}_{t,h}(s_{ik})Z_{ijk}^{(\tau)}\right| \\
    &\quad\le C\tau_{C,n}\,v_i h^{-2d}
    \sup_{(s,t)\in\mathcal{M}\times\mathcal{M}}
    \sum_{j\ne k} \mathbf{1}\{s_{ij}\in B_{\mathcal{M}}(s,h),\,s_{ik}\in B_{\mathcal{M}}(t,h)\}.
\end{align*}
Thus the maximal truncated pair-cluster envelope is bounded by $C\tau_{C,n}B_{C,h}^{\mathrm{loc}}$. By the localized truncated-envelope condition in the statement of the lemma,
\begin{align*}
    \tau_{C,n} B_{C,h}^{\mathrm{loc}}\log n = O_p(\rho_{C,h}).
\end{align*}
Equivalently, for every $\eta>0$, there exists $M_\eta<\infty$ such that the event
\begin{align*}
    \mathcal{A}_{C,n}(\eta) := \left\{\tau_{C,n}B_{C,h}^{\mathrm{loc}}\log n \le M_\eta\rho_{C,h}\right\}
\end{align*}
has probability at least $1-\eta$ for all sufficiently large $n$.

On $\mathcal{A}_{C,n}(\eta)$, the truncated pair-cluster contributions have deterministic envelope of order $M_\eta\rho_{C,h}/\log n$, and their variance proxy is bounded by an ordered-pair overlap decomposition. For two ordered pairs $(j,k)$ and $(\ell,r)$ from the same subject, with $j\ne k$ and $\ell\ne r$, classify their contribution according to the cardinality of
\begin{align*}
    \{j,k\}\cap\{\ell,r\}.
\end{align*}
If $|\{j,k\}\cap\{\ell,r\}|=2$, the two ordered pairs use the same two observation indices, possibly in reversed order; in this case, the corresponding product of four kernels contains two distinct location variables and has expectation of order $h^{-2d}$. If $|\{j,k\}\cap\{\ell,r\}|=1$, the two ordered pairs share exactly one observation index, so the product contains three distinct location variables, and integration over the shared location gives a contribution of order $h^{-d}$. If $|\{j,k\}\cap\{\ell,r\}|=0$, the two ordered pairs use four distinct observation indices, and the corresponding contribution is of order one. Using Assumption~\ref{ass:sampling}, Lemma~\ref{lem:theta_bounds}, the bounded compact support of $K$, and the pair-pair conditional second-moment bound, these three cases yield
\begin{align*}
    \sup_{(s,t)\in\mathcal{M}\times\mathcal{M}}
    \sum_{i\in\mathcal{I}_C} \mathbb{E}\{T_{C,\tau,i}(s,t)^2\}
    &\le C\sum_{i\in\mathcal{I}_C} v_i^2
        \left[\frac{(m_i)_2}{h^{2d}} + \frac{(m_i)_3}{h^d} + (m_i)_4\right] \\
    &= C\left(\frac{V_{C,1}}{h^{2d}} + \frac{V_{C,2}}{h^d} + V_{C,3}\right)
    = C\sigma_{C,h}^2,
\end{align*}
where $T_{C,\tau,i}(s,t)$ denotes the $i$th subject-level contribution in the truncated covariance process. By Assumption~\ref{ass:ep_compatibility}, the normalized truncated pair-cluster-sum class has covering logarithm of order $O(\log n)$ in the relevant empirical $L^2$ semimetric. To avoid conditioning the empirical-process inequality on the random event $\mathcal{A}_{C,n}(\eta)$, apply Lemma~\ref{lem:vcineq} to the deterministically clipped version of the truncated pair-cluster contributions with envelope $M_\eta\rho_{C,h}/\log n$; the original and clipped processes differ only on $\mathcal{A}_{C,n}(\eta)^c$, whose probability is at most $\eta$. This standard stochastic-envelope localization argument gives
\begin{align*}
    \sup_{(s,t)\in\mathcal{M}\times\mathcal{M}} |T_{C,\tau}(s,t)|
    = O_p\!\left(\rho_{C,h} + \tau_{C,n}B_{C,h}^{\mathrm{loc}}\log n\right)
    = O_p(\rho_{C,h}).
\end{align*}
Since $\eta>0$ is arbitrary, the same bound holds unconditionally.

It remains to control the pairwise tail part. Since $K$ is supported on $[0,1]$ and the volume-density factors are uniformly bounded on the support of the kernels, there exists a constant $C<\infty$ such that
\begin{align*}
    &\sup_{(s,t)\in\mathcal{M}\times\mathcal{M}} |R_{C,\tau}(s,t)| \\
    &\quad\le C\sum_{i\in\mathcal{I}_C} v_i h^{-2d}
        \sup_{(s,t)\in\mathcal{M}\times\mathcal{M}}
        \left[\sum_{j\ne k} \mathbf{1}\{s_{ij}\in B_{\mathcal{M}}(s,h),\,s_{ik}\in B_{\mathcal{M}}(t,h)\}\,|R_{ijk}^{(\tau)}|\right].
\end{align*}
By the localized tail condition in the statement of the lemma, with $\tau=\tau_{C,n}$,
\begin{align*}
    \sup_{(s,t)\in\mathcal{M}\times\mathcal{M}} |R_{C,\tau}(s,t)| = o_p(\rho_{C,h}).
\end{align*}
Combining the truncated and tail bounds proves the pairwise multiplier bound.

For the centered pairwise design process, the multiplier is identically one, so no truncation is required. By Lemma~\ref{lem:localized_cluster_envelope},
\begin{align*}
    B_{C,h}^{\mathrm{loc}} = O_p(b_{C,h}),
\end{align*}
where
\begin{align*}
    b_{C,h} := \max_{i\in\mathcal{I}_C} v_i h^{-2d}(m_i h^d+\log n)^2.
\end{align*}
Hence, for every $\eta>0$, there exists $M_\eta<\infty$ such that the event
\begin{align*}
    \mathcal{A}_{C,n}^{0}(\eta) := \left\{B_{C,h}^{\mathrm{loc}}\log n \le M_\eta b_{C,h}\log n\right\}
\end{align*}
has probability at least $1-\eta$ for all sufficiently large $n$. On this event, the centered pairwise design cluster contributions have deterministic envelope of order $M_\eta b_{C,h}$ and variance proxy $O(\sigma_{C,h}^2)$. Since the multiplier is identically one, the required entropy bound follows from Lemma~\ref{lem:product_kernel_entropy}, together with the same localized envelope argument used above. Therefore Lemma~\ref{lem:vcineq} gives
\begin{align*}
    &\sup_{(s,t)\in\mathcal{M}\times\mathcal{M}}
    \left|\sum_{i\in\mathcal{I}_C} v_i\sum_{j\ne k}
        \bigl[\mathcal{L}_{s,h}(s_{ij})\mathcal{L}_{t,h}(s_{ik}) - \mathbb{E}\{\mathcal{L}_{s,h}(s_{ij})\mathcal{L}_{t,h}(s_{ik})\}\bigr]\right| \\
    &\quad= O_p\!\left(\rho_{C,h} + b_{C,h}\log n\right) = O_p(\rho_{C,h}),
\end{align*}
where the last equality follows from $\tau_{C,n}\to\infty$ and $\tau_{C,n}b_{C,h}\log n=O(\rho_{C,h})$ in Assumption~\ref{ass:ep_compatibility}. Since $\eta>0$ is arbitrary, the bound holds unconditionally.
\end{proof}

\subsection{Proofs of the main uniform convergence rates}

\begin{proof}[Proof of Lemma~\ref{lem:design_normalizers}]
We prove the two assertions separately. Throughout the proof, expectations are taken conditionally on the realized sampling frequencies $m_1,\ldots,m_n$.

First consider the mean-design normalizer. By Assumption~\ref{ass:sampling} and the normalization $\sum_{i=1}^n m_iw_i=1$, for any $s\in\mathcal{M}$,
\begin{align*}
    \mathbb{E}\{\hat{f}_\mu(s)\}
    &= \sum_{i=1}^n w_i\sum_{j=1}^{m_i} \mathbb{E}\{\mathcal{L}_{s,h_\mu}(s_{ij})\} \\
    &= \int_{\mathcal{M}} \mathcal{L}_{s,h_\mu}(u)f(u)\,\mathrm{d}v_g(u).
\end{align*}
Since $f\in C^2(\mathcal{M})$ by Assumption~\ref{ass:sampling}, Lemma~\ref{lem:intrinsic_bias} gives
\begin{align*}
    \sup_{s\in\mathcal{M}} \left|\mathbb{E}\{\hat{f}_\mu(s)\}-f(s)\right| = O(h_\mu^2).
\end{align*}
For the centered part, write
\begin{align*}
    \hat{f}_\mu(s)-\mathbb{E}\{\hat{f}_\mu(s)\}
    = \sum_{i=1}^n w_i\sum_{j=1}^{m_i}
        \bigl[\mathcal{L}_{s,h_\mu}(s_{ij}) - \mathbb{E}\{\mathcal{L}_{s,h_\mu}(s_{ij})\}\bigr],
    \quad s\in\mathcal{M}.
\end{align*}
Applying the centered design-process part of Lemma~\ref{lem:weighted_clustered_ep} with $h=h_\mu$ yields
\begin{align*}
    \left\|\hat{f}_\mu-\mathbb{E}\hat{f}_\mu\right\|_\infty
    = O_p\!\left[\left\{\log n\left(\frac{V_{\mu,1}}{h_\mu^d}+V_{\mu,2}\right)\right\}^{1/2}\right].
\end{align*}
Combining the deterministic bias and the centered stochastic fluctuation gives
\begin{align*}
    \|\hat{f}_\mu-f\|_\infty
    = O_p\!\left[h_\mu^2 + \left\{\log n\left(\frac{V_{\mu,1}}{h_\mu^d}+V_{\mu,2}\right)\right\}^{1/2}\right]
    = O_p(r_\mu).
\end{align*}

We next consider the covariance-design normalizer. Let $(m_i)_2:=m_i(m_i-1)$. By Assumption~\ref{ass:sampling}, the two locations $s_{ij}$ and $s_{ik}$ are independent and have common density $f$ whenever $j\ne k$. Using the normalization $\sum_{i=1}^n (m_i)_2v_i=1$, we obtain, for any $(s,t)\in\mathcal{M}\times\mathcal{M}$,
\begin{align*}
    \mathbb{E}\{\hat{f}_C(s,t)\}
    &= \sum_{i\in\mathcal{I}_C} v_i\sum_{j\ne k}
        \mathbb{E}\{\mathcal{L}_{s,h_C}(s_{ij})\mathcal{L}_{t,h_C}(s_{ik})\} \\
    &= \int_{\mathcal{M}}\!\int_{\mathcal{M}}
        \mathcal{L}_{s,h_C}(u)\mathcal{L}_{t,h_C}(v)\,f(u)f(v)
        \,\mathrm{d}v_g(u)\,\mathrm{d}v_g(v).
\end{align*}
Since $f\in C^2(\mathcal{M})$ and $\mathcal{M}$ is compact, the product function $(u,v)\mapsto f(u)f(v)$ belongs to $C^2(\mathcal{M}\times\mathcal{M})$. By Lemma~\ref{lem:intrinsic_bias}, applied to $(u,v)\mapsto f(u)f(v)$ on $\mathcal{M}\times\mathcal{M}$,
\begin{align*}
    \sup_{(s,t)\in\mathcal{M}\times\mathcal{M}}
    \left|\mathbb{E}\{\hat{f}_C(s,t)\} - f(s)f(t)\right|
    = O(h_C^2).
\end{align*}
For the centered part, write
\begin{align*}
    \hat{f}_C(s,t)-\mathbb{E}\{\hat{f}_C(s,t)\}
    = \sum_{i\in\mathcal{I}_C} v_i\sum_{j\ne k}
        \bigl[\mathcal{L}_{s,h_C}(s_{ij})\mathcal{L}_{t,h_C}(s_{ik})
            - \mathbb{E}\{\mathcal{L}_{s,h_C}(s_{ij})\mathcal{L}_{t,h_C}(s_{ik})\}\bigr],
\end{align*}
for $(s,t)\in\mathcal{M}\times\mathcal{M}$. Applying the centered pairwise design-process part of Lemma~\ref{lem:weighted_pairwise_ep} with $h=h_C$ gives
\begin{align*}
    \left\|\hat{f}_C-\mathbb{E}\hat{f}_C\right\|_\infty
    = O_p\!\left[\left\{\log n\left(\frac{V_{C,1}}{h_C^{2d}}+\frac{V_{C,2}}{h_C^d}+V_{C,3}\right)\right\}^{1/2}\right].
\end{align*}
Consequently,
\begin{align*}
    \|\hat{f}_C-f\otimes f\|_\infty
    = O_p\!\left[h_C^2 + \left\{\log n\left(\frac{V_{C,1}}{h_C^{2d}}+\frac{V_{C,2}}{h_C^d}+V_{C,3}\right)\right\}^{1/2}\right],
\end{align*}
where $(f\otimes f)(s,t):=f(s)f(t)$ for $(s,t)\in\mathcal{M}\times\mathcal{M}$.

It remains to prove the lower bounds. By Assumption~\ref{ass:sampling}, $f(s)\ge c_f$ for all $s\in\mathcal{M}$, and hence $(f\otimes f)(s,t)\ge c_f^2$ for all $(s,t)\in\mathcal{M}\times\mathcal{M}$. Under Assumption~\ref{ass:bandwidth}, the two stochastic rates above converge to zero, and the bias terms $h_\mu^2$ and $h_C^2$ also vanish. Therefore,
\begin{align*}
    \mathbb{P}\!\left(\inf_{s\in\mathcal{M}} \hat{f}_\mu(s) \ge \frac{c_f}{2}\right) \to 1, \quad
    \mathbb{P}\!\left(\inf_{(s,t)\in\mathcal{M}\times\mathcal{M}} \hat{f}_C(s,t) \ge \frac{c_f^2}{2}\right) \to 1.
\end{align*}
This completes the proof.
\end{proof}

\begin{proof}[Proof of Theorem~\ref{thm:mean_rate}]
Define
\begin{align*}
    \hat{g}_\mu(s) := \sum_{i=1}^n w_i\sum_{j=1}^{m_i} \mathcal{L}_{s,h_\mu}(s_{ij})Y_{ij}, \quad s\in\mathcal{M}.
\end{align*}
Then
\begin{align*}
    \hat{\mu}(s)-\mu(s)
    = \frac{\hat{g}_\mu(s) - \mu(s)\hat{f}_\mu(s)}{\hat{f}_\mu(s)}, \quad s\in\mathcal{M}.
\end{align*}
Let
\begin{align*}
    \mathcal{E}_\mu := \left\{\inf_{s\in\mathcal{M}} \hat{f}_\mu(s) \ge c_f/2\right\},
\end{align*}
where $c_f>0$ is the lower bound for the sampling density in Assumption~\ref{ass:sampling}. By Lemma~\ref{lem:design_normalizers}, $\mathbb{P}(\mathcal{E}_\mu)\to1$. On $\mathcal{E}_\mu$,
\begin{align*}
    \|\hat{\mu}-\mu\|_\infty \le \frac{2}{c_f}\|\hat{g}_\mu - \mu\hat{f}_\mu\|_\infty.
\end{align*}
It remains to control the numerator.

Using
\begin{align*}
    Y_{ij} = \mu(s_{ij}) + U_i(s_{ij}) + \epsilon_{ij}, \quad i=1,\ldots,n, \quad j=1,\ldots,m_i,
\end{align*}
write
\begin{align*}
    \hat{g}_\mu(s) - \mu(s)\hat{f}_\mu(s) = D_\mu(s) + S_\mu(s), \quad s\in\mathcal{M},
\end{align*}
where
\begin{align*}
    D_\mu(s) &:= \sum_{i=1}^n w_i\sum_{j=1}^{m_i} \mathcal{L}_{s,h_\mu}(s_{ij})\{\mu(s_{ij}) - \mu(s)\}, \quad s\in\mathcal{M}, \\
    S_\mu(s) &:= \sum_{i=1}^n w_i\sum_{j=1}^{m_i} \mathcal{L}_{s,h_\mu}(s_{ij})\{U_i(s_{ij}) + \epsilon_{ij}\}, \quad s\in\mathcal{M}.
\end{align*}

We first control $D_\mu$. Decompose
\begin{align*}
    D_\mu(s) = \mathbb{E}\{D_\mu(s)\} + \bigl[D_\mu(s) - \mathbb{E}\{D_\mu(s)\}\bigr], \quad s\in\mathcal{M}.
\end{align*}
Since $\sum_{i=1}^n m_iw_i=1$, Assumption~\ref{ass:sampling} gives
\begin{align*}
    \mathbb{E}\{D_\mu(s)\} = \int_{\mathcal{M}} \mathcal{L}_{s,h_\mu}(u)\{\mu(u) - \mu(s)\}f(u)\,\mathrm{d}v_g(u),
    \quad s\in\mathcal{M},
\end{align*}
where $f$ denotes the common sampling density. Therefore,
\begin{align*}
    \mathbb{E}\{D_\mu(s)\}
    = \int_{\mathcal{M}} \mathcal{L}_{s,h_\mu}(u)\mu(u)f(u)\,\mathrm{d}v_g(u)
        - \mu(s)\int_{\mathcal{M}} \mathcal{L}_{s,h_\mu}(u)f(u)\,\mathrm{d}v_g(u),
    \quad s\in\mathcal{M}.
\end{align*}
By Lemma~\ref{lem:intrinsic_bias}, applied to $\mu f$ and $f$,
\begin{align*}
    \sup_{s\in\mathcal{M}} |\mathbb{E}\{D_\mu(s)\}| = O(h_\mu^2).
\end{align*}

Next consider the centered part of $D_\mu$. We write
\begin{align*}
    D_\mu(s) - \mathbb{E}\{D_\mu(s)\}
    &= \left[\sum_{i=1}^n w_i\sum_{j=1}^{m_i} \mathcal{L}_{s,h_\mu}(s_{ij})\mu(s_{ij})
        - \mathbb{E}\!\left\{\sum_{i=1}^n w_i\sum_{j=1}^{m_i} \mathcal{L}_{s,h_\mu}(s_{ij})\mu(s_{ij})\right\}\right] \\
    &\quad- \mu(s)\bigl[\hat{f}_\mu(s) - \mathbb{E}\{\hat{f}_\mu(s)\}\bigr], \quad s\in\mathcal{M}.
\end{align*}
The second term is controlled by the centered design-process part of Lemma~\ref{lem:weighted_clustered_ep}. The first term is controlled by the same argument, because multiplication of the VC-type kernel class by the fixed bounded function $\mu$ preserves the entropy order and changes only the envelope by a constant. Hence,
\begin{align*}
    \|D_\mu - \mathbb{E}\{D_\mu\}\|_\infty
    = O_p\!\left[\left\{\log n\left(\frac{V_{\mu,1}}{h_\mu^d}+V_{\mu,2}\right)\right\}^{1/2}\right].
\end{align*}
Combining the deterministic smoothing bias and the centered design fluctuation yields
\begin{align*}
    \|D_\mu\|_\infty
    = O_p\!\left[h_\mu^2 + \left\{\log n\left(\frac{V_{\mu,1}}{h_\mu^d}+V_{\mu,2}\right)\right\}^{1/2}\right].
\end{align*}

We now control $S_\mu$. Define
\begin{align*}
    Z_{ij} := U_i(s_{ij}) + \epsilon_{ij}, \quad i=1,\ldots,n, \quad j=1,\ldots,m_i.
\end{align*}
By Assumption~\ref{ass:process} and the independence of the observation locations, latent processes, and measurement errors,
\begin{align*}
    \mathbb{E}(Z_{ij}\mid s_{ij}) = 0, \quad i=1,\ldots,n, \quad j=1,\ldots,m_i.
\end{align*}
Moreover, Assumption~\ref{ass:process} implies that the observation-level multipliers $Z_{ij}$ have uniformly bounded $\alpha_\mu$-moments for some $\alpha_\mu>2$, and the conditional second moments and within-subject cross-moments required in Lemma~\ref{lem:weighted_clustered_ep} are also uniformly bounded. Therefore, Lemma~\ref{lem:weighted_clustered_ep}, applied with $h=h_\mu$, together with the truncation and cluster-sum entropy compatibility conditions in Assumption~\ref{ass:ep_compatibility}, gives
\begin{align*}
    \|S_\mu\|_\infty
    = O_p\!\left[\left\{\log n\left(\frac{V_{\mu,1}}{h_\mu^d}+V_{\mu,2}\right)\right\}^{1/2}\right].
\end{align*}

Consequently,
\begin{align*}
    \|\hat{g}_\mu - \mu\hat{f}_\mu\|_\infty
    = O_p\!\left[h_\mu^2 + \left\{\log n\left(\frac{V_{\mu,1}}{h_\mu^d}+V_{\mu,2}\right)\right\}^{1/2}\right].
\end{align*}
On the event $\mathcal{E}_\mu$, the same rate holds for $\|\hat{\mu}-\mu\|_\infty$. Since $\mathbb{P}(\mathcal{E}_\mu)\to1$, we conclude that
\begin{align*}
    \|\hat{\mu}-\mu\|_\infty = O_p(r_\mu).
\end{align*}
\end{proof}

\begin{lemma}[Absolute pairwise residual kernel bound]\label{lem:absolute_pairwise_residual_bound}
Under Assumptions~\ref{ass:manifold}--\ref{ass:ep_compatibility},
\begin{align*}
    \sup_{(s,t)\in\mathcal{M}\times\mathcal{M}}
    \sum_{i\in\mathcal{I}_C} v_i\sum_{j\ne k}
    |\mathcal{L}_{s,h_C}(s_{ij})\mathcal{L}_{t,h_C}(s_{ik})|\,
    \{1 + |E_{ij}| + |E_{ik}|\}
    = O_p(1),
\end{align*}
where $E_{ij}:=Y_{ij}-\mu(s_{ij})$.
\end{lemma}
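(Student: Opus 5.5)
Since the kernel is nonnegative, the absolute values on $\mathcal{L}$ can be dropped. By the symmetry $(s,t,j,k)\leftrightarrow(t,s,k,j)$, the term with $|E_{ik}|$ reduces to the one with $|E_{ij}|$. It therefore suffices to show that $\|\hat f_C\|_\infty=O_p(1)$ and that
\begin{align*}
    A(s,t):=\sum_{i\in\mathcal{I}_C} v_i\sum_{j\ne k}\mathcal{L}_{s,h_C}(s_{ij})\mathcal{L}_{t,h_C}(s_{ik})\,|E_{ij}|
\end{align*}
satisfies $\|A\|_\infty=O_p(1)$. The first claim is immediate from Lemma~\ref{lem:design_normalizers}: $\|\hat f_C-f\otimes f\|_\infty=o_p(1)$ and $f\otimes f\le C_f^2$.

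For $A$, center the multiplier conditionally on the locations. Put $\nu(u):=\mathbb{E}\{|U(u)+\epsilon|\}$, which is bounded uniformly in $u$ by Assumption~\ref{ass:process} because $\mathbb{E}\sup_u|U(u)|^{2\beta}<\infty$. Define $Z_{ijk}:=|E_{ij}|-\nu(s_{ij})$. By independence of locations, processes and errors, $\mathbb{E}(Z_{ijk}\mid s_{ij},s_{ik})=0$, and $Z_{ijk}$ has uniformly bounded conditional $\beta$-moments (indeed $2\beta$) and bounded pair–pair cross moments, controlled by Cauchy–Schwarz and the sup-moment of $U$. Then
\begin{align*}
    A(s,t)=\sum_{i} v_i\sum_{j\ne k}\mathcal{L}_{s,h_C}(s_{ij})\mathcal{L}_{t,h_C}(s_{ik})\nu(s_{ij})
    +\sum_{i} v_i\sum_{j\ne k}\mathcal{L}_{s,h_C}(s_{ij})\mathcal{L}_{t,h_C}(s_{ik})Z_{ijk}.
\end{align*}
The first sum is at most $\|\nu\|_\infty\hat f_C(s,t)$, so it is $O_p(1)$ uniformly. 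For the second sum, I would subtract its expectation, which is zero after conditioning on the locations. Lemma~\ref{lem:weighted_pairwise_ep} with $h=h_C$ then gives a sup bound $O_p(\rho_{C,h_C})=o_p(1)$ by Assumption~\ref{ass:bandwidth}. Its hypotheses are supplied as follows: the truncated-envelope condition comes from Lemma~\ref{lem:localized_cluster_envelope} combined with \eqref{eq:cov_trunc_envelope_condition}; the tail condition comes from Lemma~\ref{lem:localized_tail_consequence}; and the entropy requirement comes from Assumption~\ref{ass:ep_compatibility}.

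The main obstacle is justifying the entropy compatibility for this particular multiplier $|E_{ij}|$, since Assumption~\ref{ass:ep_compatibility} is phrased for the cluster-sum classes used in the lemmas. I would argue that the class is again a kernel class from Lemma~\ref{lem:product_kernel_entropy} multiplied by a fixed, index-free truncated multiplier, so its covering numbers coincide with those already assumed. As a fallback that avoids empirical-process entropy entirely, I would bound $|E_{ij}|\le \sup_u|U_i(u)|+|\epsilon_{ij}|$. For the $U$ part this gives $\sum_i v_i\sup_u|U_i|\,\sum_{j\ne k}\mathcal{L}\mathcal{L}$, which I would control by truncating $\sup_u|U_i|$ at $\tau_{C,n}$ and using $\|\hat f_C\|_\infty=O_p(1)$. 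The untruncated remainder would be handled through $a_{C,h_C}\tau_{C,n}^{1-\beta}=o(1)$ and Markov's inequality, as in Lemma~\ref{lem:localized_tail_consequence}. I would treat the $\epsilon$ part by the centering argument above.
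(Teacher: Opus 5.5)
Your main argument is correct and is essentially the paper's proof. The paper also splits the multiplier $1+|E_{ij}|+|E_{ik}|$ into its conditional mean given $(s_{ij},s_{ik})$ plus a centered remainder, and it handles the remainder by the truncation argument of Lemma~\ref{lem:weighted_pairwise_ep}, tacitly extending the entropy part of Assumption~\ref{ass:ep_compatibility} to this multiplier exactly as you do. Your one departure is a small simplification: you dominate the conditional-mean piece by $\|\nu\|_\infty\hat f_C(s,t)$ and invoke Lemma~\ref{lem:design_normalizers}, whereas the paper runs a separate centered design-process bound for that piece.

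Drop the fallback, however. Truncating $\sup_u|U_i(u)|$ at $\tau_{C,n}$ and bounding by $\hat f_C$ only gives $O_p(\tau_{C,n})$, which diverges unless you also center the truncated subject-level factor and control the resulting cluster-sum process uniformly in $(s,t)$. Your $\epsilon$ part also still goes through Lemma~\ref{lem:weighted_pairwise_ep}. So the fallback neither closes the argument nor avoids the entropy issue.
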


\begin{proof}
By Assumption~\ref{ass:kernel}, $K$ is nonnegative, and hence $\mathcal{L}_{s,h_C}(u)\ge0$ on its support. Define
\begin{align*}
    W_{ijk} := 1 + |E_{ij}| + |E_{ik}|, \quad j\ne k.
\end{align*}
By Assumptions~\ref{ass:sampling} and~\ref{ass:process},
\begin{align*}
    \sup_{i,j}\sup_{u\in\mathcal{M}} \mathbb{E}\{|E_{ij}|\mid s_{ij}=u\} < \infty.
\end{align*}
Indeed, $E_{ij}=U_i(s_{ij})+\epsilon_{ij}$, the observation locations are independent of the latent processes and measurement errors, and
\begin{align*}
    \sup_{u\in\mathcal{M}} \mathbb{E}|U_i(u)|
    \le \mathbb{E}\!\left\{\sup_{u\in\mathcal{M}} |U_i(u)|\right\} < \infty,
    \quad \mathbb{E}|\epsilon_{ij}| < \infty.
\end{align*}
Therefore,
\begin{align*}
    m_{ijk}(u,v) := \mathbb{E}\{W_{ijk}\mid s_{ij}=u,\,s_{ik}=v\}
\end{align*}
is uniformly bounded over $i,j,k,u,v$.

Decompose
\begin{align*}
    W_{ijk} = m_{ijk}(s_{ij},s_{ik}) + \widetilde{W}_{ijk}, \quad
    \widetilde{W}_{ijk} := W_{ijk} - m_{ijk}(s_{ij},s_{ik}).
\end{align*}
Then
\begin{align*}
    \mathbb{E}\{\widetilde{W}_{ijk}\mid s_{ij},s_{ik}\} = 0.
\end{align*}

For the conditional-mean part, the uniform boundedness of $m_{ijk}$, the conditional iid sampling design, the boundedness of $f$, Lemma~\ref{lem:theta_bounds}, and the compact support of $K$ imply
\begin{align*}
    \sup_{(s,t)\in\mathcal{M}\times\mathcal{M}}
    \mathbb{E}\!\left[\mathcal{L}_{s,h_C}(s_{ij})\mathcal{L}_{t,h_C}(s_{ik})\,m_{ijk}(s_{ij},s_{ik})\right]
    \le C.
\end{align*}
Using the normalization $\sum_{i\in\mathcal{I}_C}(m_i)_2v_i=1$, the corresponding non-centered expectation is uniformly bounded. For its centered fluctuation, the uniformly bounded deterministic multiplier $m_{ijk}(s_{ij},s_{ik})$ only changes the envelope by a constant factor and preserves the same localized support. The same centered pairwise design-process argument used in Lemma~\ref{lem:weighted_pairwise_ep} therefore gives an $O_p(\rho_{C,h_C})$ bound, which is $o_p(1)$ under Assumption~\ref{ass:bandwidth}.

For the centered part, $\widetilde{W}_{ijk}$ satisfies $\mathbb{E}\{\widetilde{W}_{ijk}\mid s_{ij},s_{ik}\}=0$. Moreover, Assumption~\ref{ass:process} implies that $\widetilde{W}_{ijk}$ has uniformly bounded $2\beta$-moments, because $W_{ijk}=1+|E_{ij}|+|E_{ik}|$ and the residuals have uniformly bounded $2\beta$-moments. The required pair-pair second-moment bounds also follow from the same moment condition. Therefore the truncation argument used in Lemma~\ref{lem:weighted_pairwise_ep} applies to $\widetilde{W}_{ijk}$ with moment exponent $2\beta$. Its tail contribution is no larger than the covariance tail controlled under Assumption~\ref{ass:ep_compatibility}, while the localized truncated-envelope term is controlled by the same $b_{C,h_C}$. Hence,
\begin{align*}
    &\sup_{(s,t)\in\mathcal{M}\times\mathcal{M}}
    \left|\sum_{i\in\mathcal{I}_C} v_i\sum_{j\ne k}
        \mathcal{L}_{s,h_C}(s_{ij})\mathcal{L}_{t,h_C}(s_{ik})\,\widetilde{W}_{ijk}\right| \\
    &\quad= O_p(\rho_{C,h_C}) = o_p(1),
\end{align*}
where the last equality follows from Assumption~\ref{ass:bandwidth}. Combining the conditional-mean and centered parts proves the claim.
\end{proof}

\begin{proof}[Proof of Theorem~\ref{thm:cov_rate}]
Define
\begin{align*}
    E_{ij} := Y_{ij} - \mu(s_{ij}) = U_i(s_{ij}) + \epsilon_{ij}, \quad i=1,\ldots,n, \quad j=1,\ldots,m_i.
\end{align*}
For $j\ne k$, the diagonal measurement-error contribution is absent, and Assumption~\ref{ass:process} gives
\begin{align*}
    \mathbb{E}\{E_{ij}E_{ik}\mid s_{ij},s_{ik}\} = C(s_{ij},s_{ik}), \quad i=1,\ldots,n, \quad j\ne k.
\end{align*}
Define the oracle covariance numerator
\begin{align*}
    \tilde{g}_C(s,t)
    := \sum_{i\in\mathcal{I}_C} v_i\sum_{j\ne k}
        \mathcal{L}_{s,h_C}(s_{ij})\mathcal{L}_{t,h_C}(s_{ik})\,E_{ij}E_{ik},
    \quad (s,t)\in\mathcal{M}\times\mathcal{M},
\end{align*}
and the oracle ratio estimator
\begin{align*}
    \tilde{C}(s,t) := \frac{\tilde{g}_C(s,t)}{\hat{f}_C(s,t)}, \quad (s,t)\in\mathcal{M}\times\mathcal{M}.
\end{align*}
Then
\begin{align*}
    \hat{C}(s,t) - C(s,t)
    = \{\tilde{C}(s,t) - C(s,t)\} + \{\hat{C}(s,t) - \tilde{C}(s,t)\},
    \quad (s,t)\in\mathcal{M}\times\mathcal{M}.
\end{align*}

We first control the oracle term. Let
\begin{align*}
    \mathcal{E}_C := \left\{\inf_{(s,t)\in\mathcal{M}\times\mathcal{M}} \hat{f}_C(s,t) \ge c_f^2/2\right\},
\end{align*}
where $c_f>0$ is the lower bound for the sampling density in Assumption~\ref{ass:sampling}. By Lemma~\ref{lem:design_normalizers}, $\mathbb{P}(\mathcal{E}_C)\to1$. On $\mathcal{E}_C$,
\begin{align*}
    \|\tilde{C}-C\|_\infty \le \frac{2}{c_f^2}\|\tilde{g}_C - C\hat{f}_C\|_\infty.
\end{align*}
It remains to bound the numerator
\begin{align*}
    \tilde{g}_C(s,t) - C(s,t)\hat{f}_C(s,t), \quad (s,t)\in\mathcal{M}\times\mathcal{M}.
\end{align*}
Write
\begin{align*}
    \tilde{g}_C(s,t) - C(s,t)\hat{f}_C(s,t) = A_C(s,t) + B_C(s,t),
    \quad (s,t)\in\mathcal{M}\times\mathcal{M},
\end{align*}
where
\begin{align*}
    A_C(s,t)
    &:= \sum_{i\in\mathcal{I}_C} v_i\sum_{j\ne k}
        \mathcal{L}_{s,h_C}(s_{ij})\mathcal{L}_{t,h_C}(s_{ik})\,\{E_{ij}E_{ik} - C(s_{ij},s_{ik})\}, \\
    B_C(s,t)
    &:= \sum_{i\in\mathcal{I}_C} v_i\sum_{j\ne k}
        \mathcal{L}_{s,h_C}(s_{ij})\mathcal{L}_{t,h_C}(s_{ik})\,\{C(s_{ij},s_{ik}) - C(s,t)\}.
\end{align*}

For $A_C$, set
\begin{align*}
    Z_{ijk} := E_{ij}E_{ik} - C(s_{ij},s_{ik}), \quad i=1,\ldots,n, \quad j\ne k.
\end{align*}
Then $\mathbb{E}(Z_{ijk}\mid s_{ij},s_{ik})=0$. By Assumption~\ref{ass:process}, the residuals $E_{ij}$ have uniformly bounded $2\beta$-moments. By H\"older's inequality and Assumption~\ref{ass:process},
\begin{align*}
    \sup_{i,j\ne k}\sup_{u,v\in\mathcal{M}}
    \mathbb{E}\{|E_{ij}E_{ik}|^\beta\mid s_{ij}=u,\,s_{ik}=v\}
    &\le \sup_{i,j\ne k}\sup_{u,v\in\mathcal{M}}
        \left[\mathbb{E}\{|E_{ij}|^{2\beta}\mid s_{ij}=u\}\,
              \mathbb{E}\{|E_{ik}|^{2\beta}\mid s_{ik}=v\}\right]^{1/2} \\
    &< \infty.
\end{align*}
Since $C$ is continuous on the compact set $\mathcal{M}\times\mathcal{M}$, it is bounded. Therefore the ordered-pair multipliers $Z_{ijk}$ satisfy the finite-moment requirement in Lemma~\ref{lem:weighted_pairwise_ep} with exponent $\alpha_C=\beta>2$. The pair-pair second-moment bounds required in Lemma~\ref{lem:weighted_pairwise_ep} also follow from Assumption~\ref{ass:process} and H\"older's inequality, because all products involved are bounded by moments of order at most $2\beta$ of the residual process. Therefore, Lemma~\ref{lem:weighted_pairwise_ep}, applied with $h=h_C$, together with the truncation and cluster-sum entropy compatibility conditions in Assumption~\ref{ass:ep_compatibility}, yields
\begin{align*}
    \|A_C\|_\infty = O_p(\rho_{C,h_C}),
\end{align*}
where
\begin{align*}
    \rho_{C,h_C} := \left\{\log n\left(\frac{V_{C,1}}{h_C^{2d}}+\frac{V_{C,2}}{h_C^d}+V_{C,3}\right)\right\}^{1/2}
\end{align*}
denotes the stochastic part of the covariance rate $r_C$.

We next control $B_C$. Decompose
\begin{align*}
    B_C(s,t) = \mathbb{E}\{B_C(s,t)\} + \bigl[B_C(s,t) - \mathbb{E}\{B_C(s,t)\}\bigr],
    \quad (s,t)\in\mathcal{M}\times\mathcal{M}.
\end{align*}
Using the normalization $\sum_{i\in\mathcal{I}_C}(m_i)_2v_i=1$ and the sampling density assumption, we have
\begin{align*}
    \mathbb{E}\{B_C(s,t)\}
    &= \int_{\mathcal{M}}\!\int_{\mathcal{M}}
        \mathcal{L}_{s,h_C}(u)\mathcal{L}_{t,h_C}(v)
        \{C(u,v) - C(s,t)\}f(u)f(v)
        \,\mathrm{d}v_g(u)\,\mathrm{d}v_g(v) \\
    &= \int_{\mathcal{M}}\!\int_{\mathcal{M}}
        \mathcal{L}_{s,h_C}(u)\mathcal{L}_{t,h_C}(v)\,
        C(u,v)f(u)f(v)
        \,\mathrm{d}v_g(u)\,\mathrm{d}v_g(v) \\
    &\quad- C(s,t)\int_{\mathcal{M}}\!\int_{\mathcal{M}}
        \mathcal{L}_{s,h_C}(u)\mathcal{L}_{t,h_C}(v)\,
        f(u)f(v)
        \,\mathrm{d}v_g(u)\,\mathrm{d}v_g(v).
\end{align*}
By Lemma~\ref{lem:intrinsic_bias}, applied to $(u,v)\mapsto C(u,v)f(u)f(v)$ and $(u,v)\mapsto f(u)f(v)$ on $\mathcal{M}\times\mathcal{M}$,
\begin{align*}
    \sup_{(s,t)\in\mathcal{M}\times\mathcal{M}}
    |\mathbb{E}\{B_C(s,t)\}| = O(h_C^2).
\end{align*}

For the centered part of $B_C$, write
\begin{align*}
    & B_C(s,t) - \mathbb{E}\{B_C(s,t)\} \\
    &\quad= \sum_{i\in\mathcal{I}_C} v_i\sum_{j\ne k}
        \mathcal{L}_{s,h_C}(s_{ij})\mathcal{L}_{t,h_C}(s_{ik})\,C(s_{ij},s_{ik}) \\
    &\quad\quad - \mathbb{E}\!\left\{\sum_{i\in\mathcal{I}_C} v_i\sum_{j\ne k}
        \mathcal{L}_{s,h_C}(s_{ij})\mathcal{L}_{t,h_C}(s_{ik})\,C(s_{ij},s_{ik})\right\} \\
    &\quad\quad - C(s,t)\bigl[\hat{f}_C(s,t) - \mathbb{E}\{\hat{f}_C(s,t)\}\bigr].
\end{align*}
The last term is controlled by the centered pairwise design-process part of Lemma~\ref{lem:weighted_pairwise_ep}. The first and second terms are controlled by the same argument: multiplication of the product-kernel class by the fixed bounded function $C$ preserves the VC-type entropy order and changes only the envelope by a constant. Therefore,
\begin{align*}
    \|B_C - \mathbb{E}\{B_C\}\|_\infty = O_p(\rho_{C,h_C}).
\end{align*}
Combining the deterministic smoothing bias and the centered design fluctuation gives
\begin{align*}
    \|B_C\|_\infty = O_p(h_C^2 + \rho_{C,h_C}).
\end{align*}
Thus, on $\mathcal{E}_C$,
\begin{align*}
    \|\tilde{C}-C\|_\infty = O_p(h_C^2 + \rho_{C,h_C}).
\end{align*}
Since $\mathbb{P}(\mathcal{E}_C)\to1$, the same oracle bound holds unconditionally.

It remains to control the effect of estimating the mean. Let
\begin{align*}
    \Delta_{ij} := \hat{\mu}(s_{ij}) - \mu(s_{ij}), \quad
    R_{ij} := Y_{ij} - \hat{\mu}(s_{ij}) = E_{ij} - \Delta_{ij}.
\end{align*}
The actual covariance numerator can be written as
\begin{align*}
    \hat{g}_C(s,t)
    := \sum_{i\in\mathcal{I}_C} v_i\sum_{j\ne k}
        \mathcal{L}_{s,h_C}(s_{ij})\mathcal{L}_{t,h_C}(s_{ik})\,R_{ij}R_{ik},
    \quad (s,t)\in\mathcal{M}\times\mathcal{M}.
\end{align*}
Then $\hat{C}(s,t)=\hat{g}_C(s,t)/\hat{f}_C(s,t)$ and
\begin{align*}
    \hat{g}_C(s,t) - \tilde{g}_C(s,t)
    = \sum_{i\in\mathcal{I}_C} v_i\sum_{j\ne k}
        \mathcal{L}_{s,h_C}(s_{ij})\mathcal{L}_{t,h_C}(s_{ik})\,
        \{-E_{ij}\Delta_{ik} - E_{ik}\Delta_{ij} + \Delta_{ij}\Delta_{ik}\}.
\end{align*}
Let
\begin{align*}
    \delta_{\mu,n}:=\|\hat{\mu}-\mu\|_\infty .
\end{align*}
By Theorem~\ref{thm:mean_rate}, $\delta_{\mu,n}=O_p(r_\mu)$, and
$r_\mu\to0$ under Assumption~\ref{ass:bandwidth}. Since
\begin{align*}
    |R_{ij}-E_{ij}|
    =
    |\hat{\mu}(s_{ij})-\mu(s_{ij})|
    \le \delta_{\mu,n},
\end{align*}
we have, uniformly over $(s,t)\in\mathcal{M}\times\mathcal{M}$,
\begin{align*}
    |\hat{g}_C(s,t)-\tilde{g}_C(s,t)|
    &\le
    \delta_{\mu,n}
    \sum_{i\in\mathcal{I}_C}v_i
    \sum_{j\ne k}
    |\mathcal{L}_{s,h_C}(s_{ij})\mathcal{L}_{t,h_C}(s_{ik})|
    \{|E_{ij}|+|E_{ik}|\}  \\
    &\quad+
    \delta_{\mu,n}^2
    \sum_{i\in\mathcal{I}_C}v_i
    \sum_{j\ne k}
    |\mathcal{L}_{s,h_C}(s_{ij})\mathcal{L}_{t,h_C}(s_{ik})| .
\end{align*}
By Lemma~\ref{lem:absolute_pairwise_residual_bound}, the two weighted
kernel sums on the right-hand side are $O_p(1)$ uniformly over
$(s,t)\in\mathcal{M}\times\mathcal{M}$. Therefore,
\begin{align*}
    \|\hat{g}_C-\tilde{g}_C\|_\infty
    =
    O_p(\delta_{\mu,n}+\delta_{\mu,n}^2)
    =
    O_p(r_\mu),
\end{align*}
because $\delta_{\mu,n}=O_p(r_\mu)$ and $r_\mu\to0$.

Let
\begin{align*}
    \mathcal{E}_C
    :=
    \left\{
        \inf_{(s,t)\in\mathcal{M}\times\mathcal{M}}
        \hat f_C(s,t)
        \ge \frac{c_f^2}{2}
    \right\}.
\end{align*}
By Lemma~\ref{lem:design_normalizers}, $\mathbb{P}(\mathcal{E}_C)\to1$.
On $\mathcal{E}_C$,
\begin{align*}
    \|\hat{C}-\tilde{C}\|_\infty
    \le
    \frac{2}{c_f^2}
    \|\hat{g}_C-\tilde{g}_C\|_\infty
    =
    O_p(r_\mu).
\end{align*}
Combining this plug-in mean contribution with the oracle covariance bound gives
\begin{align*}
    \|\hat{C}-C\|_\infty
    &\le
    \|\hat{C}-\tilde{C}\|_\infty
    +
    \|\tilde{C}-C\|_\infty  \\
    &=
    O_p(r_\mu)
    +
    O_p(h_C^2+\rho_{C,h_C})  \\
    &=
    O_p(h_C^2+\rho_{C,h_C}+r_\mu)
    =
    O_p(r_C),
\end{align*}
where the last equality follows from the definition of $r_C$.
\end{proof}

\subsection{Proofs of operator convergence and spectral perturbation}

\begin{proof}[Proof of Theorem~\ref{thm:operator_rate}]
Define the symmetrized empirical covariance kernel by
\begin{align*}
    \hat{C}_{\mathrm{sym}}(s,t) := \frac{1}{2}\{\hat{C}(s,t) + \hat{C}(t,s)\},
    \quad (s,t)\in\mathcal{M}\times\mathcal{M}.
\end{align*}
Since $C(s,t)=C(t,s)$ for $(s,t)\in\mathcal{M}\times\mathcal{M}$,
\begin{align*}
    \|\hat{C}_{\mathrm{sym}} - C\|_\infty \le \|\hat{C} - C\|_\infty.
\end{align*}
Thus Theorem~\ref{thm:cov_rate} also gives
\begin{align*}
    \|\hat{C}_{\mathrm{sym}} - C\|_\infty = O_p(r_C).
\end{align*}
By construction, $\hat{C}(s,t)=\hat{C}(t,s)$ for all $(s,t)\in\mathcal{M}\times\mathcal{M}$. Therefore, $\hat{\mathcal{C}}$ is a self-adjoint integral operator on $L^2(\mathcal{M})$. Since $\hat{C}$ is bounded on the compact set $\mathcal{M}\times\mathcal{M}$ with probability tending to one under Theorem~\ref{thm:cov_rate}, the empirical operator is Hilbert--Schmidt, and hence compact, with probability tending to one.

For Hilbert--Schmidt integral operators,
\begin{align*}
    \|\hat{\mathcal{C}} - \mathcal{C}\|_{HS}^2
    = \int_{\mathcal{M}}\!\int_{\mathcal{M}}
        \{\hat{C}_{\mathrm{sym}}(s,t) - C(s,t)\}^2
        \,\mathrm{d}v_g(s)\,\mathrm{d}v_g(t).
\end{align*}
Since $\mathcal{M}$ is compact,
\begin{align*}
    \|\hat{\mathcal{C}} - \mathcal{C}\|_{HS}
    &= \|\hat{C}_{\mathrm{sym}} - C\|_{L^2(\mathcal{M}\times\mathcal{M})} \\
    &\le \operatorname{Vol}(\mathcal{M})\,\|\hat{C}_{\mathrm{sym}} - C\|_\infty
    = O_p(r_C).
\end{align*}
Finally, since the operator norm is bounded by the Hilbert--Schmidt norm,
\begin{align*}
    \|\hat{\mathcal{C}} - \mathcal{C}\|_{\mathrm{op}}
    \le \|\hat{\mathcal{C}} - \mathcal{C}\|_{HS} = O_p(r_C).
\end{align*}
This proves both claims.
\end{proof}

\begin{proof}[Proof of Theorem~\ref{thm:eigen_rates}]
We prove the result for a fixed eigencomponent $k$ satisfying the eigengap condition in the theorem. Let
\begin{align*}
    \Delta_n := \|\hat{\mathcal{C}} - \mathcal{C}\|_{\mathrm{op}}.
\end{align*}
By Theorem~\ref{thm:operator_rate}, $\Delta_n = O_p(r_C)$.

By Weyl's inequality for compact self-adjoint operators,
\begin{align*}
    |\hat{\lambda}_k - \lambda_k| \le \Delta_n = O_p(r_C).
\end{align*}
This proves the eigenvalue rate.

For the eigenfunction rate, let $\delta_k>0$ denote the eigengap associated with $\lambda_k$, as defined in the theorem. On the event $\Delta_n \le \delta_k/2$, the empirical eigenvalue corresponding to $\lambda_k$ is isolated. Since $\delta_k$ is fixed and $\Delta_n=o_p(1)$, this event has probability tending to one. Let $P_k$ and $\hat{P}_k$ denote the rank-one spectral projections associated with $\lambda_k$ and $\hat{\lambda}_k$, respectively:
\begin{align*}
    P_k\psi = \langle\psi,\phi_k\rangle_{L^2}\,\phi_k, \quad
    \hat{P}_k\psi = \langle\psi,\hat{\phi}_k\rangle_{L^2}\,\hat{\phi}_k,
    \quad \psi\in L^2(\mathcal{M}).
\end{align*}
A Davis--Kahan spectral perturbation bound gives
\begin{align*}
    \|\hat{P}_k - P_k\|_{\mathrm{op}} \le C\,\frac{\Delta_n}{\delta_k}
\end{align*}
on this event, where $C<\infty$ is a universal constant. Choose the sign of $\hat{\phi}_k$ so that $\langle\hat{\phi}_k,\phi_k\rangle_{L^2}\ge0$. For rank-one projections, this sign convention implies
\begin{align*}
    \|\hat{\phi}_k - \phi_k\|_{L^2(\mathcal{M})} \le C\,\|\hat{P}_k - P_k\|_{\mathrm{op}}.
\end{align*}
Therefore,
\begin{align*}
    \|\hat{\phi}_k - \phi_k\|_{L^2(\mathcal{M})}
    \le C\,\frac{\Delta_n}{\delta_k}
    = O_p\!\left(\frac{r_C}{\delta_k}\right).
\end{align*}
This proves the eigenfunction rate.
\end{proof}

\subsection{Balanced-design bandwidth calculations}
\label{app:balanced_bandwidth}

This subsection records the balanced-design orders of the variance factors and localized empirical-process quantities appearing in Assumption~\ref{ass:ep_compatibility}. These calculations are used in Corollaries~\ref{cor:nondense_rate}--\ref{cor:ultradense_rate} to check that the displayed bandwidth choices are compatible with the general uniform rates. They are sufficient-order calculations for the balanced OBS and SUBJ regimes, not separate minimax lower-bound arguments.

Assume throughout this subsection that
\begin{align*}
    m_i\asymp\bar{m}, \quad i=1,\ldots,n,
\end{align*}
where $\bar{m}=\bar{m}_n$ may depend on $n$, and assume that a non-negligible fraction of subjects have at least two observations. Under either the OBS or SUBJ weighting scheme,
\begin{align*}
    V_{\mu,1}\asymp\frac{1}{n\bar{m}}, \quad V_{\mu,2}\asymp\frac{1}{n},
\end{align*}
and
\begin{align*}
    V_{C,1}\asymp\frac{1}{n\bar{m}^2}, \quad V_{C,2}\asymp\frac{1}{n\bar{m}}, \quad V_{C,3}\asymp\frac{1}{n}.
\end{align*}
Consequently, the uniform covariance upper rate in Theorem~\ref{thm:cov_rate} can be written as
\begin{align}
\label{eq:app_balanced_cov_rate}
    r_C
    \lesssim r_\mu + h_C^2
    + \left[\log n\left(\frac{1}{n\bar{m}^2 h_C^{2d}} + \frac{1}{n\bar{m}\,h_C^d} + \frac{1}{n}\right)\right]^{1/2}.
\end{align}
The first stochastic term is the pairwise local smoothing variance, the second is the within-subject one-index-overlap contribution, and the third is the uniform subject-level stochastic floor.

We next record the balanced-design forms of the localized envelope and tail-size quantities in Assumption~\ref{ass:ep_compatibility}. By Lemma~\ref{lem:localized_cluster_envelope}, under $m_i\asymp\bar{m}$ and either OBS or SUBJ weights,
\begin{align}
\label{eq:app_balanced_mean_envelope}
    b_{\mu,h_\mu} = O\!\left(\frac{1}{n} + \frac{\log n}{n\bar{m}\,h_\mu^d}\right),
\end{align}
and
\begin{align}
\label{eq:app_balanced_cov_envelope}
    b_{C,h_C} = O\!\left[\frac{1}{n}\left(1+\frac{\log n}{\bar{m}\,h_C^d}\right)^{\!2}\right].
\end{align}
These quantities control the maximal cluster-envelope terms that appear after truncation in Lemmas~\ref{lem:weighted_clustered_ep} and~\ref{lem:weighted_pairwise_ep}.

The aggregate localized tail-size factors in Assumption~\ref{ass:ep_compatibility} reduce, in the balanced OBS and SUBJ settings, to
\begin{align}
\label{eq:app_balanced_mean_tail_factor}
    a_{\mu,h_\mu}
    := h_\mu^{-d}\sum_{i=1}^n w_i(m_i h_\mu^d+\log n)
    \lesssim 1+\frac{\log n}{\bar{m}\,h_\mu^d},
\end{align}
and
\begin{align}
\label{eq:app_balanced_cov_tail_factor}
    a_{C,h_C}
    := h_C^{-2d}\sum_{i\in\mathcal{I}_C} v_i(m_i h_C^d+\log n)^2
    \lesssim 1 + \frac{\log n}{\bar{m}\,h_C^d} + \frac{(\log n)^2}{\bar{m}^2 h_C^{2d}}.
\end{align}
Indeed, for the mean process, the observation-level multipliers have finite $2\beta$-moments under Assumption~\ref{ass:process}; hence Markov's inequality gives
\begin{align*}
    \mathbb{E}\{|Z_{ij}|\mathbf{1}\{|Z_{ij}|>\tau_{\mu,n}\}\mid s_{ij}\}
    \le C\tau_{\mu,n}^{1-2\beta},
\end{align*}
uniformly in $i$ and $j$. For the covariance process, the ordered-pair multipliers have finite $\beta$-moments, and therefore
\begin{align*}
    \mathbb{E}\{|Z_{ijk}|\mathbf{1}\{|Z_{ijk}|>\tau_{C,n}\}\mid s_{ij},s_{ik}\}
    \le C\tau_{C,n}^{1-\beta},
\end{align*}
uniformly in $i,j,k$ with $j\ne k$.

Thus the truncation compatibility conditions in Assumption~\ref{ass:ep_compatibility} take the following balanced-design form:
\begin{align}
\label{eq:app_mean_trunc_feasibility}
    \tau_{\mu,n}\, b_{\mu,h_\mu}\log n = O(\rho_{\mu,h_\mu}), \quad
    a_{\mu,h_\mu}\,\tau_{\mu,n}^{1-2\beta} = o(\rho_{\mu,h_\mu}),
\end{align}
and
\begin{align}
\label{eq:app_cov_trunc_feasibility}
    \tau_{C,n}\, b_{C,h_C}\log n = O(\rho_{C,h_C}), \quad
    a_{C,h_C}\,\tau_{C,n}^{1-\beta} = o(\rho_{C,h_C}).
\end{align}
The displays above identify the localized envelope and tail-size quantities appearing in Assumption~\ref{ass:ep_compatibility} under balanced OBS and SUBJ weighting. These calculations show that the truncation and localized-envelope parts of Assumption~\ref{ass:ep_compatibility} reduce to explicit restrictions on the bandwidths, the within-subject sampling frequency $\bar{m}$, and the available moment exponent $\beta$. They should not be interpreted as saying that the finite-moment condition $\beta>2$ alone automatically verifies the full empirical-process compatibility condition in every sparse high-dimensional regime; in particular, the cluster-sum entropy requirement remains part of Assumption~\ref{ass:ep_compatibility}. Rather, in the balanced regimes considered in Corollaries~\ref{cor:nondense_rate}--\ref{cor:ultradense_rate}, the displayed quantities give the precise form of the truncation and localized-envelope requirements. We therefore use this subsection to record the balanced orders entering Assumption~\ref{ass:ep_compatibility} and to derive the corresponding sparse-to-dense rate implications.

For the mean estimator,
\begin{align*}
    r_\mu
    \lesssim h_\mu^2
    + \left[\log n\left(\frac{1}{n\bar{m}\,h_\mu^d} + \frac{1}{n}\right)\right]^{1/2}.
\end{align*}
The term $r_\mu$ is the uniform mean-estimation contribution from Theorem~\ref{thm:mean_rate}. It can be made explicit by choosing a mean bandwidth, but we keep it in the statements below because the covariance phase transition is governed by the covariance bandwidth $h_C$.

\paragraph{Detailed rate calculation for Corollary~\ref{cor:nondense_rate}.}
Let $a_n:=\log n/n$, and suppose $\bar{m}\lesssim a_n^{-d/4}$. Balancing the bias term $h_C^2$ with the first covariance stochastic term gives
\begin{align*}
    h_C^2 \asymp \left(\frac{a_n}{\bar{m}^2 h_C^{2d}}\right)^{1/2},
\end{align*}
or equivalently,
\begin{align*}
    h_C\asymp \left(\frac{a_n}{\bar{m}^2}\right)^{1/(2d+4)}.
\end{align*}
At this bandwidth,
\begin{align*}
    h_C^2
    \asymp \left(\frac{a_n}{\bar{m}^2}\right)^{1/(d+2)}
    = \left(\frac{\log n}{n\bar{m}^2}\right)^{1/(d+2)}.
\end{align*}
The one-index-overlap term satisfies
\begin{align*}
    \left(\frac{a_n}{\bar{m}\,h_C^d}\right)^{1/2}
    \lesssim \left(\frac{a_n}{\bar{m}^2}\right)^{1/(d+2)}
\end{align*}
whenever $\bar{m}\lesssim a_n^{-d/4}$. The subject-level term also satisfies
\begin{align*}
    a_n^{1/2} \lesssim \left(\frac{a_n}{\bar{m}^2}\right)^{1/(d+2)}
\end{align*}
under the same condition. Therefore,
\begin{align*}
    r_C \lesssim r_\mu + \left(\frac{\log n}{n\bar{m}^2}\right)^{1/(d+2)}.
\end{align*}
The Hilbert--Schmidt operator and spectral rates follow from Theorems~\ref{thm:operator_rate} and~\ref{thm:eigen_rates}.

\paragraph{Detailed rate calculation for Corollary~\ref{cor:dense_boundary}.}
Let $a_n:=\log n/n$. Suppose $\bar{m}\asymp a_n^{-d/4}$ and choose
\begin{align*}
    h_C\asymp a_n^{1/4}, \quad
    h_\mu\asymp \left(\frac{a_n}{\bar{m}}\right)^{1/(d+4)}.
\end{align*}
Then $h_C^2\asymp a_n^{1/2}$. Moreover,
\begin{align*}
    \left(\frac{a_n}{\bar{m}^2 h_C^{2d}}\right)^{1/2} \asymp a_n^{1/2}, \quad
    \left(\frac{a_n}{\bar{m}\,h_C^d}\right)^{1/2} \asymp a_n^{1/2},
\end{align*}
and the uniform subject-level term is $a_n^{1/2}$. For the mean estimator, the displayed choice of $h_\mu$ gives
\begin{align*}
    h_\mu^2 \asymp \left(\frac{a_n}{\bar{m}}\right)^{2/(d+4)} \asymp a_n^{1/2},
\end{align*}
and
\begin{align*}
    \left(\frac{a_n}{\bar{m}\,h_\mu^d}\right)^{1/2} \asymp a_n^{1/2}.
\end{align*}
Hence $r_\mu\lesssim a_n^{1/2}$. Therefore,
\begin{align*}
    r_C \lesssim a_n^{1/2} = \left(\frac{\log n}{n}\right)^{1/2}.
\end{align*}
The operator and spectral rates follow from Theorems~\ref{thm:operator_rate} and~\ref{thm:eigen_rates}.

\paragraph{Detailed rate calculation for Corollary~\ref{cor:ultradense_rate}.}
Let $a_n:=\log n/n$, and suppose $\bar{m}/a_n^{-d/4}\to\infty$. This is equivalent to
\begin{align*}
    \bar{m}^{-1/d} = o(a_n^{1/4}).
\end{align*}
Thus one can choose bandwidths $h_C$ and $h_\mu$ such that
\begin{align*}
    \bar{m}^{-1/d} \ll h_C \ll a_n^{1/4}, \quad
    \bar{m}^{-1/d} \ll h_\mu \ll a_n^{1/4}.
\end{align*}
For this choice,
\begin{align*}
    h_C^2 = o(a_n^{1/2}), \quad \bar{m}\,h_C^d\to\infty, \quad \bar{m}^2 h_C^{2d}\to\infty.
\end{align*}
Therefore,
\begin{align*}
    \left(\frac{a_n}{\bar{m}^2 h_C^{2d}}\right)^{1/2} = o(a_n^{1/2}), \quad
    \left(\frac{a_n}{\bar{m}\,h_C^d}\right)^{1/2} = o(a_n^{1/2}).
\end{align*}
Similarly,
\begin{align*}
    h_\mu^2 = o(a_n^{1/2}), \quad \bar{m}\,h_\mu^d\to\infty,
\end{align*}
and hence $r_\mu\lesssim a_n^{1/2}$. The only non-negligible stochastic contribution is the uniform subject-level term $a_n^{1/2}$. Therefore,
\begin{align*}
    r_C \lesssim a_n^{1/2} = \left(\frac{\log n}{n}\right)^{1/2}.
\end{align*}
The operator and spectral rates follow from Theorems~\ref{thm:operator_rate} and~\ref{thm:eigen_rates}.

\begin{remark}[Why the effective sample size $N_2$ alone is insufficient]
The calculation based only on
\begin{align*}
    N_2 = \sum_{i=1}^n m_i(m_i-1)
\end{align*}
captures only the first covariance variance component, $1/(n\bar{m}^2 h_C^{2d})$, but misses the additional terms
\begin{align*}
    \frac{1}{n\bar{m}\,h_C^d} \quad\text{and}\quad \frac{1}{n}.
\end{align*}
These terms arise from within-subject dependence among ordered pairs and from subject-level variability. Therefore, the covariance upper bound is not generally determined by the effective pair count $N_2$ alone. Once $\bar{m}$ reaches the uniform dense boundary
\begin{align*}
    \bar{m}\asymp(n/\log n)^{d/4},
\end{align*}
the covariance operator and FPCA component upper bounds reach the uniform subject-level order $(\log n/n)^{1/2}$. Ignoring logarithmic factors, this corresponds to the familiar boundary $\bar{m}\asymp n^{d/4}$ and the subject-level order $n^{-1/2}$.
\end{remark}

\section{Additional Simulation Results}
\label{app:simulation}

This appendix reports detailed numerical results for the simulation studies in Section~\ref{sec:simulation}. The main text reports averages over the two noise levels $\sigma^2\in\{0.1,0.5\}$, whereas the tables below provide the results separately for each noise level. The reported quantities are the integrated squared errors of the mean and sign-aligned eigenfunction estimators and the squared errors of the eigenvalue estimators. For each combination of the sampling regime, $n\in\{50,100,200,400\}$, and $\sigma^2$, the values are Monte Carlo averages over 100 replications. The oracle estimator is included as a noise-free benchmark computed from the latent curves evaluated on the numerical grid.

\subsection{Additional results for \texorpdfstring{$\mathbb{S}^1$}{S1}}
\label{app:simulation_circle}

Tables~\ref{tab:app_circle_sparse}--\ref{tab:app_circle_ultradense} report the detailed unit-circle simulation results. The data-generating mechanism, sampling regimes, competing estimators, bandwidth selection procedure, and error criteria are described in Section~\ref{subsec:sim_circle}. The proposed estimator uses the intrinsic circular distance, whereas the naive Euclidean baseline uses the linear distance on $[0,2\pi)$ without periodic wrapping.

\begin{table}[!p]
\centering
\caption{Detailed simulation results for $\mathbb{S}^1$ under the Sparse regime: $m_i\sim\mathrm{Unif}\{3,4,5\}$. Values are Monte Carlo averages over 100 replications.}
\label{tab:app_circle_sparse}
\resizebox{\textwidth}{!}{%
\begin{tabular}{ccc ccccc}
\toprule
$\sigma^2$ & $n$ & Method
& IMSE$(\hat\mu)$
& IMSE$(\hat\phi_1)$
& IMSE$(\hat\phi_2)$
& SE$(\hat\lambda_1)$
& SE$(\hat\lambda_2)$ \\
\midrule
0.1 & 50 & Proposed & 0.414 & 0.212 & 0.293 & 2.118 & 0.761 \\
0.1 & 50 & Naive & 0.470 & 0.193 & 0.285 & 1.769 & 0.790 \\
0.1 & 50 & Oracle & 0.118 & 0.103 & 0.103 & 0.560 & 0.204 \\
0.1 & 100 & Proposed & 0.224 & 0.094 & 0.124 & 0.998 & 0.390 \\
0.1 & 100 & Naive & 0.263 & 0.105 & 0.124 & 0.769 & 0.427 \\
0.1 & 100 & Oracle & 0.065 & 0.024 & 0.024 & 0.294 & 0.083 \\
0.1 & 200 & Proposed & 0.118 & 0.040 & 0.053 & 0.675 & 0.228 \\
0.1 & 200 & Naive & 0.144 & 0.051 & 0.057 & 0.491 & 0.235 \\
0.1 & 200 & Oracle & 0.031 & 0.010 & 0.010 & 0.202 & 0.032 \\
0.1 & 400 & Proposed & 0.071 & 0.020 & 0.029 & 0.302 & 0.108 \\
0.1 & 400 & Naive & 0.083 & 0.029 & 0.032 & 0.198 & 0.105 \\
0.1 & 400 & Oracle & 0.014 & 0.007 & 0.007 & 0.079 & 0.022 \\
0.5 & 50 & Proposed & 0.494 & 0.242 & 0.407 & 1.812 & 0.776 \\
0.5 & 50 & Naive & 0.570 & 0.228 & 0.393 & 1.539 & 0.842 \\
0.5 & 50 & Oracle & 0.128 & 0.050 & 0.050 & 0.737 & 0.145 \\
0.5 & 100 & Proposed & 0.246 & 0.117 & 0.158 & 1.186 & 0.466 \\
0.5 & 100 & Naive & 0.298 & 0.125 & 0.177 & 0.942 & 0.518 \\
0.5 & 100 & Oracle & 0.066 & 0.021 & 0.021 & 0.338 & 0.075 \\
0.5 & 200 & Proposed & 0.139 & 0.054 & 0.089 & 0.746 & 0.307 \\
0.5 & 200 & Naive & 0.176 & 0.067 & 0.096 & 0.515 & 0.314 \\
0.5 & 200 & Oracle & 0.025 & 0.010 & 0.010 & 0.178 & 0.038 \\
0.5 & 400 & Proposed & 0.078 & 0.027 & 0.041 & 0.426 & 0.142 \\
0.5 & 400 & Naive & 0.100 & 0.039 & 0.045 & 0.303 & 0.143 \\
0.5 & 400 & Oracle & 0.016 & 0.005 & 0.005 & 0.074 & 0.019 \\
\bottomrule
\end{tabular}%
}
\end{table}

\begin{table}[!p]
\centering
\caption{Detailed simulation results for $\mathbb{S}^1$ under the Dense-boundary regime: $m_i=\lceil 2n^{1/4}\rceil$. Values are Monte Carlo averages over 100 replications.}
\label{tab:app_circle_denseboundary}
\resizebox{\textwidth}{!}{%
\begin{tabular}{ccc ccccc}
\toprule
$\sigma^2$ & $n$ & Method
& IMSE$(\hat\mu)$
& IMSE$(\hat\phi_1)$
& IMSE$(\hat\phi_2)$
& SE$(\hat\lambda_1)$
& SE$(\hat\lambda_2)$ \\
\midrule
0.1 & 50 & Proposed & 0.353 & 0.150 & 0.176 & 1.291 & 0.342 \\
0.1 & 50 & Naive & 0.391 & 0.147 & 0.167 & 1.132 & 0.370 \\
0.1 & 50 & Oracle & 0.135 & 0.066 & 0.066 & 0.880 & 0.141 \\
0.1 & 100 & Proposed & 0.162 & 0.048 & 0.060 & 0.642 & 0.244 \\
0.1 & 100 & Naive & 0.186 & 0.059 & 0.061 & 0.506 & 0.264 \\
0.1 & 100 & Oracle & 0.053 & 0.020 & 0.020 & 0.294 & 0.089 \\
0.1 & 200 & Proposed & 0.076 & 0.020 & 0.025 & 0.380 & 0.103 \\
0.1 & 200 & Naive & 0.095 & 0.026 & 0.028 & 0.291 & 0.102 \\
0.1 & 200 & Oracle & 0.028 & 0.009 & 0.009 & 0.151 & 0.030 \\
0.1 & 400 & Proposed & 0.043 & 0.009 & 0.014 & 0.181 & 0.064 \\
0.1 & 400 & Naive & 0.052 & 0.014 & 0.015 & 0.128 & 0.058 \\
0.1 & 400 & Oracle & 0.020 & 0.005 & 0.005 & 0.077 & 0.022 \\
0.5 & 50 & Proposed & 0.392 & 0.153 & 0.197 & 1.575 & 0.520 \\
0.5 & 50 & Naive & 0.455 & 0.145 & 0.177 & 1.350 & 0.588 \\
0.5 & 50 & Oracle & 0.134 & 0.057 & 0.057 & 0.689 & 0.157 \\
0.5 & 100 & Proposed & 0.173 & 0.060 & 0.076 & 0.846 & 0.272 \\
0.5 & 100 & Naive & 0.209 & 0.068 & 0.075 & 0.695 & 0.283 \\
0.5 & 100 & Oracle & 0.063 & 0.018 & 0.018 & 0.402 & 0.085 \\
0.5 & 200 & Proposed & 0.093 & 0.030 & 0.038 & 0.434 & 0.117 \\
0.5 & 200 & Naive & 0.115 & 0.037 & 0.040 & 0.356 & 0.117 \\
0.5 & 200 & Oracle & 0.027 & 0.013 & 0.013 & 0.194 & 0.042 \\
0.5 & 400 & Proposed & 0.051 & 0.015 & 0.018 & 0.191 & 0.074 \\
0.5 & 400 & Naive & 0.061 & 0.020 & 0.020 & 0.149 & 0.070 \\
0.5 & 400 & Oracle & 0.016 & 0.006 & 0.006 & 0.095 & 0.019 \\
\bottomrule
\end{tabular}%
}
\end{table}

\begin{table}[!p]
\centering
\caption{Detailed simulation results for $\mathbb{S}^1$ under the Ultra-dense regime: $m_i=\max\{\lceil n^{1/2}\rceil,\lceil 2n^{1/4}\rceil+2\}$. Values are Monte Carlo averages over 100 replications.}
\label{tab:app_circle_ultradense}
\resizebox{\textwidth}{!}{%
\begin{tabular}{ccc ccccc}
\toprule
$\sigma^2$ & $n$ & Method
& IMSE$(\hat\mu)$
& IMSE$(\hat\phi_1)$
& IMSE$(\hat\phi_2)$
& SE$(\hat\lambda_1)$
& SE$(\hat\lambda_2)$ \\
\midrule
0.1 & 50 & Proposed & 0.282 & 0.092 & 0.111 & 1.029 & 0.398 \\
0.1 & 50 & Naive & 0.316 & 0.092 & 0.104 & 0.912 & 0.396 \\
0.1 & 50 & Oracle & 0.110 & 0.038 & 0.038 & 0.653 & 0.186 \\
0.1 & 100 & Proposed & 0.137 & 0.043 & 0.055 & 0.617 & 0.187 \\
0.1 & 100 & Naive & 0.159 & 0.050 & 0.060 & 0.506 & 0.188 \\
0.1 & 100 & Oracle & 0.061 & 0.022 & 0.022 & 0.296 & 0.091 \\
0.1 & 200 & Proposed & 0.061 & 0.015 & 0.019 & 0.196 & 0.089 \\
0.1 & 200 & Naive & 0.072 & 0.019 & 0.021 & 0.168 & 0.085 \\
0.1 & 200 & Oracle & 0.033 & 0.009 & 0.009 & 0.146 & 0.043 \\
0.1 & 400 & Proposed & 0.023 & 0.009 & 0.011 & 0.141 & 0.031 \\
0.1 & 400 & Naive & 0.028 & 0.012 & 0.012 & 0.116 & 0.029 \\
0.1 & 400 & Oracle & 0.012 & 0.005 & 0.005 & 0.083 & 0.016 \\
0.5 & 50 & Proposed & 0.353 & 0.138 & 0.181 & 1.178 & 0.431 \\
0.5 & 50 & Naive & 0.402 & 0.141 & 0.184 & 1.017 & 0.471 \\
0.5 & 50 & Oracle & 0.123 & 0.069 & 0.069 & 0.576 & 0.176 \\
0.5 & 100 & Proposed & 0.146 & 0.052 & 0.066 & 0.582 & 0.198 \\
0.5 & 100 & Naive & 0.174 & 0.057 & 0.066 & 0.479 & 0.203 \\
0.5 & 100 & Oracle & 0.055 & 0.023 & 0.023 & 0.246 & 0.093 \\
0.5 & 200 & Proposed & 0.073 & 0.020 & 0.024 & 0.274 & 0.081 \\
0.5 & 200 & Naive & 0.084 & 0.024 & 0.026 & 0.200 & 0.073 \\
0.5 & 200 & Oracle & 0.033 & 0.010 & 0.010 & 0.120 & 0.034 \\
0.5 & 400 & Proposed & 0.033 & 0.007 & 0.009 & 0.133 & 0.051 \\
0.5 & 400 & Naive & 0.041 & 0.010 & 0.011 & 0.105 & 0.045 \\
0.5 & 400 & Oracle & 0.015 & 0.003 & 0.003 & 0.066 & 0.024 \\
\bottomrule
\end{tabular}%
}
\end{table}

\clearpage

\subsection{Additional results for \texorpdfstring{$\mathbb{S}^2$}{S2}}
\label{app:simulation_sphere}

Tables~\ref{tab:app_sphere_sparse}--\ref{tab:app_sphere_ultradense} report the detailed unit-sphere simulation results. The data-generating mechanism, sampling regimes, competing estimators, bandwidth selection procedure, and error criteria are described in Section~\ref{subsec:sim_sphere}. The proposed estimator uses geodesic localization together with the spherical volume-density correction, whereas the naive angular Euclidean baseline uses the ordinary Euclidean distance in the unwrapped longitude-colatitude coordinate chart.

\begin{table}[!p]
\centering
\caption{Detailed simulation results for $\mathbb{S}^2$ under the Sparse regime: $m_i\sim\mathrm{Unif}\{5,6,7,8\}$. Values are Monte Carlo averages over 100 replications.}
\label{tab:app_sphere_sparse}
\resizebox{\textwidth}{!}{%
\begin{tabular}{ccc ccccccc}
\toprule
$\sigma^2$ & $n$ & Method
& IMSE$(\hat\mu)$
& IMSE$(\hat\phi_1)$
& IMSE$(\hat\phi_2)$
& IMSE$(\hat\phi_3)$
& SE$(\hat\lambda_1)$
& SE$(\hat\lambda_2)$
& SE$(\hat\lambda_3)$ \\
\midrule
0.1 & 50 & Proposed & 0.638 & 0.318 & 0.517 & 0.528 & 3.839 & 1.835 & 0.502 \\
0.1 & 50 & Naive & 0.974 & 0.934 & 1.008 & 0.655 & 4.143 & 1.609 & 0.357 \\
0.1 & 50 & Oracle & 0.169 & 0.119 & 0.160 & 0.073 & 0.793 & 0.296 & 0.118 \\
0.1 & 100 & Proposed & 0.364 & 0.160 & 0.327 & 0.335 & 2.270 & 1.132 & 0.247 \\
0.1 & 100 & Naive & 0.567 & 0.740 & 0.840 & 0.389 & 3.368 & 0.746 & 0.138 \\
0.1 & 100 & Oracle & 0.084 & 0.060 & 0.085 & 0.037 & 0.585 & 0.213 & 0.041 \\
0.1 & 200 & Proposed & 0.229 & 0.070 & 0.118 & 0.141 & 1.715 & 0.724 & 0.170 \\
0.1 & 200 & Naive & 0.361 & 0.456 & 0.499 & 0.210 & 2.954 & 0.358 & 0.088 \\
0.1 & 200 & Oracle & 0.042 & 0.024 & 0.034 & 0.017 & 0.232 & 0.084 & 0.023 \\
0.1 & 400 & Proposed & 0.137 & 0.038 & 0.059 & 0.073 & 1.046 & 0.480 & 0.109 \\
0.1 & 400 & Naive & 0.218 & 0.199 & 0.233 & 0.113 & 2.206 & 0.186 & 0.052 \\
0.1 & 400 & Oracle & 0.029 & 0.014 & 0.018 & 0.006 & 0.127 & 0.053 & 0.011 \\
0.5 & 50 & Proposed & 0.825 & 0.349 & 0.617 & 0.716 & 4.199 & 1.986 & 0.564 \\
0.5 & 50 & Naive & 1.226 & 1.039 & 1.176 & 0.920 & 4.812 & 1.756 & 0.413 \\
0.5 & 50 & Oracle & 0.188 & 0.178 & 0.222 & 0.074 & 0.982 & 0.321 & 0.085 \\
0.5 & 100 & Proposed & 0.467 & 0.219 & 0.413 & 0.469 & 2.641 & 1.263 & 0.351 \\
0.5 & 100 & Naive & 0.735 & 0.886 & 1.015 & 0.567 & 3.448 & 0.998 & 0.230 \\
0.5 & 100 & Oracle & 0.113 & 0.064 & 0.080 & 0.028 & 0.483 & 0.214 & 0.037 \\
0.5 & 200 & Proposed & 0.274 & 0.103 & 0.165 & 0.201 & 1.999 & 0.901 & 0.214 \\
0.5 & 200 & Naive & 0.426 & 0.618 & 0.681 & 0.289 & 3.234 & 0.539 & 0.119 \\
0.5 & 200 & Oracle & 0.052 & 0.029 & 0.037 & 0.013 & 0.246 & 0.091 & 0.025 \\
0.5 & 400 & Proposed & 0.174 & 0.053 & 0.082 & 0.102 & 1.417 & 0.595 & 0.143 \\
0.5 & 400 & Naive & 0.275 & 0.314 & 0.355 & 0.156 & 2.748 & 0.216 & 0.075 \\
0.5 & 400 & Oracle & 0.024 & 0.010 & 0.013 & 0.006 & 0.116 & 0.038 & 0.014 \\
\bottomrule
\end{tabular}%
}
\end{table}

\begin{table}[!p]
\centering
\caption{Detailed simulation results for $\mathbb{S}^2$ under the Dense-boundary regime: $m_i=\lceil 2\sqrt{n/\log n}\rceil$. Values are Monte Carlo averages over 100 replications.}
\label{tab:app_sphere_denseboundary}
\resizebox{\textwidth}{!}{%
\begin{tabular}{ccc ccccccc}
\toprule
$\sigma^2$ & $n$ & Method
& IMSE$(\hat\mu)$
& IMSE$(\hat\phi_1)$
& IMSE$(\hat\phi_2)$
& IMSE$(\hat\phi_3)$
& SE$(\hat\lambda_1)$
& SE$(\hat\lambda_2)$
& SE$(\hat\lambda_3)$ \\
\midrule
0.1 & 50 & Proposed & 0.645 & 0.228 & 0.425 & 0.424 & 2.250 & 1.457 & 0.358 \\
0.1 & 50 & Naive & 0.943 & 0.689 & 0.799 & 0.584 & 3.224 & 1.121 & 0.216 \\
0.1 & 50 & Oracle & 0.201 & 0.112 & 0.154 & 0.067 & 1.037 & 0.337 & 0.095 \\
0.1 & 100 & Proposed & 0.316 & 0.109 & 0.165 & 0.144 & 1.818 & 0.895 & 0.262 \\
0.1 & 100 & Naive & 0.472 & 0.454 & 0.513 & 0.225 & 2.901 & 0.503 & 0.149 \\
0.1 & 100 & Oracle & 0.098 & 0.044 & 0.061 & 0.031 & 0.421 & 0.211 & 0.060 \\
0.1 & 200 & Proposed & 0.168 & 0.042 & 0.071 & 0.077 & 0.886 & 0.404 & 0.118 \\
0.1 & 200 & Naive & 0.251 & 0.196 & 0.231 & 0.114 & 1.671 & 0.179 & 0.064 \\
0.1 & 200 & Oracle & 0.043 & 0.023 & 0.030 & 0.013 & 0.286 & 0.086 & 0.021 \\
0.1 & 400 & Proposed & 0.082 & 0.022 & 0.031 & 0.031 & 0.628 & 0.216 & 0.070 \\
0.1 & 400 & Naive & 0.125 & 0.050 & 0.069 & 0.049 & 1.177 & 0.077 & 0.036 \\
0.1 & 400 & Oracle & 0.024 & 0.012 & 0.017 & 0.007 & 0.113 & 0.041 & 0.011 \\
0.5 & 50 & Proposed & 0.671 & 0.218 & 0.419 & 0.573 & 3.615 & 1.764 & 0.586 \\
0.5 & 50 & Naive & 1.018 & 0.964 & 1.065 & 0.732 & 4.278 & 1.269 & 0.359 \\
0.5 & 50 & Oracle & 0.212 & 0.110 & 0.144 & 0.061 & 0.995 & 0.274 & 0.092 \\
0.5 & 100 & Proposed & 0.392 & 0.131 & 0.246 & 0.260 & 1.716 & 0.855 & 0.200 \\
0.5 & 100 & Naive & 0.587 & 0.556 & 0.622 & 0.342 & 2.633 & 0.420 & 0.114 \\
0.5 & 100 & Oracle & 0.091 & 0.048 & 0.061 & 0.027 & 0.533 & 0.164 & 0.036 \\
0.5 & 200 & Proposed & 0.198 & 0.064 & 0.095 & 0.088 & 1.160 & 0.504 & 0.121 \\
0.5 & 200 & Naive & 0.302 & 0.236 & 0.277 & 0.132 & 2.128 & 0.210 & 0.061 \\
0.5 & 200 & Oracle & 0.046 & 0.040 & 0.049 & 0.015 & 0.246 & 0.083 & 0.025 \\
0.5 & 400 & Proposed & 0.105 & 0.026 & 0.038 & 0.041 & 0.590 & 0.289 & 0.067 \\
0.5 & 400 & Naive & 0.168 & 0.062 & 0.085 & 0.065 & 1.194 & 0.113 & 0.036 \\
0.5 & 400 & Oracle & 0.025 & 0.012 & 0.015 & 0.006 & 0.151 & 0.046 & 0.014 \\
\bottomrule
\end{tabular}%
}
\end{table}

\begin{table}[!p]
\centering
\caption{Detailed simulation results for $\mathbb{S}^2$ under the Ultra-dense regime: $m_i=\max\{\lceil n^{3/4}\rceil,\lceil 2\sqrt{n/\log n}\rceil+2\}$. Values are Monte Carlo averages over 100 replications.}
\label{tab:app_sphere_ultradense}
\resizebox{\textwidth}{!}{%
\begin{tabular}{ccc ccccccc}
\toprule
$\sigma^2$ & $n$ & Method
& IMSE$(\hat\mu)$
& IMSE$(\hat\phi_1)$
& IMSE$(\hat\phi_2)$
& IMSE$(\hat\phi_3)$
& SE$(\hat\lambda_1)$
& SE$(\hat\lambda_2)$
& SE$(\hat\lambda_3)$ \\
\midrule
0.1 & 50 & Proposed & 0.398 & 0.162 & 0.258 & 0.195 & 2.051 & 0.880 & 0.244 \\
0.1 & 50 & Naive & 0.571 & 0.534 & 0.611 & 0.248 & 2.488 & 0.588 & 0.151 \\
0.1 & 50 & Oracle & 0.179 & 0.118 & 0.161 & 0.064 & 0.896 & 0.261 & 0.079 \\
0.1 & 100 & Proposed & 0.193 & 0.072 & 0.100 & 0.072 & 0.967 & 0.421 & 0.124 \\
0.1 & 100 & Naive & 0.269 & 0.238 & 0.270 & 0.102 & 1.405 & 0.254 & 0.070 \\
0.1 & 100 & Oracle & 0.095 & 0.052 & 0.067 & 0.029 & 0.527 & 0.195 & 0.043 \\
0.1 & 200 & Proposed & 0.086 & 0.030 & 0.039 & 0.028 & 0.491 & 0.217 & 0.056 \\
0.1 & 200 & Naive & 0.118 & 0.074 & 0.088 & 0.042 & 0.717 & 0.124 & 0.036 \\
0.1 & 200 & Oracle & 0.044 & 0.023 & 0.029 & 0.013 & 0.290 & 0.096 & 0.025 \\
0.1 & 400 & Proposed & 0.044 & 0.015 & 0.021 & 0.013 & 0.216 & 0.088 & 0.028 \\
0.1 & 400 & Naive & 0.060 & 0.023 & 0.032 & 0.020 & 0.325 & 0.057 & 0.017 \\
0.1 & 400 & Oracle & 0.025 & 0.014 & 0.018 & 0.007 & 0.107 & 0.055 & 0.011 \\
0.5 & 50 & Proposed & 0.475 & 0.238 & 0.332 & 0.246 & 2.023 & 1.086 & 0.322 \\
0.5 & 50 & Naive & 0.683 & 0.601 & 0.696 & 0.359 & 2.276 & 0.751 & 0.200 \\
0.5 & 50 & Oracle & 0.172 & 0.140 & 0.183 & 0.072 & 1.165 & 0.351 & 0.092 \\
0.5 & 100 & Proposed & 0.198 & 0.076 & 0.106 & 0.082 & 1.122 & 0.419 & 0.112 \\
0.5 & 100 & Naive & 0.290 & 0.278 & 0.312 & 0.117 & 1.634 & 0.233 & 0.069 \\
0.5 & 100 & Oracle & 0.082 & 0.049 & 0.060 & 0.024 & 0.521 & 0.152 & 0.042 \\
0.5 & 200 & Proposed & 0.095 & 0.034 & 0.048 & 0.038 & 0.637 & 0.214 & 0.052 \\
0.5 & 200 & Naive & 0.135 & 0.068 & 0.087 & 0.054 & 0.957 & 0.119 & 0.032 \\
0.5 & 200 & Oracle & 0.045 & 0.024 & 0.033 & 0.016 & 0.257 & 0.078 & 0.020 \\
0.5 & 400 & Proposed & 0.045 & 0.015 & 0.022 & 0.017 & 0.222 & 0.120 & 0.025 \\
0.5 & 400 & Naive & 0.063 & 0.024 & 0.035 & 0.025 & 0.370 & 0.073 & 0.015 \\
0.5 & 400 & Oracle & 0.021 & 0.011 & 0.015 & 0.009 & 0.122 & 0.056 & 0.011 \\
\bottomrule
\end{tabular}%
}
\end{table}

\section{Additional Real Data Analysis Results}
\label{app:realdata_sonicom}

This section reports additional numerical results for the SONICOM HRTF analysis in Section~\ref{sec:realdata}. The data consist of $399$ subjects observed at $793$ source directions on $\mathbb S^2$. The proposed estimator uses the spherical geodesic distance and the spherical volume-density correction, whereas the naive baseline uses the ordinary Euclidean distance in the azimuth--elevation coordinate chart. Bandwidths are selected over five random validation splits, using the same split construction and candidate bandwidth grids for the proposed and naive methods.

Table~\ref{tab:app_sonicom_bandwidth} reports the selected bandwidths across the five validation splits. The selected mean bandwidths are close for the two methods, while the covariance bandwidths are larger, reflecting the additional smoothing required for covariance estimation. The selected bandwidths are stable across splits and do not indicate severe boundary instability.

\begin{table}[!ht]
\centering
\caption{
Selected bandwidths for the SONICOM HRTF analysis over five validation splits.
}
\label{tab:app_sonicom_bandwidth}
\begin{tabular}{clcc}
\toprule
Split & Method & $h_\mu$ & $h_C$ \\
\midrule
1 & Proposed & 0.190 & 0.580 \\
2 & Proposed & 0.170 & 0.560 \\
3 & Proposed & 0.180 & 0.580 \\
4 & Proposed & 0.180 & 0.640 \\
5 & Proposed & 0.180 & 0.600 \\
\midrule
1 & Naive & 0.190 & 0.620 \\
2 & Naive & 0.190 & 0.620 \\
3 & Naive & 0.190 & 0.620 \\
4 & Naive & 0.190 & 0.620 \\
5 & Naive & 0.190 & 0.620 \\
\bottomrule
\end{tabular}
\end{table}

Table~\ref{tab:app_sonicom_reconstruction_split} reports the split-specific hold-out reconstruction errors. The proposed intrinsic estimator has smaller reconstruction error than the naive coordinate-based baseline for every value of $K$ after averaging across splits, as reported in Section~\ref{sec:realdata}. The split-specific results show that this improvement is not driven by a single validation split.

\begin{table}[!ht]
\centering
\caption{
Split-specific hold-out reconstruction mean squared errors for the SONICOM HRTF analysis.
}
\label{tab:app_sonicom_reconstruction_split}
\resizebox{\textwidth}{!}{%
\begin{tabular}{clccccc}
\toprule
Split & Method & $K=1$ & $K=2$ & $K=3$ & $K=5$ & $K=10$ \\
\midrule
1 & Proposed & 2.089 & 1.409 & 1.056 & 0.818 & 0.656 \\
2 & Proposed & 2.096 & 1.408 & 1.052 & 0.811 & 0.657 \\
3 & Proposed & 2.098 & 1.410 & 1.056 & 0.822 & 0.660 \\
4 & Proposed & 2.088 & 1.406 & 1.051 & 0.820 & 0.654 \\
5 & Proposed & 2.102 & 1.406 & 1.057 & 0.823 & 0.659 \\
\midrule
1 & Naive & 2.100 & 1.417 & 1.066 & 0.844 & 0.669 \\
2 & Naive & 2.096 & 1.416 & 1.063 & 0.832 & 0.666 \\
3 & Naive & 2.102 & 1.419 & 1.067 & 0.845 & 0.672 \\
4 & Naive & 2.107 & 1.415 & 1.064 & 0.855 & 0.673 \\
5 & Naive & 2.104 & 1.418 & 1.069 & 0.845 & 0.674 \\
\bottomrule
\end{tabular}%
}
\end{table}

Table~\ref{tab:app_sonicom_fve_split} reports the fraction of variance explained by the leading components for the fitted FPCA decompositions used in the validation analysis. Both methods yield stable low-dimensional decompositions across validation splits. The proposed estimator explains approximately $95\%$ of the total variation with the first three components and more than $99\%$ with the first ten components.

\begin{table}[!ht]
\centering
\caption{
Split-specific fraction of variance explained for the SONICOM HRTF analysis.
}
\label{tab:app_sonicom_fve_split}
\resizebox{\textwidth}{!}{%
\begin{tabular}{clccccc}
\toprule
Split & Method & $K=1$ & $K=2$ & $K=3$ & $K=5$ & $K=10$ \\
\midrule
1 & Proposed & 0.822 & 0.912 & 0.955 & 0.977 & 0.994 \\
2 & Proposed & 0.822 & 0.912 & 0.955 & 0.977 & 0.994 \\
3 & Proposed & 0.822 & 0.912 & 0.955 & 0.977 & 0.994 \\
4 & Proposed & 0.822 & 0.912 & 0.955 & 0.977 & 0.994 \\
5 & Proposed & 0.822 & 0.912 & 0.955 & 0.977 & 0.994 \\
\midrule
1 & Naive & 0.818 & 0.910 & 0.951 & 0.972 & 0.992 \\
2 & Naive & 0.818 & 0.910 & 0.951 & 0.972 & 0.992 \\
3 & Naive & 0.818 & 0.910 & 0.951 & 0.972 & 0.992 \\
4 & Naive & 0.818 & 0.910 & 0.951 & 0.972 & 0.992 \\
5 & Naive & 0.818 & 0.910 & 0.951 & 0.972 & 0.992 \\
\bottomrule
\end{tabular}%
}
\end{table}

Figure~\ref{fig:app_sonicom_naive_eigenfunctions} shows the first three eigenfunctions estimated by the naive azimuth--elevation baseline, using the same globe-style visualization as in Figure~\ref{fig:realdata_eigenfunctions}. Comparing these plots with the proposed eigenfunctions in the main text, the leading contrast patterns are broadly similar. This is expected because the SONICOM source directions form a dense common grid on the sphere and the dominant modes explain a large fraction of the total variation. Thus, in this densely observed application, the intrinsic geometry does not lead to a qualitatively different set of dominant principal components. The main advantage of the proposed estimator is instead that it provides a coordinate-free construction based on spherical geodesic neighborhoods and the spherical volume-density correction, while yielding modest but consistent improvement in hold-out reconstruction error relative to the coordinate-based baseline.

\begin{figure}[!ht]
\centering
\includegraphics[width=\textwidth]{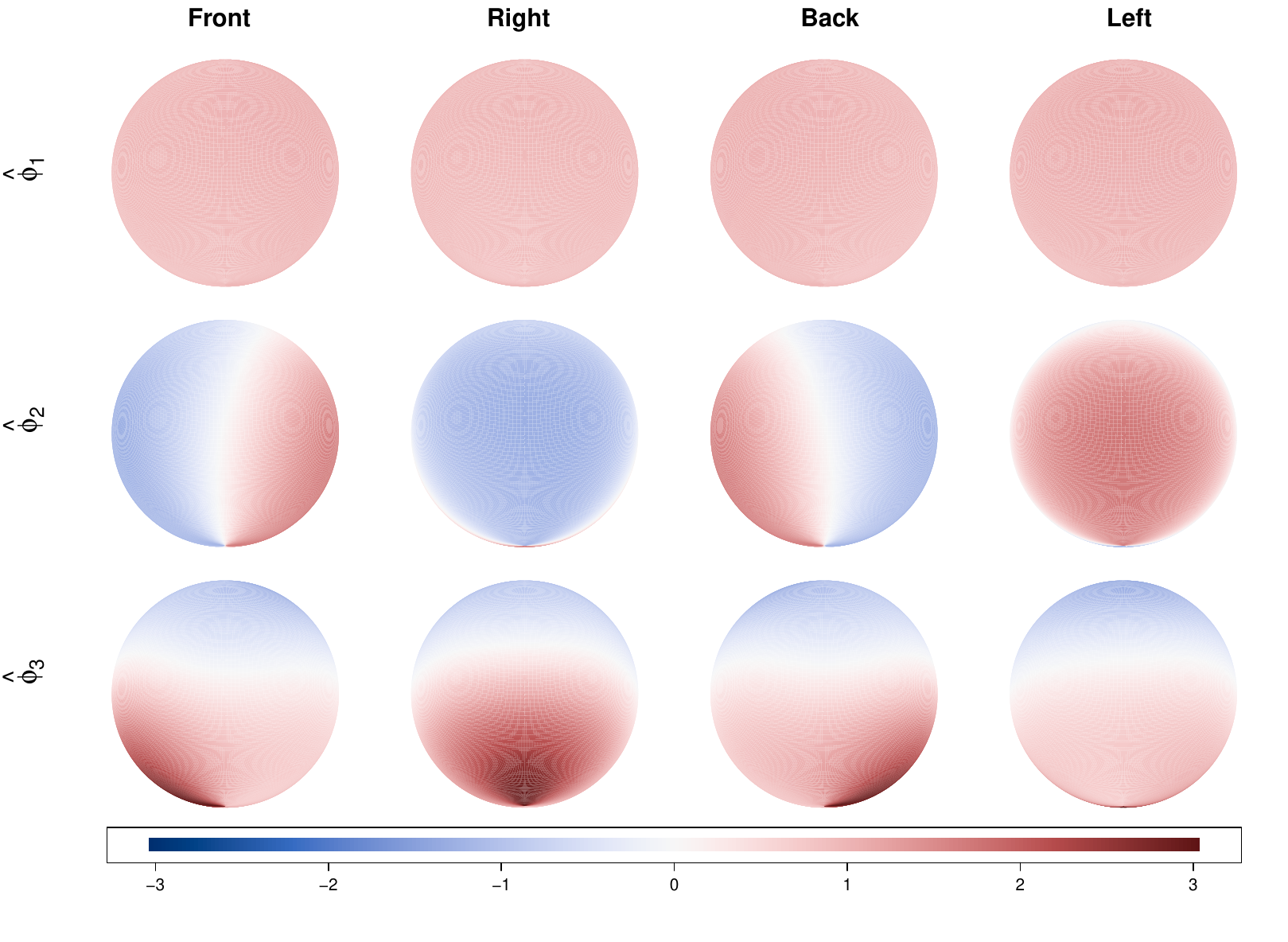}
\caption{
First three eigenfunctions estimated by the naive azimuth--elevation FPCA baseline for the SONICOM HRTF data, displayed on the unit sphere from four viewing directions. The columns correspond to front, right, back, and left views, and the rows correspond to $\hat\phi_1$, $\hat\phi_2$, and $\hat\phi_3$. A common symmetric color scale is used across the three eigenfunctions.
}
\label{fig:app_sonicom_naive_eigenfunctions}
\end{figure}


\begin{thebibliography}{99}

\bibitem[Bouzebda and Taachouche(2023)]{BouzebdaTaachouche2023} Bouzebda, S. and Taachouche, N. (2023). Rates of the strong uniform consistency for the kernel-type regression function estimators with general kernels on manifolds. {\it Mathematical Methods of Statistics}, \textbf{32}, 27-80.

\bibitem[Cai and Yuan(2011)]{CaiYuan2011} Cai, T. T. and Yuan, M. (2011). Optimal estimation of the mean function based on discretely sampled functional data: Phase transition. {\it The Annals of Statistics}, \textbf{39}, 2330-2355.

\bibitem[Chavel(2006)]{Chavel2006} Chavel, I. (2006). {\it Riemannian Geometry: A Modern Introduction} (2nd ed.). Cambridge University Press.

\bibitem[Davis and Kahan(1970)]{DavisKahan1970} Davis, C. and Kahan, W. M. (1970). The rotation of eigenvectors by a perturbation. III. {\it SIAM Journal on Numerical Analysis}, \textbf{7}, 1-46.

\bibitem[Dai and M\"{u}ller(2018)]{DaiMuller2018} Dai, X. and M\"{u}ller, H.-G. (2018). Principal component analysis for functional data on Riemannian manifolds. {\it Biometrika}, \textbf{105}(1), 177-190.

\bibitem[Do Carmo(1992)]{DoCarmo1992} Do Carmo, M. P. (1992). {\it Riemannian Geometry}. Birkh\"auser.

\bibitem[Einmahl and Mason(2005)]{EinmahlMason2005} Einmahl, U. and Mason, D. M. (2005). Uniform in bandwidth consistency of kernel-type function estimators. {\it The Annals of Statistics}, \textbf{33}, 1380-1403.

\bibitem[Engel et al.(2023)]{EngelEtAl2023}
Engel, I., Daugintis, R., Vicente, T., Hogg, A. O. T., Pauwels, J., Tournier, A. J. R. and Picinali, L. (2023). The SONICOM HRTF dataset. {\it Journal of the Audio Engineering Society}, \textbf{71}, 241--253.

\bibitem[Gin{\'e} and Nickl(2016)]{GineNickl2016} Gin{\'e}, E. and Nickl, R. (2016). \textit{Mathematical Foundations of Infinite-Dimensional Statistical Models}. Cambridge University Press.

\bibitem[Hall et al.(2006)]{HallEtAl2006} Hall, P., M\"{u}ller, H.-G. and Wang, J.-L. (2006). Properties of principal component methods for functional and longitudinal data analysis. {\it The Annals of Statistics}, \textbf{34}, 1493-1537.

\bibitem[Jeon et al.(2021)]{JeonEtAl2021} Jeon, J. M., Park, B. U. and Van Keilegom, I. (2021). Additive regression for non-Euclidean responses and predictors. {\it Annals of Statistics}, \textbf{49}, 2611-2641.

\bibitem[Li and Hsing(2010)]{LiHsing2010} Li, Y. and Hsing, T. (2010). Uniform convergence rates for nonparametric regression and principal component analysis in functional data. {\it The Annals of Statistics}, \textbf{38}, 3321-3351.

\bibitem[Lin and Yao(2019)]{LinYao2019} Lin, Z. and Yao, F. (2019). Intrinsic Riemannian functional data analysis. {\it Annals of Statistics}, \textbf{47}, 3533-3577.

\bibitem[Nolan and Pollard(1987)]{NolanPollard1987} Nolan, D. and Pollard, D. (1987). U-processes: Rates of convergence. {\it The Annals of Statistics}, \textbf{15}, 780-799.

\bibitem[Pelletier(2005)]{Pelletier2005} Pelletier, B. (2005). Kernel density estimation on Riemannian manifolds. {\it Statistics and Probability Letters}, \textbf{73}, 297-304.

\bibitem[Pelletier(2006)]{Pelletier2006} Pelletier, B. (2006). Non-parametric regression estimation on closed Riemannian manifolds. {\it Journal of Nonparametric Statistics}, \textbf{18}, 57-67.

\bibitem[Ramsay and Silverman(2005)]{RamsaySilverman2005} Ramsay, J. O. and Silverman, B. W. (2005). {\it Functional Data Analysis}. Springer.

\bibitem[Shao et al.(2022)]{ShaoEtAl2022} Shao, L., Lin, Z. and Yao, F. (2022). Intrinsic Riemannian functional data analysis for sparse longitudinal observations. {\it Annals of Statistics}, \textbf{50}, 1696-1721.

\bibitem[van der Vaart and Wellner(1996)]{vanDerVaartWellner1996} van der Vaart, A. and Wellner, J. A. (1996). {\it Weak Convergence and Empirical Processes: With Applications to Statistics}. Springer.

\bibitem[Yao et al.(2005)]{YaoEtAl2005} Yao, F., M\"{u}ller, H.-G. and Wang, J.-L. (2005). Functional data analysis for sparse longitudinal data. {\it Journal of the American Statistical Association}, \textbf{100}, 577-590.

\bibitem[Zhang and Wang(2016)]{ZhangWang2016} Zhang, X. and Wang, J.-L. (2016). From sparse to dense functional data and beyond. {\it The Annals of Statistics}, \textbf{44}, 2281-2321.

\end{thebibliography}
\end{document}